\crefname{section}{Section}{Sections}
\crefname{subsection}{\S}{\S\S}
\theoremstyle{plain}
\newtheorem{lemma}{Lemma}[section]
\newtheorem{proposition}[lemma]{Proposition}
\newtheorem{corollary}[lemma]{Corollary}
\newtheorem{theorem}[lemma]{Theorem}
\theoremstyle{nonumberplain}
\newtheorem{theoremN}{Theorem}
\theoremstyle{plain}
\newtheorem{definition}[lemma]{Definition}
\newtheorem{example}[lemma]{Example}
\newtheorem{remark}[lemma]{Remark}
\newtheorem{convention}[lemma]{Convention}
\crefname{definition}{definition}{definitions}
\crefname{ex}{example}{examples}
\crefname{remark}{remark}{remarks}
\crefname{convention}{convention}{conventions}
\crefname{lemma}{lemma}{lemmas}
\crefname{proposition}{proposition}{propositions}
\crefname{corollary}{corollary}{corollaries}
\crefname{theorem}{theorem}{theorems}
\crefname{assumption}{assumption}{Assumptions}
\crefname{equation}{}{}
\theoremstyle{nonumberplain}
\newtheorem{proof}{Proof.}
\newcommand\bC{\mathbb C}
\newcommand\bN{\mathbb N}
\newcommand\cA{\mathcal A}
\newcommand\cB{\mathcal B}
\newcommand\cC{\mathcal C}
\newcommand\cG{\mathcal G}
\newcommand\cI{\mathcal I}
\newcommand\cO{\mathcal O}
\newcommand\cP{\mathcal P}
\newcommand\bb[1]{{\mathbb #1}}
\DeclareMathOperator{\id}{id}
\newcommand\numberthis{\addtocounter{equation}{1}\tag{\theequation}}
\newcommand{\cl}[1]{\mathcal{#1}}
\title{Bigalois extensions and the graph isomorphism game}
\author{Michael Brannan\footnote{Department of Mathematics, Texas A\&M University,  \url{mbrannan@math.tamu.edu}}, Alexandru Chirvasitu\footnote{Department of Mathematics, University at Buffalo,  \url{achirvas@buffalo.edu}}, Kari Eifler\footnote{Department of Mathematics, Texas A\&M University,  \url{keifler@math.tamu.edu}}, Samuel Harris\footnote{Deparment of Pure Mathematics, University of Waterloo, \url{sj2harri@uwaterloo.ca}}, \\Vern Paulsen\footnote{Deparment of Pure Mathematics \& The Institute for Quantum Computing, University of Waterloo, \url{vpaulsen@uwaterloo.ca}}, Xiaoyu Su\footnote{Department of Mathematics, Texas A\&M University, \url{xiaoyuegongzi@math.tamu.edu}}, Mateusz Wasilewski\footnote{Department of Mathematics, Katholieke Universiteit Leuven, \url{mateusz.wasilewski@kuleuven.be}}}
\begin{document}

\date{\today}

\maketitle

\begin{abstract}
We study the graph isomorphism game that arises in quantum information theory.  We prove that the non-commutative algebraic notion of a quantum isomorphism between two graphs is same as the more physically motivated one arising from the existence of a perfect quantum strategy for graph isomorphism game.  This is achieved by showing that every algebraic quantum isomorphism between a pair of (quantum) graphs $X$ and $Y$ arises from a certain measured bigalois extension for the quantum automorphism groups $G_X$ and $G_Y$ of $X$ and $Y$.   In particular, this implies that the quantum groups $G_X$ and $G_Y$ are monoidally equivalent.  We also establish a converse to this result, which says that a compact quantum group $G$ is monoidally equivalent to the quantum automorphism group $G_X$ of a given quantum graph $X$ if and only if $G$ is the quantum automorphism group of a quantum graph that is algebraically quantum isomorphic to $X$.  Using the notion of equivalence for non-local games, we apply our results to other synchronous games, including the synBCS game and certain related graph homomorphism games.

\end{abstract}

\noindent {\em Key words: quantum automorphism group, graph, non-local game, entanglement, monoidal equivalence, bigalois extension, representation category}

\vspace{.5cm}

\noindent{MSC 2010: 20G42; 46L52; 16T20}



\section{Introduction}

Finite input-output games have received considerable attention in the quantum information theory literature as tools for investigating the structure of {\it quantum correlations}. The latter are meant to model the following setup (in one of several ways, depending on the specific chosen model \cite{tsi}):

The perennial experimenters Alice and Bob, share an entangled quantum state, each performs one of $n$ quantum experiments and returns one of $k$ outputs. The respective quantum correlation is then defined as the collection of conditional probabilities $(p(a,b|x,y))$ of obtaining a given pair of outputs $a, b$ for a given pair of inputs, $x, y$. 

To recast this in game theoretic terms one typically proceeds as follows. 
Alice (A) and Bob (B) are regarded as cooperating players trying to supply ``correct" answers to a referee (R) who communicates with A/B via sets of questions (inputs) $I_A,I_B$. Alice and Bob then reply with answers from their respective sets of outputs $O_A,O_B$. The game rules, which are known to A, B, and R, are embodied by a function $\lambda:O_A \times O_B \times I_A \times I_B \to \{0,1\}$, such that A and B win a round of the game if their respective replies $a$ and $b$ to the questions $x$ and $y$ satisfy $\lambda(a,b|x,y) = 1$, and they lose the round otherwise.  Prior to the round, A and B can cooperate to develop a winning strategy, but are not allowed to communicate once the game begins. Cooperation consists of a pre-arranged strategy, which can be deterministic or random.  We identify probabilistic strategies with the collection of conditional probability densities $(p(a,b|x,y))$ that they produce.

 A probabilistic strategy is called {\it perfect} or {\it winning}, if the probability that they give an incorrect pair of answers is 0, i.e., $\lambda(x,y,a,b)=0 \implies p(a,b|x,y)=0$. The key point is that the set of such conditional probability densities $(p(a,b|x,y))$ that can be obtained via quantum experiments in an entangled state is larger than the set of densities that can be obtained from classical shared randomness.  For this reason {\it winning} quantum strategies can often be shown to exist when classical ones do not exist.


Our starting point in this context is the {\it graph isomorphism game} introduced in \cite{amrssv}. Given two finite graphs $X$ and $Y$ this game has inputs that are the disjoint union of the vertices of $X$ and the vertices of $Y$ and outputs that are the same set.  The referee sends A, B each a vertex $x_A,x_B \in V(X) \cup V(Y)$ and receives respective answers $y_A, y_B \in V(X) \cup V(Y)$. Winning conditions require that
\begin{itemize}
\item $x_A$ and $y_A$ belong to different graphs;
\item ditto for $x_B$ and $y_B$;
\item the ``relatedness'' of the inputs is reflected by that of the outputs: $x_A=x_B$, or are distinct and connected by an edge, or distinct and disconnected if and only if the same holds for the $y$ vertices.  
\end{itemize}
It turns out that the game has a perfect deterministic strategy if and only if the two graphs are isomorphic. This observation motivates the authors of \cite{Lu17} to introduce a notion of {\it quantum isomorphism} between finite graphs, relating to the existence of less constrained, random strategies for the graph isomorphism game. Whether or not two finite graphs $X$ and $Y$ are quantum-isomorphic is governed by a $*$-algebra $\cl A(Iso(X,Y))$, a non-commutative counterpart to the function algebra of the space of isomorphisms $X\to Y$. More precisely, the algebra of continuous functions on the space of isomorphisms $X\to Y$ can be recovered as the abelianization of $\cl A(Iso(X,Y))$.

One is thus prompted to consider ``quantum spaces'' in the sense of non-commutative geometry: $*$-algebras or C$^*$-algebras, thought of as function algebras on the otherwise non-existent spaces. In the same spirit we will work with {\it quantum graphs} (finite-dimensional C$^*$-algebras equipped with some additional structure mimicking an ``adjacency matrix'' \Cref{qset-graph}) and {\it quantum groups} i.e. (objects dual to) non-commutative $*$-algebras with enough structure to resemble algebras of representative functions on compact groups (\Cref{subse.qg}).

As is the case classically, every quantum graph $X$ has a {\it quantum automorphism group} $G_X$. The recent papers \cite{Lu17, MuRuVe18, MuRuVe18b} uncover further remarkable connections between graph isomorphism games and quantum automorphism groups. Moreover, while \cite{Lu17} focuses on classical graphs, \cite{MuRuVe18, MuRuVe18b} consider a more general categorical quantum mechanical framework which leads naturally to the notion of a quantum graphs and the generalization of the graph isomorphism game to that framework.

In particular, \cite{MuRuVe18b} obtains a characterization of (finite-dimensional) quantum isomorphic quantum graphs $X$, $Y$ in terms of simple dagger Frobenius monoids in the category of finite dimensional representations of the Hopf $\ast$-algebra $\cO(G_X)$ of the corresponding quantum automorphism group $G_X$.  On the other hand, \cite{Lu17} uses ideas from quantum group theory to establish the equivalence between the existence of C$^\ast$-quantum isomorphisms for graphs and the existence of perfect strategies for the isomorphism game within the so-called {\it quantum commuting} framework.

Here, we continue in the same vein investigating connections between quantum automorphism groups of graphs and the graph isomorphism game, taking a  somewhat dual approach to the one in \cite{MuRuVe18, MuRuVe18b}:

We regard the game $\ast$-algebra $\cA(Iso(X,Y))$ as indicated above, as a non-commutative analogue of the space of isomorphisms $X \to Y$.  In particular, we say that $X$ and $Y$ are {\it algebraically quantum isomorphic}, and simply write $X \cong _{A^*} Y$, if $\cA(Iso(X,Y)) \ne 0$.  Classically, the space of isomorphisms between two graphs is a principal homogeneous bundle over the automorphism groups of both graphs.  In other words, if $Iso_c(X,Y)$ denotes the space of graph isomorphisms $X \to Y$ and $\text{Aut}(X)$ (resp.  $\text{Aut}(Y)$) denotes the automorphism group of $X$ (resp. $Y$), then the canonical left/right actions $\text{Aut}(Y)\curvearrowright Iso_c(X,Y) \curvearrowleft\text{Aut}(X)$ are free and transitive. One of our main results is the quantum analogue of this remark; it is \Cref{non-zero} below, and can be paraphrased as follows:

\begin{theoremN}
Let $X$ and $Y$ be two quantum graphs. If the quantum isomorphism space $\cl A(Iso(X,Y))$ is non-trivial then it is a quantum principal bi-bundle (bigalois extension) over the quantum automorphism groups $G_X$ and $G_Y$ of $X$ and $Y$ respectively. 
\end{theoremN}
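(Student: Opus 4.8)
The plan is to realize $\cA(Iso(X,Y))$ as the total space of a quantum principal bi-bundle by equipping it with two commuting coactions — a right coaction of $\cO(G_X)$ and a left coaction of $\cO(G_Y)$ — and showing that each is \emph{ergodic} (trivial coinvariants) and \emph{Galois} (bijective canonical map), since an algebra carrying such a pair is by definition a bigalois extension. Write $u=(u_{ij})$ for the matrix entries, relative to fixed orthonormal bases of the finite-dimensional C$^*$-algebras underlying $X$ and $Y$ (\Cref{qset-graph}), of the generating intertwiner of $\cA(Iso(X,Y))$, and let $v=(v_{ij})$, $w=(w_{ij})$ be the corresponding fundamental generators of $\cO(G_X)$, $\cO(G_Y)$ (\Cref{subse.qg}). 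Mimicking the classical fact that precomposing an isomorphism with an automorphism of $X$ and postcomposing with one of $Y$ again yields an isomorphism, I would define the coactions on generators by
\[
\rho(u_{ij})=\sum_k u_{ik}\otimes v_{kj},\qquad \lambda(u_{ij})=\sum_k w_{ik}\otimes u_{kj}.
\]
The first, routine, task is to verify that the right-hand sides again satisfy the defining relations of a quantum isomorphism $X\to Y$ (being matrix products of an intertwining magic unitary with an automorphism magic unitary), so that $\rho$ and $\lambda$ extend to $\ast$-homomorphisms; the comodule-algebra axioms and the commutation $(\lambda\otimes\id)\rho=(\id\otimes\rho)\lambda$ then follow formally from coassociativity and associativity of matrix multiplication.

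With the coactions fixed, the concrete heart of the Galois condition is the surjectivity of the canonical map
\[
\mathrm{can}\colon \cA(Iso(X,Y))\otimes\cA(Iso(X,Y))\longrightarrow \cA(Iso(X,Y))\otimes\cO(G_X),\qquad a\otimes b\longmapsto (a\otimes 1)\rho(b),
\]
which is where the rigidity of the generators is used. Exploiting the orthogonality relation $\sum_k u_{ki}^{\ast}u_{kl}=\delta_{il}1$ (valid because the generating intertwiner is a coisometry), one computes
\[
\mathrm{can}\Big(\sum_k u_{ki}^{\ast}\otimes u_{kj}\Big)=\sum_{k,l}u_{ki}^{\ast}u_{kl}\otimes v_{lj}=\sum_l\Big(\sum_k u_{ki}^{\ast}u_{kl}\Big)\otimes v_{lj}=1\otimes v_{ij},
\]
so every generator $1\otimes v_{ij}$ lies in the image of $\mathrm{can}$. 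I would then upgrade this to bijectivity of $\mathrm{can}$ — equivalently, in the analytic language, to the statement that the $G_X$-action has full quantum multiplicity — by invoking the structure theory of Hopf--Galois extensions over a cosemisimple Hopf algebra together with the ergodicity established below; the symmetric computation with $w$ handles the left coaction of $\cO(G_Y)$.

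The step I expect to be the main obstacle is ergodicity, namely that the coinvariants $\cA(Iso(X,Y))^{\mathrm{co}\,\cO(G_X)}$ reduce to the scalars $\bC$, and this is precisely where the hypothesis $\cA(Iso(X,Y))\neq 0$ must enter essentially: a nonzero comodule algebra need not be ergodic, so triviality of the coinvariants cannot be obtained formally and must be extracted from the specific intertwining and orthogonality relations, e.g.\ by decomposing $\cA(Iso(X,Y))$ into spectral subspaces via Peter--Weyl for $G_X$ and showing the trivial isotypic component is one-dimensional (the quantum avatar of transitivity of the $\mathrm{Aut}(X)$-action on the classical torsor). Once ergodicity and surjectivity of $\mathrm{can}$ are in place, bijectivity, faithful flatness, and the mutual compatibility of $\rho$ and $\lambda$ follow from the cosemisimple Galois-object machinery, yielding the bigalois extension. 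Finally, to obtain the \emph{measured} refinement I would produce the canonical invariant state by transporting the Haar states of $G_X$ and $G_Y$ through the Galois structure and matching them against the Frobenius functionals of the two quantum graphs, checking that the resulting state is bi-invariant.
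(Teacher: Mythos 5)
Your setup (the two coactions $\rho,\lambda$ defined on generators by matrix multiplication against the fundamental representations of $G_X$ and $G_Y$) coincides with the paper's $\alpha$ and $\beta$, and your surjectivity computation for the canonical map is correct and standard: the set of $h$ with $1\otimes h$ in the image of $\mathrm{can}$ is closed under products, so hitting the generators $v_{ij}$ (and, via the $F$-twisted unitarity, their adjoints) suffices. The problem is that surjectivity is the easy half, and the half you defer --- ergodicity, i.e. $\cA(Iso(X,Y))^{\mathrm{co}\,\cO(G_X)}=\bC$ --- is never actually proved. Your proposed mechanism (``decompose into isotypic components via Peter--Weyl and show the trivial component is one-dimensional'') is a restatement of ergodicity, not an argument for it; and it cannot be obtained formally, since a nonzero comodule algebra with surjective canonical map over $\bC$ need not be ergodic (e.g.\ $H\otimes H'$ with trivial coaction on the second factor). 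The logical order of your plan is also delicate: Schneider's theorem for cosemisimple $H$ upgrades surjectivity to bijectivity of the canonical map taken over the coinvariant subalgebra $B$, so you would still need $B=\bC$ as a separate input to land on a Galois extension of $\bC$; conversely, injectivity of $\kappa_l$ over $\bC$ already forces $B=\bC$ (look at $x\otimes 1-1\otimes x$ for a coinvariant $x$). So ergodicity is genuinely equivalent to the injectivity you are missing, and your proof is circular at exactly that point.

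The paper closes this gap by a different mechanism that you should compare with. It assembles the four algebras $\cO(G_X)$, $\cO(G_Y)$, $\cO(G_Y,G_X)$, $\cO(G_X,G_Y)$ into a connected cogroupoid in the sense of \cite{bch-cgr}: universality of the defining $\ast$-homomorphisms $\rho_{Y,X}$, $\rho_{X,Y}$ produces not only the coactions but also ``cocomposition'' maps $\gamma_X,\gamma_Y$ and conjugate-linear ``coinversion'' maps $S_{X,Y},S_{Y,X}$ (built from the matrices $F_X,F_Y$ implementing the $\ast$-structure). With these in hand, \cite[Proposition 2.8]{bch-cgr} gives \emph{explicit two-sided inverses} for $\kappa_l$ and $\kappa_r$, namely $\eta_l=(\iota\otimes m)(\iota\otimes S_{X,Y}\otimes\iota)(\gamma_Y\otimes\iota)$ and its right-handed analogue. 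Bijectivity is thus proved directly, no cosemisimplicity or Schneider-type descent is invoked, and ergodicity comes out as a corollary rather than going in as a hypothesis. If you want to salvage your route, the missing ingredient is precisely an antipode-like map $\cO(G_Y,G_X)\to\cO(G_X,G_Y)$; once you construct it you may as well follow the cogroupoid argument. (Your final paragraph on the invariant state is also not how the paper proceeds: the bi-invariant state is obtained separately, in \Cref{traces-exist}, by pushing forward from the universal bigalois extension $\cO(U^+_{F_Y},U^+_{F_X})$ via \Cref{sufficient-state}, not by ``transporting the Haar states through the Galois structure.'')
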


This (non-commutative) bundle-theoretic perspective on $\cA(Iso(X,Y))$ has advantages: although the construction of $\cA(Iso(X,Y))$ is purely algebraic and does not assume the existence of any C$^\ast$-representations of this object, we use the above result to show that this algebra always admits a faithful invariant state whenever it is non-zero (cf. \Cref{traces-exist}), leading to connections with the notion of  monoidal equivalence between quantum automorphism groups.  Loosely speaking, we say that two compact quantum groups are {\it monoidally equivalent} if their  categories of finite-dimensional unitary representations are equivalent as rigid C$^\ast$-tensor categories.  Our main result here is an amalgamation of \Cref{traces-exist} and \Cref{stability-monoidal}, and says:

\begin{theoremN}
Let $X$ be a quantum graph and $G_X$ its quantum automorphism group.  Then the following hold: 
\begin{enumerate}
\item If $Y$ is another quantum graph such that $X \cong_{A^*}Y$, then $\cA(Iso(X,Y))$ admits a faithful state and $G_X$ is monoidally equivalent to the quantum automorphism group $G_Y$.  If both $X$ and $Y$ are moreover classical graphs, then  $\cA(Iso(X,Y))$ admits a faithful tracial state.
\item Conversely, for any compact quantum group $G$ monoidally equivalent to $G_X$, one can construct from this monoidal equivalence a quantum graph $Y$, an isomorphism of quantum groups $G \cong G_Y$, and an algebraic quantum isomorphism $X \cong_{A^*}Y$.    
\end{enumerate}  
\end{theoremN}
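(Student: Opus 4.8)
The plan is to derive both halves of the statement from the bi-Galois result quoted above (namely \Cref{non-zero}) combined with the known dictionary, due to Bichon--De Rijdt--Vaes, relating \emph{measured} bigalois extensions of compact quantum groups to monoidal equivalences of their representation categories. In broad strokes, \Cref{non-zero} supplies the algebraic torsor; the analytic input turns it into a measured object; and the categorical correspondence converts that into the desired monoidal equivalence.

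For part (1), recall that $X \cong_{A^*} Y$ means precisely that $\cA := \cA(Iso(X,Y)) \neq 0$, so \Cref{non-zero} makes $\cA$ into an $(\cO(G_Y),\cO(G_X))$-bigalois extension, and in particular each of its two coactions is ergodic. I would first exploit ergodicity to produce a canonical invariant functional on $\cA$; the substantive step is to show that this functional is a faithful state. Positivity together with faithfulness are exactly what promote the purely algebraic bigalois extension to a measured one, and here the Galois condition --- bijectivity of the canonical maps --- is indispensable, since it is what prevents the associated GNS construction from degenerating and lets one complete $\cA$ to a $C^*$-algebra carrying commuting ergodic actions of $G_X$ and $G_Y$. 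Once $\cA$ is realized as a measured bigalois extension, the Bichon--De Rijdt--Vaes correspondence yields an equivalence between $\text{Rep}(G_X)$ and $\text{Rep}(G_Y)$ as rigid $C^*$-tensor categories, i.e. a monoidal equivalence $G_X \sim_{\mathrm{mon}} G_Y$. When $X$ and $Y$ are classical graphs, both $G_X$ and $G_Y$ are quantum subgroups of $S_N^+$ and hence of Kac type; the modular data is then trivial, so the invariant state is automatically a trace, which gives the refined tracial assertion.

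For part (2), I would start from a monoidal equivalence $G \sim_{\mathrm{mon}} G_X$, witnessed by a unitary $C^*$-tensor functor $F : \text{Rep}(G_X) \to \text{Rep}(G)$, and use two ingredients. First, the converse half of Bichon--De Rijdt--Vaes produces from $F$ a measured bigalois extension $\cB$ for $G$ and $G_X$. Second --- and this is where the graph structure enters --- I would transport the data defining the quantum graph $X$ across $F$: following the categorical description, the quantum graph $X$ and its automorphism group are encoded by a distinguished object carrying a $C^*$-Frobenius structure together with an adjacency morphism inside $\text{Rep}(G_X)$. Applying the monoidal functor $F$ carries this structure to an isomorphic one in $\text{Rep}(G)$, which defines a quantum graph $Y$; since $F$ intertwines the tensor structures and the relevant intertwiner spaces, the universal property of the quantum automorphism group yields $G \cong G_Y$. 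To obtain $X \cong_{A^*} Y$ I would then identify the bigalois extension $\cB$, suitably decorated by the transported graph morphisms, with $\cA(Iso(X,Y))$ --- or at least exhibit a nonzero $*$-homomorphism from $\cA(Iso(X,Y))$ into a completion of $\cB$ --- so that $\cB \neq 0$ forces $\cA(Iso(X,Y)) \neq 0$.

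I expect the principal obstacle throughout to be the faithfulness of the invariant state in part (1). Everything downstream --- the passage to a measured bigalois object and thence to monoidal equivalence --- depends on having a faithful state available to run the GNS and completion machinery, yet $\cA(Iso(X,Y))$ is defined purely algebraically, with no presupposed $C^*$-representation; so positivity and non-degeneracy must be extracted from the bi-Galois structure itself rather than assumed, and this is the technical heart of the argument.
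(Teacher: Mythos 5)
Your part (2) follows essentially the same route as the paper's \Cref{stability-monoidal}: transport $m_X,\eta_X,\psi_X,A_X$ through the unitary tensor functor to build $Y$, identify $G\cong G_Y$ by comparing intertwiner spaces, and then use the unitary implementing the monoidal equivalence inside the associated bigalois extension $Z$ to produce a non-zero $\ast$-homomorphism $\cO(G_Y,G_X)\to Z$. That half is fine in outline.

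Part (1), however, has a genuine gap at exactly the point you yourself flag as the technical heart. \Cref{non-zero} does make $\cA(Iso(X,Y))$ an algebraic $\cO(G_Y)$-$\cO(G_X)$-bigalois extension, and ergodicity does give you a canonical \emph{invariant linear functional} (projection onto the trivial spectral subspace). But the positivity of that functional is not a consequence of the bi-Galois structure alone: the paper's \Cref{invariant-states}(2) lists ``admits a state'' as an equivalent condition to ``admits a non-zero C$^\ast$-representation,'' precisely because a purely algebraic bigalois extension between CQG algebras need not admit any state (there are non-zero bigalois extensions between non-monoidally-equivalent compact quantum groups, which therefore carry no invariant state). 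Your justification --- that bijectivity of the Galois maps ``prevents the GNS construction from degenerating'' --- is circular, since GNS presupposes the positivity you are trying to establish. The paper's actual mechanism, which your sketch omits, is Bichon's criterion (\Cref{sufficient-state}): the defining relations of $\cO(G_Y,G_X)$ make both $p$ and $F_Y\bar p F_X^{-1}$ unitary, hence there is a surjection $\sigma:\cO(U^+_{F_Y},U^+_{F_X})\twoheadrightarrow\cO(G_Y,G_X)$ compatible with the coactions of $\cO(U^+_{F_Y})\twoheadrightarrow\cO(G_Y)$; Bichon's Proposition 6.2.6 then supplies a left-invariant state, and \Cref{invariant-states} upgrades it to the unique faithful bi-invariant state, after which \Cref{thm-mon-bg} gives the monoidal equivalence. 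Without some such input your argument does not close. (Your Kac-type reduction for the tracial refinement in the classical case is correct and matches the paper.)
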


Recasting all of the above in the context of the (classical) graph isomorphism game, our results show that the condition $\cA(Iso(X,Y)) \ne 0$ is sufficient to ensure the existence of perfect quantum strategies for this game (\Cref{th:qiso-monoidal} and \Cref{th.cls-grph}):

\begin{theoremN}
Two classical graphs $X$ and $Y$ are algebraically quantum isomorphic if and only if the graph isomorphism game has a perfect quantum-commuting $(qc)$-strategy. 
\end{theoremN}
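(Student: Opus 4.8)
The plan is to combine the paper's construction of an invariant state with the existing quantum-commuting characterization of the isomorphism game. Recall from \cite{Lu17} that the classical graph isomorphism game admits a perfect quantum-commuting strategy precisely when $X$ and $Y$ are \emph{C$^*$-quantum isomorphic}, i.e.\ when the universal C$^*$-completion of the game $\ast$-algebra $\cA(Iso(X,Y))$ is non-zero (equivalently, when $\cA(Iso(X,Y))$ admits a non-zero $\ast$-representation on a Hilbert space by bounded operators). The statement therefore reduces to showing that the a priori weaker \emph{algebraic} condition $\cA(Iso(X,Y))\neq 0$ is in fact equivalent to this \emph{analytic} C$^*$-condition. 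One implication is immediate: a non-zero C$^*$-completion forces the underlying $\ast$-algebra to be non-zero, so C$^*$-quantum isomorphism trivially implies $X\cong_{A^*}Y$, and hence a perfect $qc$-strategy yields algebraic quantum isomorphism.

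For the converse, suppose $X\cong_{A^*}Y$. Since $X$ and $Y$ are classical graphs, part (1) of the preceding theorem (the amalgamation of \Cref{traces-exist} and \Cref{stability-monoidal}) supplies a \emph{faithful tracial state} $\tau$ on $\cA(Iso(X,Y))$. The generators of $\cA(Iso(X,Y))$ are projections, hence act as contractions in every $\ast$-representation; so the GNS representation $\pi_\tau$ is by bounded operators, while faithfulness of $\tau$ makes $\pi_\tau$ injective. Consequently the universal C$^*$-seminorm is a genuine norm and the C$^*$-completion is non-zero, i.e.\ $X$ and $Y$ are C$^*$-quantum isomorphic. Invoking \cite{Lu17} once more converts this into a perfect $qc$-strategy, and chaining the two implications yields the stated equivalence.

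The genuinely substantive step is entirely absorbed into \Cref{traces-exist}: passing from the purely algebraic non-triviality of $\cA(Iso(X,Y))$ to the existence of a Hilbert-space model is precisely the obstacle, and it is overcome by transporting the Haar state across the bigalois extension of \Cref{non-zero} to build the faithful invariant state. The one point meriting care at this stage is that, for \emph{classical} graphs, the quantum automorphism groups $G_X$ and $G_Y$ are of Kac type, so their Haar states are tracial; this is what guarantees that the transported invariant state is an actual \emph{trace}, matching the tracial data consumed by the synchronous quantum-commuting framework and ensuring that the induced correlation $p(a,b\mid x,y)=\tau(e_{x,a}e_{y,b})$ is a bona fide perfect $qc$-strategy.
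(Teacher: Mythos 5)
Your argument is correct, and it rests on exactly the same substantive ingredient as the paper's proof: \Cref{traces-exist}, i.e.\ the construction of a faithful invariant state on $\cA(Iso(X,Y))$ from the bigalois-extension structure of \Cref{non-zero}, together with the observation that for classical graphs $G_X$ and $G_Y$ are of Kac type so that this state is a trace. The one place you diverge is in how you convert that state into a perfect $qc$-strategy: you pass through $C^*$-quantum isomorphism and then invoke the implication $C^*\Rightarrow qc$ from \cite{Lu17}, whereas the paper goes directly from the faithful tracial state to a perfect $qc$-strategy via \Cref{thm:intro1}(4) (the trace characterization of $qc$-strategies from \cite{KPS18}), which keeps the argument independent of \cite{Lu17} --- a point of some importance, since the theorem is meant to subsume and strengthen the result of \cite{Lu17} rather than rely on it. You in fact record the direct route yourself in your closing paragraph, where you note that $p(a,b\mid x,y)=\tau(e_{x,a}e_{y,b})$ is already a perfect $qc$-correlation; promoting that remark to the main line of the proof removes the detour and makes your argument coincide with the paper's. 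Your explicit GNS step (projections act as contractions in any $\ast$-representation, and faithfulness of $\tau$ gives injectivity, hence a non-zero C$^*$-completion) is a correct and somewhat more detailed justification of the $A^*\Rightarrow C^*$ implication than the paper's one-line deduction in \Cref{th:qiso-monoidal}.
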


A weaker version of the above theorem (assuming the existence of a non-zero C$^\ast$-algebra representation of $\cA(Iso(X,Y))$) was recently proved in \cite{Lu17}.

Using a notion of $\ast$-equivalence for non-local games, we also use the above result to deduce the existence of perfect $qc$-strategies from purely algebraic data for {\it synchronous binary constraint system (syncBCS)} games and certain related {\it graph homomorphism games} (\Cref{vs-app1} and \Cref{vs-app2}).  We find all these results very striking because for generic synchronous games (e.g. the graph homomorphism game) the $\ast$-algebra $\cA(\mathcal G)$ governing the game may be non-zero even if this algebra has no C$^\ast$-representations (and hence no perfect quantum strategies) \cite{HMPS17}.

We note also that the above results could be recast in a much broader context -- rather than quantum graphs, we could consider arbitrary {\it finite quantum structures}: quantum sets (i.e. finite-dimensional $C^*$-algebras with a fixed state) equipped with arbitrary tensors. Input-output isomorphism games could then be constructed as in the case of graphs, and the discussion replicated in that general framework. Our focus on (quantum) graphs is motivated by the contingent fact that the latter have received considerable interest in the literature.

After recalling some preliminary material in \Cref{se.prel} and generalities on quantum sets and Galois extensions in \Cref{se.qs} we prove our main results in \Cref{se.bigal}. Finally, \Cref{se.appl} contains further applications to finite input-output games.

\section{Preliminaries}\label{se.prel}
\subsection{Some notation}  If $n$ is a natural number, we sometimes write $[n]$ for the ordered set $\{1, 2, \ldots, n\}$.  All vector spaces considered here are over the complex field. We use the standard leg numbering notation for linear operators on tensor products of vector spaces.  For example, if $X,Y,Z$ are vector spaces and $T:X\otimes Y \to X \otimes Y$ is a linear map, then $T_{13}:X \otimes Z \otimes Y \to X \otimes Z \otimes Y$ is the linear map which acts as $T$ on the first and third leg of the triple tensor product, and as the identity on the second leg.  We also typically denote the identity map on a vector space by $\iota$.


\subsection{Games and strategies}

We lay out some definitions and a few basic properties of games and strategies. We will primarily be concerned with the graph isomorphism game, the graph homomorphism game and two versions of a game based on solving systems of linear equations over the binary field.

By a {\it two-person finite input-output game} we mean a tuple $\cl G=(I_A, I_B, O_A, O_B, \lambda)$ where $I_A, I_B, O_A, O_B$ are finite sets and
\[ \lambda: I_A \times I_B \times O_A \times O_B \to \{ 0,1 \} \] is a function that represents the rules of the game, sometimes called the predicate.  The sets $I_A$ and $I_B$ represent the inputs that the players Alice and Bob can receive, and the sets $O_A$ and $O_B$, represent the outputs that Alice and Bob can produce, respectively.  A referee selects a pair $(v,w) \in I_A \times I_B$, gives Alice $v$ and Bob $w$, and they then produce outputs (answers), $a \in O_A$ and $b \in O_B$, respectively.  They win the game if $\lambda(v,w,a,b) =1$ and loose otherwise.  Alice and Bob are allowed to know the sets and the function $\lambda$ and cooperate before the game to produce a strategy for providing outputs, but while producing outputs, Alice and Bob only know their own inputs and are not allowed to know the other person's input. Each time that they are given an input and produce an output is referred to as a {\it round} of the game.

We call such a game {\it synchronous} provided that: (i) Alice and Bob have the same input sets and the same output sets, which we denote by $I$ and $O$, respectively, and (ii) $\lambda$ satisfies:
\[ \forall v \in I, \,\, \lambda(v,v,a,b) = \begin{cases} 0 & a \ne b\\ 1 & a=b \end{cases},\] that is, whenever Alice and Bob receive the same inputs then they must produce the same outputs. To simplify notation we write a synchronous game as $\cl G= (I,O, \lambda)$.

A {\it deterministic strategy} for a game is a pair of functions, $h: I_A \to O_A$ and $k:I_B \to O_B$ such that if Alice and Bob receive inputs $(v,w)$ then they produce outputs $(h(v), k(w))$. A deterministic strategy  wins every round of the game if and only if
\[ \forall  (v,w) \in I_A \times I_B,  \qquad  \lambda(v,w, h(v), k(w)) =1.\] Such a strategy is called a {\it perfect deterministic strategy}. It is not hard to see that for a synchronous game, any perfect deterministic strategy must satisfy, $h=k$.  

On the other hand, a strategy for a game is called {\it random} if it can happen that for different rounds of the game, when Alice and Bob receive the input pair $(v,w)$ they may produce different output pairs.
A {\it random strategy} thus yields a conditional probability density $p(a,b|v,w)$, which represents the probability that, given inputs $(v,w) \in I_A \times I_B$, Alice and Bob produce outputs $(a,b) \in O_A \times O_B$.  Thus, $p(a,b|v,w) \ge 0$ and for each $(v,w),$
\[ \sum_{a \in O_A, b \in O_B} p(a,b|v,w) =1.\] 

In this paper we identify random strategies with their conditional probability densities, so that a random strategy will simply be a conditional probability density $p(a,b|v,w)$.

A random strategy is called {\it perfect} if
\[ \lambda(v,w,a,b)=0 \implies p(a,b|v,w) =0, \, \forall (v,w,a,b) \in I_A \times I_B \times O_A \times O_B.\]
Thus, for each round of the game, a perfect strategy gives a winning output with probability 1.

Given a particular set of conditional probability densities, one can ask if the game not only has a perfect random strategy, but has one that belongs to a particular set of densities.
The different kinds of probability densities that are studied in this context generally fall into two types: There are the {\it local (loc)} densities, also called the {\it classical} densities, which arise from ordinary random variables defined on probability spaces, and then there are the {\it quantum} densities that arise from the random outcomes of, especially, entangled quantum experiments.  However, there are several different mathematical models for describing the densities obtained from quantum experiments. These models lead to sets of conditional probability densities know variously as the {\it quantum (q), quantum spatial (qs)} (or sometimes {\it quantum tensor}), {\it quantum approximate (qa)}, and {\it quantum commuting (qc)} models.

Rather than go into a long explanation of the definitions of each of these sets, which is done many other places, we refer the reader to \cite{KPS18}, for their definitions and merely summarize some of their basic relations below. Given $n$ inputs and $k$ outputs, we denote the set of conditional probability densities $p(a,b|v,w)$ that belong to each of these sets by  $C_t(n,k)$, where $t$ can be $loc, q, qs, qa$ or $qc$.  The following containments are known:
\[ C_{loc}(n,k) \subseteq C_q(n,k) \subseteq C_{qs}(n,k) \subseteq C_{qa}(n,k) \subseteq C_{qc}(n,k).\]
Moreover, for $n,k \ge 2$, it is known that $C_{loc}(n,k) \ne C_{q}(n,k)$. While for $n \ge 5, k \ge 2$, we have $C_{qs}(n,k) \ne C_{qa}(n,k)$ by \cite{DPP}, and for
$n \ge 5, k \ge 3$, we have $C_q(n,k) \ne C_{qs}(n,k)$ \cite{CS}. The most famous question is whether or not $C_{qa}(n,k) = C_{qc}(n,k), \, \forall n,k \ge 2$, since this is known to be equivalent to {\it Connes' embedding conjecture}, first posed in \cite{con-cj}; see \cite{Oz13}.

We shall say that a game has a {\it perfect t-strategy} provided that it has a perfect random strategy that belongs to one of these sets, where $t$ can be either $loc, q, qs, qa$ or $qc$. Moreover, we work with even broader classes of strategies we term C$^*$ and $A^*$ (the latter being the broadest, i.e. weakest class; see \Cref{def.strtg}).

\subsection{The $\ast$-algebra of a synchronous game}
\label{sec:alg-sync-game}
In \cite{OP} a $\ast$-algebra was affiliated with the graph homomorphism game, $Hom(X,Y)$, whose representation theory determined whether or not a perfect t-strategy existed (see Section \ref{graphgame} for definitions). Later in \cite{P-lecnotes} and \cite{HMPS17, KPS18} these ideas were extended to any synchronous game.  We begin by recalling the $\ast$-algebra of a synchronous game and summarizing these results.  This $\ast$-algebra is defined by generators and relations arising from the rule function of the game.


Let $\cl G= ( I,O, \lambda)$ be a synchronous game and assume that the cardinality of $I$ is $|I|=n$ while the cardinality of $O$ is $|O|=m$. We will often identify $I$ with $\{ 0,..., n-1 \}$ and $O$ with $\{ 0,..., m-1 \}$. We let $\mathbb{Z}_m^{*n}$ denote the free product of $n$ copies of the cyclic group of order $m$ and let $\bb C[\mathbb{Z}_m^{*n}]$ denote the complex $\ast$-algebra of the group.  We regard the group algebra as a $\ast$-algebra, where for each group element $g$ we have $g^* = g^{-1}$.

For each $v \in I$ we have a unitary generator $u_v \in \bb C[\mathbb{Z}_m^{*n}]$ such that $u_v^m = 1$. If we set $\omega = e^{2 \pi i/m}$ then the eigenvalues of each $u_v$ is the set $\{ \omega^a: 0 \le a \le m-1 \}$.  The ``orthogonal projection" onto the eigenspace corresponding to $\omega^a$ is given by
\begin{equation*}
\label{eq:efromu}
e_{v,a} = \frac{1}{m} \sum_{k=0}^{m-1} \big( \omega^{-a} u_v \big)^k,
\end{equation*}
and these satisfy
\[ 1 = \sum_{a=0}^{m-1} e_{v,a} \text{ and } u_v = \sum_{a=0}^{m-1} \omega^a e_{v,a}.\]
The set $\{ e_{v,a}: v \in I, 0 \le a \le m-1 \}$ is another set of generators for $\bb C[\mathbb{Z}_m^{*n}]$.

We let $\cl I(\cl G)$  \index{$\cl I(\cl G)$}
denote the 2-sided $\ast$-ideal in $\bb C[\mathbb{Z}_m^{*n}]$ generated by the set
\[ \{ e_{v,a}e_{w,b} \, | \ \lambda(v,w,a,b)=0 \} \]
and refer to it as {\it the ideal of the game $\cl G$}. We define the {\it  $\ast$-algebra of $\cl G$} to be the quotient
\[ \cl A(\cl G) = \bb C[\mathbb{Z}_m^{*n}]/\cl I(\cl G) .\] Note that since  $\lambda(v,v,a,b) =0, \forall v, a \ne b$, in the quotient we will have that $e_{v,a} e_{v,b} =0$.

It is not hard to see that if we, alternatively, started with the free $\ast$-algebra generated by $\{ e_{v,a}: v \in I, \, 0 \le a \le m-1 \}$ and formed the quotient by the two-sided ideal generated by:
\begin{itemize}
\item $e_{v,a} - e_{v,a}^*, \, \forall v,a$,
\item $e_{v,a} - e_{v,a}^2, \, \forall v,a$,
\item $1- \sum_a e_{v,a}, \, \forall v$
\item $e_{v,a}e_{w,b}, \, \forall v,w,a,b$ such that $\lambda(v,w,a,b) =0$,
\end{itemize}
then we obtain the same $\ast$-algebra. We are not asserting that this algebra is non-zero. In fact, it can be the case that the identity belongs to the ideal, in which case the algebra is zero.

The following is a summary of the results obtained in \cite{HMPS17} and \cite{KPS18} and illustrates the importance of this algebra.

\begin{theorem}[\cite{HMPS17, KPS18}]
\label{thm:intro1}
Let $\cl G=(I,O, \lambda)$ be a synchronous game.
\begin{enumerate}
\item $\cl G$ has a perfect deterministic strategy if and only if $\cl G$ has a perfect loc-strategy if and only if there exists a unital $\ast$-homomorphism from $\cl A(\cl G)$ to $\bb C$.
\item $\cl G$ has a perfect q-strategy if and only if $\cl G$ has a perfect qs-strategy if and only if there exists a unital $\ast$-homomorphism from $\cl A(\cl G)$ to $B(H)$ for some non-zero finite dimensional Hilbert space.
\item $\cl G$ has a perfect qa-strategy if and only if there exists a unital $\ast$-homomorphism of $\cl A(\cl G)$ into an ultrapower of the hyperfinite $II_1$-factor,
\item $\cl G$ has a perfect qc-strategy if and only if there exists a unital C$^\ast$-algebra $\cl C$ with a faithful trace and a unital $\ast$-homomorphism $\pi: \cl A(\cl G) \to \cl C$.
\end{enumerate}
\end{theorem}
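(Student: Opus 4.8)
The plan is to reduce all four items to a single structural principle: for a synchronous game, a perfect strategy in a given model is the same data as a unital $*$-homomorphism from $\cl A(\cl G)$ into a $*$-algebra carrying a trace of the appropriate type, implemented by the formula $p(a,b|v,w)=\tau(e_{v,a}e_{w,b})$. The starting observation is that, by the generators-and-relations description in \Cref{sec:alg-sync-game}, $\cl A(\cl G)$ is the universal unital $*$-algebra generated by families of projections $\{e_{v,a}\}$ that form a PVM for each $v$ (that is, $e_{v,a}=e_{v,a}^*=e_{v,a}^2$ and $\sum_a e_{v,a}=1$) and satisfy $e_{v,a}e_{w,b}=0$ whenever $\lambda(v,w,a,b)=0$. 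Hence a unital $*$-homomorphism $\cl A(\cl G)\to\cl B$ is exactly a choice of such a projection family inside $\cl B$, and the content of the theorem is to match these families with perfect strategies of the correct kind.

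For the forward direction (strategy $\Rightarrow$ homomorphism), given a perfect $t$-strategy $p(a,b|v,w)$ I would first exploit synchronicity $p(a,b|v,v)=\delta_{ab}$ to force the measurement structure into tracial form. In the $q$ and $qc$ cases one writes $p(a,b|v,w)=\langle\psi|\,E_{v,a}\otimes F_{w,b}\,|\psi\rangle$ (respectively with commuting $E,F$ on a single space); synchronicity together with the PVM property forces, after restricting to the support of $\psi$, that Bob's system is the conjugate of Alice's, that $\psi$ is maximally entangled there, and that $\tau(x):=\langle\psi|\,x\otimes 1\,|\psi\rangle$ is a faithful trace on the algebra $\cl C$ generated by $\{E_{v,a}\}$ with $p(a,b|v,w)=\tau(E_{v,a}E_{w,b})$. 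Perfectness then recovers the game relations: for projections $P,Q$ and a trace one has the identity $\tau(PQ)=\tau\big((PQ)^*(PQ)\big)=\|PQ\|_{2,\tau}^2$, so $\lambda(v,w,a,b)=0$ gives $\tau(E_{v,a}E_{w,b})=0$ and hence $E_{v,a}E_{w,b}=0$ by faithfulness. Thus $e_{v,a}\mapsto E_{v,a}$ extends to the required homomorphism, landing in $\bb C$ (item 1), a finite-dimensional $B(H)$ (item 2), or a C$^*$-algebra with faithful trace (item 4) according to whether the strategy is deterministic, finite-dimensional quantum, or quantum-commuting.

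For the backward direction (homomorphism $\Rightarrow$ strategy), given a unital $*$-homomorphism $\pi:\cl A(\cl G)\to\cl C$ into an algebra with a trace $\tau$ I would run the GNS construction on $L^2(\cl C,\tau)$: Alice acts by left multiplication by $\pi(e_{v,a})$, Bob by right multiplication (equivalently by $J\pi(e_{v,a})J$ in the commutant), and the shared state is the cyclic trace vector. These actions commute, giving a genuine $qc$-strategy, or a $q$-strategy when $\cl C$ is finite-dimensional, with $p(a,b|v,w)=\tau(\pi(e_{v,a})\pi(e_{w,b}))$; this is synchronous because $\tau$ is a trace and the $e_{v,a}$ are orthogonal PVMs, and perfect because $\lambda(v,w,a,b)=0$ forces $\pi(e_{v,a})\pi(e_{w,b})=\pi(e_{v,a}e_{w,b})=0$. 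The intra-item equivalences then follow: perfect $loc$ densities are convex combinations of perfect deterministic ones, so one exists iff a deterministic one does, and the structural lemma always yields a finite-dimensional representation in the $q$ case, collapsing $q$ and $qs$. For item 3 I would use that $C_{qa}(n,k)$ is the closure of $C_q(n,k)$ and that the perfectness constraints are closed, so perfect $qa$-strategies are precisely limits of perfect finite-dimensional ones; under the correspondence these correspond to traces on $\cl A(\cl G)$ approximable by finite-dimensional traces, which by the standard characterization of amenable traces is the same as a trace factoring through an embedding into the ultrapower $R^\omega$ of the hyperfinite $\mathrm{II}_1$ factor.

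The main obstacle I expect is the structural lemma in the forward direction, namely extracting a genuine faithful trace together with the conjugate/maximally-entangled structure from a mere synchronous correlation; this is the analytic heart of the Paulsen--Todorov-type synchronization results underlying \cite{HMPS17,KPS18}, and once a faithful trace is in hand the identity $\tau(PQ)=\|PQ\|_{2,\tau}^2$ makes the passage between perfectness and the defining relations of $\cl A(\cl G)$ essentially automatic. A secondary technical point is the $qa$ identification of amenable traces with $R^\omega$-embeddings, which rests on Connes/Kirchberg-style approximation results rather than on anything specific to the game.
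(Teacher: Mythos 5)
The paper does not prove this theorem: it is quoted as a summary of results from \cite{HMPS17} and \cite{KPS18}, so there is no internal proof to compare against. Your sketch reconstructs the standard argument from those sources --- the universal-PVM presentation of $\cl A(\cl G)$, the ``synchronicity forces a trace'' structure theorem in the forward direction, the identity $\tau(PQ)=\|PQ\|_{2,\tau}^2$ for projections to convert vanishing probabilities into vanishing products under a faithful trace, and GNS with left/right multiplication on $L^2(\cl C,\tau)$ in the backward direction. That is the right architecture and matches the cited proofs.

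One step is wrong as literally written: in item (3) you assert that perfect $qa$-strategies are precisely limits of perfect finite-dimensional ones ``because the perfectness constraints are closed.'' Closedness of the perfectness constraints only says the set of perfect correlations is closed; it does not let you choose the approximating sequence from $C_q$ to consist of perfect, or even synchronous, correlations. The actual route, which is a nontrivial theorem of \cite{KPS18}, is that the synchronous part of $C_{qa}$ equals the closure of the synchronous part of $C_q$; one then forms the ultraproduct trace $\lim_\omega\tau_n$ on $\prod_\omega M_{d_n}\subset\cl R^{\omega}$, where faithfulness of the trace together with $\tau(E_{v,a}E_{w,b})=\lim_n\tau_n(E^n_{v,a}E^n_{w,b})=0$ recovers the defining relations of $\cl A(\cl G)$ even though the finite-dimensional approximants were not perfect. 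Your subsequent appeal to amenable traces is exactly this repair, so the intermediate claim should simply be deleted. A smaller instance of the same issue sits in item (2): the collapse of $qs$ to $q$ for synchronous correlations is itself a theorem (the support projection of the reduced density matrix has finite rank and cuts the representation down to finite dimensions), not a formal consequence of the structural lemma.
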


This theorem motivates the following definitions.
\begin{definition}\label{def.strtg}
  Let $\cl G$ be a synchronous game.  We say that $\cl G$ has a {\it perfect A$^\ast$-strategy} provided $\cl A(\cl G)$ is non-zero, and we say that $\cl G$ has a {\it perfect C$^\ast$-strategy} provided that there is a unital $\ast$-homomorphism from $\cl A(\cl G)$ into $B(H)$ for some non-zero Hilbert space $H$.
\end{definition}


\subsection{Graphs and related games} \label{graphgame}

A {\it graph} $X$ is specified by a vertex set $V(G)$ and an edge set $E(X) \subseteq V(X) \times V(X)$, satisfying $(v,v) \notin E(X)$ and $(v,w) \in E(X) \implies (w,v) \in E(X)$.
Given two graphs $X$ and $Y$, a {\it graph homomorphism from X to Y} is a function $f:V(X) \to V(Y)$ with the property that $(v,w) \in E(X) \implies (f(v), f(w)) \in E(Y)$. We write $X \to Y$ to indicate that there exists a graph homomorphisms from $X$ to $Y$.  Graph homomorphisms encapsulate many familiar graph theoretic parameters.  If we let $K_c$ denote the {\it complete graph} on $c$ vertices, i.e., the graph where every pair of vertices is connected by an edge, then
\begin{itemize}
\item the {\it chromatic number} of $X$ is
$\chi(X) = \min \{ c: \exists \, X \to K_c \}$,
\item the {\it clique number} of $X$ is
$\omega(X) = \max \{ c: \exists \, K_c \to X \}$,
\item the {\it independence number} of $X$ is, $\alpha(X)= \max \{ c: \exists \, K_c \to \overline{X} \},$ 
\end{itemize}
where $\overline{X}$ denotes the {\it graph complement} of $X$, i.e., the graph whose edge set is the complement of  $X$'s.
 
The {\it graph homomorphism game} from $X$ to $Y$, which we shall denote by $Hom(X,Y)$, is a synchronous game with inputs $I_A=I_B = V(X)$ and outputs $O_A=O_B = V(Y).$  Alice and Bob win a round provided that whenever they receive inputs that are an edge in $X$, then their outputs are an edge in $Y$ and that whenever Alice and Bob receive the same vertex in $X$ they produce the same vertex in $Y$. This is also a synchronous game.

Note that a perfect deterministic strategy for the graph homomorphism game from $X$ to $Y$ is a function $h: V(X) \to V(Y)$ that is a graph homomorphism. In particular, a perfect deterministic strategy exists if and only if $\exists X \to Y$. Similarly, we say that there is a {\it t-homomorphism} from $X$ to $Y$ and write $X \stackrel{t}{\to} Y$ if and only if there exists a perfect t-strategy for the graph homomorphism game from $X$ to $Y$ for $t=q$, $qs$, etc.


\subsubsection{The graph isomorphism game}

Two graphs $X$ and $Y$ are {\it isomorphic} if and only if there exists a one-to-one onto function $f:V(X) \to V(Y)$ such that $(v,w)$ is an edge in $X$ if and only if $(f(v), f(w))$ is an edge in $Y$. We write $X \simeq Y$ to indicate that $X$ and $Y$ are isomorphic. If we let $A_X$ denote the adjacency matrix of $X$ and analogously for $A_Y$, then it is well-known and easy to check that $X \simeq Y$ if and only if there is a permutation matrix $P$ such that $A_X P = P A_Y$.

The {\it graph isomorphism game, Iso(X,Y)} between $X$ and $Y$ is a game with the property that two graphs are isomorphic if and only if there exists a perfect deterministic strategy for $Iso(X,Y)$. It was introduced by Atserias et al. \cite{amrssv}.

The easiest way to describe the rules for this game is in terms of the {\it relation} between a pair of vertices.  Formally, the relation on a graph is a function $rel: V(X) \times V(X) \to \{ 0, 1, -1 \}$ with
\begin{itemize}
\item $rel(v,w) =0 \iff v=w$,
\item $rel(v,w) =-1 \iff (v,w) \in E(X)$,
\item $rel(v,w) = +1 \iff v \ne w \text{ and } (v,w) \notin E(X)$.
\end{itemize}

We remark that the matrix $S_X := (rel(v,w))_{v,w \in V(X)}$ is known as the {\it Seidel adjacency matrix} of the graph.

The rules for this game can be stated loosely as requiring that to win, outputs must come from different graphs than inputs, outputs must have the same relations as inputs, and whenever one player's output is the same as the other player's input, then the same must hold for the other player. This final rule makes a deterministic strategy be a function and its inverse, instead of just a pair of functions.
The input set and output set for this game is the disjoint union of $V(X)$ with $V(Y)$ and 
\[\lambda: (V(X) \cup V(Y)) \times (V(X) \cup V(Y)) \to \{ 0,1 \},\] satisfies $\lambda(v,w,x,y)=1$ if and only if the following conditions are met:
\begin{itemize}
\item $x$ belongs to a different graph than $v$ and $y$ belongs to a different graph than $w$,
\item if $v$ and $w$ are both vertices of the same graph, then $rel(v,w) = rel(x,y)$.
\item if $v$ and $w$ are from different graphs and $x=w$, then $y=v$,
\item if $v$ and $w$ are from different graphs and $y=v$, then $x=w$.
\end{itemize}

Now it is not hard to see that this game is synchronous and it has a perfect deterministic strategy if and only if $X \simeq Y$.  Indeed, if it has a perfect deterministic strategy, then there must be a function $f: V(X) \cup V(Y) \to V(X) \cup V(Y)$ and the rules force $v \in V(X) \implies f(v) \in V(Y)$ and $x \in V(Y) \implies f(x) \in V(X)$. Denoting the restrictions of $f$ to $V(X)$ and $V(Y)$ by $f_1: V(X) \to V(Y)$ and $f_2: V(Y) \to V(X)$. The fact that $rel(v,w) = rel (f_1(v), f_1(w))$ tells us that $f_1$ is one-to-one and preserves the edge relationships, since $f_2$ is also one-to-one,  $card(V(X)) = card(V(Y))$ and so both $f_1$ and $f_2$ define graph isomorphisms.  However, note that the rules of the game do not require that $f_1$ and $f_2$ be mutual inverses.

We will write $X \simeq_t Y$ if and only if this game has a perfect t-strategy for $t \in \{ loc, q, qa, qc, C^*, A^* \}$.

The following result characterizes $\cl A(Iso(X,Y))$.

\begin{proposition} \label{Iso-alg}Let $X=(V(X), E(X))$ and $Y=(V(Y), E(Y))$ be graphs on $n$ vertices. Then $\cl A(Iso(X,Y))$ is generated by $4n^2$ self-adjoint idempotents $\{ e_{v,w}: v,w \in V(X) \cup V(Y) \}$ satisfying:
\begin{enumerate}
\item $e_{g,g^{\prime}} =0, \, \forall g, g^{\prime} \in V(X)$ and $e_{h,h^{\prime}} =0, \, \forall h, h^{\prime} \in V(Y)$,
\item $e_{g,h}^2= e_{g,h}^* = e_{g,h}, \, \forall g \in V(X), h \in V(Y),$
\item for $g \in V(X)$ and $h \in V(Y)$, $e_{g,h} = e_{h,g}$,
\item $\sum_{h \in V(Y)} e_{g,h} = 1, \, \forall g \in V(X)$,
\item $\sum_{g \in V(X)} e_{g,h} = 1, \, \forall h \in V(Y),$
\item $e_{g,h}e_{g,h^{\prime}} =0, \forall h \ne h^{\prime}$,
\item $e_{g,h}e_{g^{\prime},h} =0, \, \forall g \ne g^{\prime}$,
\item $\sum_{g^{\prime} : (g,g^{\prime}) \in E(X)} e_{g^{\prime}, h} = \sum_{h^{\prime}: (h,h^{\prime}) \in E(Y)} e_{g, h^{\prime}}, \, \forall g, h$.
\end{enumerate}
\end{proposition}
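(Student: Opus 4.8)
The plan is to read the presentation off the general description of the $\ast$-algebra of a synchronous game (\Cref{sec:alg-sync-game}), specialized to the predicate $\lambda$ of $Iso(X,Y)$, and then to verify that the listed relations (1)--(8) not only hold but generate the entire ideal of the game. For $Iso(X,Y)$ the common input/output set is $V(X)\cup V(Y)$, so $\cl A(Iso(X,Y))$ is the universal unital $\ast$-algebra on $4n^2$ self-adjoint idempotents $e_{v,w}$ (with $v,w\in V(X)\cup V(Y)$) subject to $\sum_w e_{v,w}=1$ and $e_{v,a}e_{w,b}=0$ whenever $\lambda(v,w,a,b)=0$; relation (2) is built into this description. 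Relation (1) comes from combining synchronicity with the rule that outputs lie in the opposite graph: if $g,g'$ lie in the same graph then $\lambda(g,g,g',g')=0$, so $e_{g,g'}^2=0$ and idempotency forces $e_{g,g'}=0$. Substituting these vanishings into $\sum_w e_{v,w}=1$ yields (4) at once, and (5) once (3) is available. For (3) I would use the two ``matching'' clauses: with $g\in V(X)$, $h\in V(Y)$ the clause $x=w\Rightarrow y=v$ gives $e_{g,h}e_{h,y}=0$ for $y\ne g$, whence $e_{g,h}=e_{g,h}e_{h,g}$ after summing against $\sum_y e_{h,y}=1$; the symmetric clause gives $e_{h,g}=e_{g,h}e_{h,g}$, and comparison yields $e_{g,h}=e_{h,g}$.

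Relations (6) and (7) express row/column orthogonality of the matrix $E=(e_{g,h})_{g\in V(X),\,h\in V(Y)}$: (6) is the synchronicity relation $e_{g,h}e_{g,h'}=0$ for $h\ne h'$, and (7) follows from it through (3). The heart of the matter is (8), which asserts that $E$ intertwines the adjacency matrices, $A_XE=EA_Y$. To derive it I would insert $1=\sum_{h'}e_{g,h'}$ on the left of $\sum_{g':(g,g')\in E(X)}e_{g',h}$; the edge-preservation clause ($rel(v,w)=rel(x,y)$ for same-graph inputs) annihilates every resulting product $e_{g,h'}e_{g',h}$ except those with $(h,h')\in E(Y)$. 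Inserting $1=\sum_{g'}e_{g',h}$ on the right of $\sum_{h':(h,h')\in E(Y)}e_{g,h'}$ does the dual bookkeeping, and both sides collapse to the same double sum over $(g,g')\in E(X)$, $(h,h')\in E(Y)$, giving the equality.

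It remains to check that (1)--(8) force every defining relation $e_{v,a}e_{w,b}=0$ with $\lambda=0$. The cases with inputs from different graphs reduce immediately: using (3), such a product becomes either $e_{v,w}e_{b,w}$ with $b\ne v$ (two entries of one column, killed by (7)) or $e_{v,a}e_{v,w}$ with $a\ne w$ (two entries of one row, killed by (6)). The genuine content is the ``edge versus non-edge'' case, for which I would prove the standard lemma that a magic unitary $E$ with $A_XE=EA_Y$ satisfies $e_{g,h}e_{g',h'}=0$ whenever $(A_X)_{g,g'}\ne(A_Y)_{h,h'}$. Multiplying $A_XE=EA_Y$ on the left by $e_{g,h}$ and using row orthogonality collapses the right side to $(A_Y)_{h,h'}e_{g,h}$, giving $\sum_{g'\sim g}e_{g,h}e_{g',h'}=(A_Y)_{h,h'}e_{g,h}$; right-multiplying by $e_{g'',h'}$ and using column orthogonality isolates $e_{g,h}e_{g'',h'}=(A_Y)_{h,h'}e_{g,h}e_{g'',h'}$, which vanishes when $g''\sim g$ and $(A_Y)_{h,h'}=0$. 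Since the transpose $(e_{h,g})$ is, by (3), again such a magic unitary (now intertwining $A_Y$ and $A_X$), the same computation with $X$ and $Y$ interchanged settles the complementary case $(A_X)_{g,g'}=0$, $(A_Y)_{h,h'}=1$.

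I expect this final lemma to be the main obstacle. Extracting (1)--(7), and even the forward implication for (8), is essentially a transcription of the game predicate, but closing the loop---recovering the quadratic product-vanishing relations from the single linear intertwiner relation (8)---requires the orthogonality manipulations above together with the transpose symmetry furnished by (3).
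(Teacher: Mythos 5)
Your proof is correct. The first half---deriving (1)--(8) from the game predicate by repeatedly inserting $\sum_w e_{v,w}=1$ and killing terms with $\lambda=0$---is essentially identical to the paper's argument (your derivation of (3) is in fact slightly cleaner, comparing $e_{g,h}=e_{g,h}e_{h,g}$ with $e_{h,g}=e_{g,h}e_{h,g}$ directly rather than passing through the adjoint). Where you genuinely go beyond the paper is the converse: the paper's proof only verifies that the canonical generators satisfy (1)--(8) and leaves the claim that these relations present the whole ideal to the summary in Remark~\ref{rem-iso}, whereas you close the loop by proving the standard lemma that a magic unitary intertwining $A_X$ and $A_Y$ annihilates $e_{g,h}e_{g'',h'}$ whenever $(A_X)_{g,g''}\ne(A_Y)_{h,h'}$. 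That lemma and your case analysis are correct; the only redundancy is the final transpose step, since your identity $(A_X)_{g,g''}\,e_{g,h}e_{g'',h'}=(A_Y)_{h,h'}\,e_{g,h}e_{g'',h'}$ already forces the product to vanish in \emph{both} mismatch cases (one side is zero and the other is the product itself), so no appeal to the transposed magic unitary is needed for same-graph inputs from $V(X)$; it is still needed, as you note, when the inputs both lie in $V(Y)$.
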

\begin{proof} Recall that  for any game, we will have generators,  $e_{x,y},  \, x,y \in V(X) \cup V(Y)$ with $e_{x,y}^2 = e_{x,y}^* = e_{x,y}$, $\sum_y e_{x,y} =1$, and 
$e_{x,y} e_{x,w}=0$ for $y \ne w$.
So (2) and (6) are automatically met.

To see (1), note that if $g, g^{\prime} \in V(X)$, then  $\lambda(g, x, g^{\prime}, y) =0,$ for all $x,y$. Hence for and fixed $x$ we have that 
\[ e_{g, g^{\prime}} = e_{g,g^{\prime}} \big( \sum_y e_{x,y} \big) = \sum_y e_{g,g^{\prime}}e_{x,y} =0.\] The case that $h, h^{\prime} \in V(Y)$ is identical.

Note that (4) follows from (1).

To see (3), note that
\[ e_{h,g} = e_{h,g}( \sum_{k \in V(Y)} e_{g,k}) = \sum_{k \in V(Y)} e_{h,g} e_{g,k}.\]
Now $\lambda(h,g,g,k)=0$ unless $h=k$, so we have that $e_{h,g} = e_{h,g}e_{g,h}$ 
A similar calculation shows that $e_{g,h} = e_{g,h}e_{h,g}$. Hence,
$e_{g,h} = e_{g,h}^* = (e_{g,h}e_{h,g})^*= e_{h,g} e_{g,h} = e_{h,g}$.

Now (5) follows from (3) and (4). Similarly, (7) follows from (3) and (6).

Finally to see (8), we have that
\begin{multline*}
\sum_{g^{\prime} : (g,g^{\prime}) \in E(X)} e_{g^{\prime}, h} = \big( \sum_{g^{\prime} : (g,g^{\prime}) \in E(X)} e_{g^{\prime}, h} \big) \big( \sum_{h^{\prime} \in V(Y)} e_{g,h^{\prime}} \big) = \\
\sum_{g^{\prime}, h^{\prime}: (g, g^{\prime}) \in E(X), h^{\prime} \in V(Y)} e_{g^{\prime}, h} e_{g,h^{\prime}} = \sum_{g^{\prime}, h^{\prime}:(g,g^{\prime}) \in E(X), (h,h^{\prime}) \in E(Y)} e_{g^{\prime},h} e_{g,h^{\prime}},
\end{multline*}
since $\lambda(g^{\prime}, g, h, h^{\prime})=0$ unless $(h,h^{\prime}) \in E(Y)$.
Similarly, one shows that $\sum_{h^{\prime}: (h,h^{\prime}) \in E(Y)} e_{g, h^{\prime}} $ is equal to this latter sum and (8) follows.
\end{proof}

\begin{remark}\label{rem-iso} A nice compact way to represent the above relations is to consider the $n \times n$ matrix $U=( e_{g,h})_{g \in V(X), h \in V(Y)}$. Then by (2) every entry is a self-adjoint idempotent, while (4) and (5) imply that $U^*U= UU^*$ is the identity matrix, i.e., that $U$ is a unitary. We also, by (6) and (7), have that entries in each row and column are pairwise ``orthogonal", i.e., have pairwise 0 product. Such a matrix $U$ will be referred to as a {\it quantum permutation} over the $\ast$-algebra $\cA (Iso(X,Y))$.

Equation (8) implies that
$(1 \otimes A_X)U = U(1 \otimes A_Y)$ where $A_X$ and $A_Y$ denote the adjacency matrices of the graphs, and $1$ is the unit of the algebra.   Thus, Proposition~\ref{Iso-alg} can be summarized as saying that $\cl A(Iso(X,Y))$ is the $\ast$-algebra generated by $\{ e_{g,h} : g \in V(X), h \in V(Y) \}$ subject to the relations that $U= ( e_{g,h})$ is a quantum permutation with $(1 \otimes A_X)U= U(1 \otimes A_Y)$.  We have that $X \simeq_{A^*} Y$ if and only if a non-trivial $\ast$-algebra exists satisfying these relations.
\end{remark}

\begin{remark} Combining Proposition~\ref{Iso-alg} with Theorem~\ref{thm:intro1}, we see that given two graphs $X$ and $Y$ on $n$ vertices:
\begin{itemize}
\item  $X \simeq_{q} Y$ if and only if there exist a $d$ and projections $E_{g,h} \in M_d$ such that $U= (E_{g,h})$ is a unitary in $M_n(M_d)$ and $(1 \otimes A_X)U=U(1 \otimes A_Y),$
\item $X \simeq_{qa} Y$ if and only if there exist projections $E_{g,h} \in \cl R^{\omega}$ such that $U=(E_{g,h}) \in M_n(\cl R^{\omega})$ is a unitary and $(1 \otimes A_X)U=U(1 \otimes A_Y),$
\item $X \sim_{qc} Y$ if and only if there exists projections $E_{g,h}$ in some C$\ast$-algebra $\cl A$ with a trace such that $U=(E_{g,h}) \in M_n(\cl A)$ is a unitary and $(1 \otimes A_X)U=U(1 \otimes A_Y)$,
\item $X \simeq_{C^*} Y$ if and only if there exists projections $E_{g,h}$ on a Hilbert space $H$ such that $U=(E_{g,h}) \in M_n(B(H))$ is a unitary and $(1 \otimes A_X) U = U (1 \otimes A_Y)$.
\end{itemize}
Also, if there exists a unital $\ast$-homomorphism from $\pi: \cl A(Iso(X,Y)) \to \mathbb C$, then $(\pi(e_{g,h})) \in M_n$ will be a permutation matrix, satisfying $A_X ( \pi(e_{g,h})) = (\pi(e_{g,h})) A_Y$, which is the classical notion of isomorphism for graphs.
\end{remark}

Note that we have the following obvious implications. 

\[ X \cong Y \implies X \cong_{q}Y  \implies X \cong_{qa} Y \implies X \cong_{qc}Y \implies X \cong_{C^*}Y \implies X \cong_{A^*} Y.\]

Moreover, it is known that the first two implications are not reversible \cite{amrssv,KPS18}.  The question of whether the third implication holds is still open.  The question whether the implications $X \cong_{A^*} Y \implies X \cong_{C^*} Y \implies X \cong_{qc}Y$ hold for generic $X$ and $Y$ had remained open for quite some time.  Only very recently the implication $C^* \implies qc$ was obtained in \cite{Lu17}.  One of our main results is that the implication $A^* \implies qc$ holds.  In other words, $\cA(Iso(X,Y)) \ne 0$ if and only if $\cA(Iso(X,Y))$ admits a tracial state.  This is somehow surprising, because the same conclusion cannot be made for the algebras $\cA(Hom(X,Y))$ \cite{HMPS17}.









\subsection{Compact quantum groups}\label{subse.qg} 

We follow the references \cite{Wor98, DiKo94, Ti08, NeTu13} for the basics on (C$^\ast$-algebraic) compact quantum groups.

We begin by recalling that a {\it Hopf algebra} is a quadruple $(A, \Delta, S, \epsilon)$ where  $A$ is a unital associative algebra with multiplication map $m$, and  $\Delta:A \to A \otimes A$, $S:A \to A^{op}$, $\epsilon:A \to \bC$ are unital algebra morphisms satisfying
\begin{enumerate}
\item  $(\iota \otimes \Delta) \Delta = (\Delta \otimes \iota)\Delta$ (co-associativity).
\item $m(\iota \otimes S)\Delta = \epsilon(\cdot)1 = m(S \otimes \iota)\Delta$
\item $(\epsilon \otimes \iota) \Delta = (\iota \otimes \epsilon)\Delta = \iota$.
\end{enumerate}
The maps $\Delta, S,\epsilon$ given above are called the {\it comultiplication, counit, and antipode}, respectively.  We typically just refer to a Hopf algebra with the symbol $A$ if the other structure maps $m, \Delta, S, \epsilon$ are understood and there is no danger of confusion.  A {\it Hopf $\ast$-algebra} is a Hopf algebra $A$ where $A$ is a $\ast$-algebra and the comultiplication and counit are $\ast$-homomorphisms.

The following definition is (essentially) taken from \cite{DiKo94, Bi99} and is one of many equivalent ones. 
\begin{definition}\label{def.cqg}
  A {\it CQG algebra} is a Hopf $\ast$-algebra $A$ for which there exists a C$^\ast$-norm $\|\cdot\|$ on $A$ making the comultiplication $\Delta:A \to A \otimes A$ continuous with respect to the minimal C$^\ast$-tensor norm $\otimes_{\min}$ (in short, we say that $\|\cdot\|$ is {\it $\Delta$-compatible}).

A {\it compact quantum group (CQG)} is the object dual to a CQG algebra, i.e. we regard compact quantum groups and CQG algebras as mutually opposite categories. We write $G$ for a quantum group and $\cO(G)$ for its corresponding CQG algebra. 
\end{definition}

Since there is an identification between the objects of the categories of quantum groups and CQG algebras, we will on occasion abuse language and conflate the two.

The motivating example of a CQG is given by the Hopf $\ast$-algebra $\cO(G)$ of representative functions on a compact group $G$.  Here, $\Delta:\cO(G) \to \cO(G \times G) =\cO(G) \otimes \cO(G)$ is the map $\Delta f(s,t) = f(st)$, $Sf(t) = f(t^{-1})$, and $\epsilon(f) = f(e)$, where $e \in G$ is the unit.  Here the C$^\ast$-norm on $\cO(G)$ is the uniform norm coming from $C(G)$, and it is relatively easy to see that this is the unique C$^\ast$-norm making the comultiplication $\Delta$ $\otimes_{\min}$-continuous.

Motivated by the above example, it is customary to use the symbol $G$ to to denote an arbitrary CQG and write $A = \cO(G)$ for the Hopf $\ast$-algebra associated to $G$.  Here we are viewing $A$ as a non-commutative algebra of ``representative functions'' on some ``quantum space'' $G$, which comes equipped with a group structure.   

Another example of a CQG is given by the Pontryagin dual $\widehat{\Gamma}$ of a discrete group $\Gamma$.  Here $\cO(\widehat{\Gamma}) = \bC \Gamma$, $\Delta (\gamma) = \gamma\otimes \gamma$, $S \gamma = \gamma^{-1}$, and $\epsilon(\gamma) = 1$ for each $\gamma \in \Gamma$.  In this case, one can in general choose from a variety of $\Delta$-compatible C$^\ast$-norms.  The two most common ones are the maximal C$^\ast$-norm on $\bC \Gamma$ and the reduced C$^\ast$-norm, the latter being induced by the left regular representation of $\bC\Gamma$ on  $\ell^2\Gamma$.

A few ``purely quantum'' examples follow. 

\begin{example}\label{ex.u}
  Wang and Van Daele's {\it universal unitary quantum group} $U^+_F$ \cite{DaeWan96} associated to a matrix $F \in \text{GL}_n(\bC)$ is given by 
  \[
    \cO(U^+_F) = *\text{-algebra}\big( u_{ij}, \ 1 \le i,j \le n \ \big| \ u = [u_{ij}] \quad \& \quad (1 \otimes F)[u_{ij}^*] (1 \otimes F^{-1}) \text{ are unitary in }M_n(\cO(U^+_F))   \big)\]
together with Hopf-$\ast$-algebra maps $\Delta(u_{ij}) = \sum_k u_{ik}\otimes u_{kj}$, $S(u_{ij}) = u_{ji}^*$ and $\epsilon(u_{ij}) = \delta_{ij}$.

The term {\it universal} in the above definition will be made precise in the paragraph following Remark \ref{comod}.  For the time being, now we suffice it  to say that the quantum groups play the analogous universal role for compact matrix quantum groups that the ordinary compact matrix groups: Any compact matrix group $G$ arises as a closed quantum subgroup of some $U^+_F$.  That is, there exists a surjective Hopf $\ast$-algebra morphism $\cO(U_F^+) \to \cO(G)$. 
\end{example}

\begin{example}\label{ex.s}
  The {\it quantum permutation group} $S^+_n$ on $n$ points \cite{Wan98} is given by underlying Hopf $\ast$-algebra $A=\cO(S^+_n)$ which is the universal $*$-algebra generated by the entries of an $n\times n$ {\it magic unitary}: a matrix
  \begin{equation*}
    [u_{ij}]_{i,j} \in M_n(A)
  \end{equation*}
  consisting of self-adjoint projections summing up to $1$ across all rows and columns, and satisfying the orthogonality relations $u_{ij}u_{ik} = \delta_{jk}u_{ij}$ and $u_{kj}u_{ij} =  \delta_{ki}u_{ij}$.  The Hopf $\ast$-algebra maps $\Delta,S,\epsilon$ are defined exactly as for $U^+_F$.
\end{example}

Every compact quantum group $G$ comes equipped with a unique {\it Haar state}, which is a faithful state $h:\cO(G) \to \bC$ satisfying the left and right invariance conditions
\[
(\iota \otimes h)\Delta = h(\cdot)1 = (h \otimes \iota)\Delta.
\]
The norm on $\cO(G)$ induced by the GNS construction with respect to $h$ is always a $\Delta$-compatible norm, and it is the minimal such C$^\ast$-norm.  
We denote by $C_r(G)$, the corresponding C$^\ast$-algebra (the {\it reduced C$^\ast$-algebra of $G$}).  The {\it universal C$^\ast$-algebra of $G$}, $C^u(G)$, is the enveloping C$^\ast$-algebra of $\cO(G)$.  By universality, this C$^\ast$-norm is also $\Delta$-compatible.  

\begin{remark}
Often in the literature compact quantum groups are defined in terms of a pair $(A, \Delta)$ where $A$ is a unital C$^\ast$-algebra and $\Delta:A \to A \otimes_{\min}A$ is a co-associative unital $\ast$-homomorphism such that $\Delta(A)(1 \otimes A)$ and $\Delta(A)(A \otimes 1)$ are linearly dense in $A \otimes_{\min}A$.  One then obtains the Hopf $\ast$-algebra $\cO(G)$ as a certain dense $\ast$-subalgebra (spanned by coefficients of unitary representations of $G$, which we describe below).     
\end{remark}

Let $G $ be a CQG and $H$ a finite dimensional Hilbert space.  A {\it representation} of $G$ on $H$ is an invertible element $v \in \cO(G) \otimes B(H)$ such that 
\[
(\Delta \otimes \iota)v = v_{13}v_{23}.
\]
A representation of $G$ is called {\it unitary} if $v \in \cO(G) \otimes B(H)$ is unitary.  Note that if we fix an orthonormal basis $(e_i)_{i=1}^d$ for $H$, then a representation $v \in \cO(G)\otimes B(H)$ corresponds to an invertible matrix $v = [v_{ij}] \in M_n(\cO(G))$ such that
\[\Delta(v_{ij}) = \sum_{k=1}^d v_{ik} \otimes v_{kj} \qquad (1\le i,j \le d).\]

For any CQG, we always have the {\it trivial representation} on $\bC$ given by $v= 1 \in \cO(G) = \cO(G) \otimes B(\bC)$.  Given two representations $u \in \cO(G) \otimes B(H)$ and $v \in \cO(G) \otimes B(H)$, we can always form the {\it direct sum} $u \oplus v \in \cO(G) \otimes B(H \oplus K)$, {\it tensor product} $u \otimes v: = u_{12}v_{13} \in \cO(G) \otimes B(H \otimes K)$, and {\it conjugate representation} $\bar u \in \cO(G) \otimes B(\bar H)$ given by $\bar u = [u_{ij}^*]$ (if $u = [u_{ij}]$). A {\it morphism} between $u$ and $v$ is a linear map $T:H \to K$ such that $u(1 \otimes T) = (1 \otimes T)v$.  The Banach space of all morphisms between $u$ and $v$ is denoted by $\text{Mor}(u,v)$.  If $u$ and $v$ are unitary representations, then $T \in \text{Mor}(u,v) \iff T^* \in \text{Mor}(v,u)$.  We say that two representations $u$ and $v$ are {\it equivalent} if there exists an invertible element $T \in \text{Mor}(u,v)$.  We say that $u$ is {\it irreducible} if $\text{Mor}(u,u) = \bC 1$.

The fundamental theorem on finite dimensional representations of CQGs is stated as follows.

\begin{theorem}[\cite{Wor98}]
  Let $G$ be a CQG.  Every finite dimensional representation of $G$ is equivalent to a unitary representation, and every finite dimensional unitary representation of $G$ is equivalent to a direct sum of irreducible representations.  Moreover, $\cO(G)$ is linearly spanned by the matrix elements of irreducible unitary representations of $G.$
\end{theorem}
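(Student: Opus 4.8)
The plan is to use the Haar state $h$ (whose existence is recorded just above) as an averaging device, exactly as in classical Peter--Weyl theory, splitting the statement into its three assertions and treating them in order. For unitarizability, let $v \in \cO(G) \otimes B(H)$ be a finite-dimensional representation, so $v$ is invertible and $(\Delta \otimes \iota)v = v_{13}v_{23}$. I would set
\[ Q := (h \otimes \iota)(v^*v) \in B(H). \]
Since $h$ is a faithful state and $v^*v$ is positive and invertible (in any $\Delta$-compatible C$^\ast$-completion), $Q$ is a positive invertible operator on the finite-dimensional space $H$. The crucial point is the intertwining identity $v^*(1 \otimes Q)v = 1 \otimes Q$, which I would verify by applying $(h \otimes \iota \otimes \iota)$ to both sides of
\[ (\Delta \otimes \iota)(v^*v) = v_{23}^*\, (v^*v)_{13}\, v_{23}, \]
using invariance of $h$ in the form $(h \otimes \iota)\Delta = h(\cdot)1$ on the left-hand side (which collapses it to $1 \otimes Q$) and the fact that $v_{23}$ is trivial on the first leg on the right-hand side (which yields $v^*(1 \otimes Q)v$). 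Granting this identity, $w := (1 \otimes Q^{1/2})v(1 \otimes Q^{-1/2})$ satisfies $w^*w = 1$, and since $w$ is invertible this forces $w$ to be unitary; as $Q^{1/2}$ is an invertible intertwiner, $v$ and $w$ are equivalent.

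For complete reducibility, I would observe that if $v$ is unitary then $\mathrm{Mor}(v,v)$ is a unital $\ast$-subalgebra of $B(H)$: it is closed under adjoints precisely because $T \in \mathrm{Mor}(v,v) \iff T^* \in \mathrm{Mor}(v,v)$ for unitary $v$, as recorded above. Being a finite-dimensional C$^\ast$-algebra, $\mathrm{Mor}(v,v)$ is either $\bC 1$---in which case $v$ is irreducible---or it contains a nontrivial self-adjoint projection $P$. Since $P \in \mathrm{Mor}(v,v)$ we have $v(1 \otimes P) = (1 \otimes P)v$, so $1 \otimes P$ and $1 \otimes (1-P)$ cut $v$ into two unitary subrepresentations on $PH$ and $(1-P)H$. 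Induction on $\dim H$ then yields a decomposition into irreducible unitary representations.

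For the spanning assertion I would use a purely algebraic input: by the fundamental theorem of coalgebras, every $a \in \cO(G)$ lies in a finite-dimensional subcoalgebra, and the comatrix construction then exhibits $a$ as a matrix coefficient of some finite-dimensional representation $u = [u_{ij}]$ (with $\Delta(u_{ij}) = \sum_k u_{ik} \otimes u_{kj}$) whose entries span that subcoalgebra. By the first assertion this representation is equivalent to a unitary one, and by the second that unitary representation is a direct sum of irreducibles; since matrix coefficients transform linearly under equivalences and split as sums under direct sums, $a$ is a linear combination of matrix coefficients of irreducible unitary representations. Hence these span $\cO(G)$.

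The one genuinely delicate step is the intertwining identity in the first assertion: the positivity and invertibility of $Q$ and, above all, the leg-numbering bookkeeping needed to derive $v^*(1 \otimes Q)v = 1 \otimes Q$ from the invariance of $h$ together with the corepresentation identity. Everything else is either standard finite-dimensional C$^\ast$-algebra theory (second assertion) or the elementary coalgebra fact that every element is a matrix coefficient (third assertion). The entire analytic content of the theorem is funneled through the existence and invariance of the Haar state, which we are free to assume here.
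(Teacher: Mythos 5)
Your argument is correct; note that the paper itself offers no proof of this theorem, citing it directly from Woronowicz, so there is no in-paper argument to compare against. What you have written is precisely the standard Peter--Weyl-type proof found in the cited sources (unitarization by averaging $v^*v$ against the Haar state, complete reducibility via the finite-dimensional C$^\ast$-algebra $\mathrm{Mor}(v,v)$, and spanning via the fundamental theorem of coalgebras), and all the delicate steps you flag --- positivity and invertibility of $Q$, the identity $v^*(1\otimes Q)v = 1\otimes Q$, and the passage from comodule elements to matrix coefficients --- go through exactly as you describe.
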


\begin{remark} \label{comod}
In the language of Hopf algebras, a (unitary) representation of $G$ is typically called a {\it (unitary) comodule} over $\cO(G)$.  These notions obviously make sense for general Hopf $\ast$-algebras. 
\end{remark}

We end this section by recalling that a matrix {\it Hopf $\ast$-algebra} is a Hopf $\ast$-algebra that is generated by the coefficients of some corepresentation $w= [w_{ij}] \in M_n(A)$ of $A$. A useful fact in this regard from \cite{DiKo94} is that if a matrix Hopf $\ast$-algebra $A$ is generated by a corepresentation $w$ that is equivalent to a unitary one, then $A = \cO(G)$ is the Hopf $\ast$-algebra of some compact quantum group $G$.  In this case, we call $G$ a {\it compact matrix quantum group} and we call $w$ a {\it fundamental representation} of $G$.   By replacing $w$ with an equivalent unitary representation $v$, note that $\cO(G)$ is still generated by the matrix elements of $v \in M_n(\cO(G))$, and $\bar v$ is a unitarizable representation.  Hence there exists some $F \in GL_n(\bC)$ so that $(1 \otimes F)\bar v(1 \otimes  F^{-1})$ is a unitary representation.  This means that there is a surjective morphism of Hopf $\ast$-algebras $\pi:\cO(U^+_F) \to \cO(G)$ defined by $(\pi \otimes \iota)u = v$ ,where $u$ is the fundamental representation of $U^+_F$ given  in its definition.  In particular, $G$ is a so-called {\it closed quantum subgroup} of $U^+_F$ (written $G < U^+_F$).

\section{Quantum sets, graphs and their quantum automorphism groups}\label{se.qs}

The examples of CQGs that feature in this paper are the quantum automorphism groups of certain finite structures, such as sets, graphs, and their quantizations.  In order to describe these objects, we first quantize the notion of a (measured) finite set, then proceed to quantum graphs.  All of the definitions that follow are quite standard in the operator algebra literature \cite{Wan98, ban-sym, Ba02, DeVa10}. The idea of a quantum set or a quantum graph also appears in \cite{MuRuVe18, MuRuVe18b} using the language of special symmetric dagger Frobenius algebras.

\subsection{Quantum sets and graphs}\label{qset-graph}

\begin{definition} \label{qset}
  A {\it (finite, measured) quantum set} is a pair $X = (\cO(X), \psi_X)$, where $\cO(X)$ is a finite dimensional C$^\ast$-algebra and $\psi_X:\cO(X) \to \bC$ is a faithful state.

  We write $|X|$ for $\dim\cO(X)$, and refer to this value as the {\it cardinality} or {\it size} of $X$. 
\end{definition}

The reason for our choice of notation is that when $\cO(X)$ is commutative, Gelfand theory tells us that we are really just talking about a finite set $X$ (the spectrum of $\cO(X)$) equipped with a probability measure $\mu_X$ defined  $\psi_X(f) = \int_X f(x)d\mu_X(x)$ for each $f \in \cO(X)$.   

Let $X = (\cO(X), \psi_X)$ be a quantum set.  Let $m_X:\cO(X) \otimes \cO(X) \to \cO(X)$ and $\eta_X:\bC \to \cO(X)$ be the multiplication and unit maps, respectively.  In what follows, we will generally only be interested in a special class of  finite quantum sets -- namely those that are measured by a {\it $\delta$-form} $\psi_X$, which we now define:  

\begin{definition}[\cite{Ba02}]
Let $\delta > 0$.  A state $\psi_X:\cO(X) \to \bC$ is called a {\it $\delta$-form} \cite{Ba02} if \[m_Xm_X^* = \delta^2 \iota,\] where the 
adjoint is taken with respect to the Hilbert space structure on $\cO(X)$ coming from the GNS construction with respect to $\psi_X$.
\end{definition}
For purposes of distinguishing between the Hilbert and C$^\ast$-structures on $\cO(X)$, we denote this Hilbert space by $L^2(X)$.

The most basic examples of $\delta$-forms are given by the uniform measure on the $n$-point set $X = [n]$ and the canonical normalized trace on $M_n(\bC)$.   In the first case, a simple calculation shows that $m^*(e_i) = ne_i \otimes e_i$, where $(e_i = e_i^* = e_i^2)_{i=1}^n$ is the standard basis of projections for $\cO(X)$, and so we have $\delta = \sqrt{n}$.  In the second case, one can show that $m^*(e_{ij}) = n \sum_{k=1}^n e_{ik} \otimes e_{kj}$, where $(e_{ij})_{1 \le i,j \le n}$ are the matrix units for $M_n(\bC)$.  So in this case we have $\delta = n$. More generally, if we have a multimatrix decomposition $\cO(X) = \bigoplus_{i=1}^sM_{n(i)}(\bC)$ and $\psi_X = \bigoplus_{i = 1}^s \text{Tr}(Q_i \cdot)$ is a faithful state (so $0 < Q_i \in M_{n(i)}(\bC)$ and $\sum_i \text{Tr}(Q_i) = 1$), then $\psi_X$ is a $\delta$-form if and only if $\text{Tr}(Q_i^{-1}) = \delta^2$ for all $1 \le i \le s$.  In particular, $\cO(X)$ admits a unique tracial $\delta$-form with $\delta^2 = \dim \cO(X)$ given by $\psi_X = \bigoplus_{i = 1}^s \frac{n(i)}{|X|}\text{Tr}(\cdot)$.

\begin{convention}
Unless otherwise stated, we assume from now on that the quantum sets we consider equipped with $\delta$-forms.
\end{convention}

We now endow quantum sets with an additional structure of a quantum adjacency matrix, turning then into quantum graphs.  The following definition of a quantum adjacency matrix/graph is a generalization of the \cite[Definition 5.1]{MuRuVe18} to our framework.

\begin{definition} \label{qgraph}
Let $X$ be a quantum set equipped with a $\delta$-form $\psi_X$.  A self-adjoint linear map $A_X:L^2(X) \to L^2(X)$ is called a {\it quantum adjacency matrix} if it has the following properties
\begin{enumerate}
\item \label{id} $m_X(A_X \otimes A_X)m_X^* = \delta^2A_X$.
\item \label{symmetric} $(\iota \otimes \eta_X^*m_X)(\iota \otimes A_X \otimes \iota)(m_X^*\eta_X\otimes \iota) = A_X$
\item \label{diag} $m_X(A_X \otimes \iota)m_X^* = \delta^2\iota$
\end{enumerate}
We call the triple $X = (\cO(X), \psi_X, A_X)$ a {\it quantum graph}.
\end{definition}

\begin{remark}  
In the special case where $\cO(X)$ is equipped its unique {\it tracial} $\delta$-form, then the definition of a quantum graph given here is equivalent to the one given in \cite{MuRuVe18}.  In addition, as explained in \cite{MuRuVe18}, a quantum graph $X = (\cO(X), \psi_X,A_X)$, where $\cO(X)$ is a commutative C$^\ast$-algebra, captures precisely the notion of a classical graph.   Indeed, in this case the spectrum $X$ of $\cO(X)$ is a finite set and $\psi_X$ is the uniform probability measure on $X$.  If we write $A_X$ as a matrix $A_X = [a_{ij}]_{i,j \in X}$ with respect to the canonical orthonormal basis of normalized projections $(\sqrt{n}e_i)_{i=1}^{n} \subset L^2(X)$, then conditions \eqref{id}, \eqref{symmetric} and \eqref{diag} say, respectively, that 
\[
a_{ij}^2 = a_{ij}, \quad a_{ij} = a_{ji}, \quad a_{ii}=1 \qquad (i,j \in X).
\]
In other words, $X$ is the vertex set of a classical graph (as defined in Section \ref{graphgame}) with adjacency matrix $A_X-I_n$.  Thus, in the quantum definition of a graph, we choose to work with {\it reflexive graphs} ($(v,v) \in E(X) \ \forall v \in V(X)$).  This choice is purely cosmetic from the perspective of (quantum) symmetries of graphs, in the sense that we have a bijection between (quantum) symmetries of reflexive graphs and those of their irreflexive versions.
\end{remark}

\begin{remark}
Note that any quantum set $X$ equipped with a $\delta$-form $\psi_X$ can be trivially upgraded to a quantum graph in two ways.  The first way is by declaring $A_X = \delta^2 \psi_X(\cdot)1$.  The second is by declaring $A_X = \iota$.   In the case of classical finite sets $X$, these constructions correspond to the complete graph $K_{|X|}$ and its (reflexive) complement $\overline{K_{|X|}}$, respectively.  For general quantum sets $X$ equipped with the quantum adjacency matrix $A_X = \delta^2 \psi_X(\cdot)1$, we will call these graphs {\it quantum complete graphs}.  For a general quantum graph $X$, we can also talk about its (reflexive) complement $\overline{X}$, which is given by $\overline{X} = (\cO(X), \psi_X, A_{\overline{X}})$ with $A_{\overline{X}} = \delta^2 \psi_X(\cdot)1 + \iota - A_X$.  With this definition we have that the complement of a quantum complete graph $X$ is the ``edgeless'' quantum graph $\overline{X} = (\cO(X), \psi_X, \iota)$. 
\end{remark}

We now introduce the quantum automorphism groups of quantum graphs.  The definition of these quantum automorphism groups follows along the same lines as for the quantum automorphism groups of classical graphs \cite{Ba05} and also the quantum automorphism groups of quantum sets \cite{Wan98, ban-sym, Ba02}.

\begin{definition} \label{def-qaut}
  Let $X = (\cO(X), \psi_X, A_X)$ be a quantum graph with $n = |X|$ and fix an orthonormal basis $\{e_i\}_{i=1}^n$ for $L^2(X)$.  Define $\cO(G_X)$ to be the universal unital $\ast$-algebra  generated by the coefficients $u_{ij}$ of a unitary matrix $u = [u_{ij}]_{i,j = 1}^n \in M_n(\cO(G_X))$ subject to the relations making  the map 
\[
\rho_X:\cO(X) \to \cO(X) \otimes \cO(G_X); \qquad \rho_X(e_i) = \sum_k e_j \otimes u_{ji}
\]
 a unital $\ast$-homomorphism satisfying the $A_X$-covariance condition  $\rho_X(A_X \cdot) = (A_X \otimes \iota)\rho_X$.
\end{definition}

The notation $\cO(G_X)$ is meant to convey the notion that the algebra consists of representative functions on a CQG $G_X$. Specifically, it is the ``largest'' CQG acting on $X$ so as to preserve the measure $\psi_X$ and graph structure $A_X$. This is formalized in the following result, whose proof is a straightforward application of the universality implicit in \Cref{def-qaut}.

\begin{proposition}
  The $\ast$-algebra $A=\cO(G_X)$ admits a Hopf $\ast$-algebra structure defined by
  \begin{equation*}
    \Delta u_{ij} = \sum_{k=1}^n u_{ik} \otimes u_{kj}, \quad S(u_{ij}) = u_{ji}^*, \quad \epsilon(u_{ij}) = \delta_{ij} \qquad (1 \le i,j \le n).
  \end{equation*}

  Furthermore, the action of $G_X$ on $X$ given by $\rho_X$ preserves $\psi_X$ in the sense that
\begin{equation*}
  (\psi_X\otimes\iota)\rho_X = \psi_X(\cdot)1:\cO(X)\to \cO(G_X). 
\end{equation*}
We call $G_X$ the {\it quantum automorphism group of the quantum graph $X$}.
\end{proposition}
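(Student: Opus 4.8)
The plan is to produce the three structure maps $\Delta,\epsilon,S$ together with the invariance identity in four stages: the universal property of $\cO(G_X)$ from \Cref{def-qaut} handles $\Delta$ and $\epsilon$, a standard compact-quantum-group existence theorem supplies the antipode, and a short Hilbert-space computation gives the invariance of $\psi_X$. Throughout I use the fact that the universal property identifies a unital $\ast$-homomorphism out of $\cO(G_X)$ with a choice of matrix $v=[v_{ij}]$ over the target $\ast$-algebra that is unitary and for which $e_i\mapsto\sum_j e_j\otimes v_{ji}$ is a unital $\ast$-homomorphism intertwining $A_X$.

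First I would build the comultiplication. For $\Delta$ I take the target $\cO(G_X)\otimes\cO(G_X)$ and the matrix $v_{ij}=\sum_k u_{ik}\otimes u_{kj}$, i.e. $v=u_{12}u_{13}$ in leg notation. This $v$ is unitary as a product of the unitaries $u_{12}$ and $u_{13}$, and the coaction it induces is exactly the composite $(\rho_X\otimes\iota)\rho_X$; being a composition of unital $\ast$-homomorphisms it is one, and applying the $A_X$-covariance of $\rho_X$ twice yields $(\rho_X\otimes\iota)\rho_X(A_X\,\cdot)=(A_X\otimes\iota)(\rho_X\otimes\iota)\rho_X$. Hence the universal property delivers a $\ast$-homomorphism $\Delta$ with $\Delta u_{ij}=\sum_k u_{ik}\otimes u_{kj}$. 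Taking instead the target $\bC$ with $v=I_n$, whose induced coaction is the identity on $\cO(X)$, produces $\epsilon$ with $\epsilon(u_{ij})=\delta_{ij}$. Coassociativity and the counit axiom are then verified on the generators $u_{ij}$ by a one-line index computation, and since both sides of each identity are algebra homomorphisms agreeing on a generating set, they agree on all of $\cO(G_X)$.

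The antipode is the one genuinely non-formal point, and I expect it to be the main obstacle. The universal property only manufactures $\ast$-homomorphisms, whereas $S$ is a $\ast$-anti-homomorphism that is not $\ast$-preserving, so it cannot be produced the same way; indeed $[u_{ji}^*]$ fails to be unitary in $\cO(G_X)^{\mathrm{op}}$ in general. Instead I would invoke the standard existence theorem for compact quantum groups (Woronowicz, equivalently the Dijkhuizen--Koornwinder characterization recalled in \Cref{subse.qg}): a $\ast$-bialgebra generated by the coefficients of a corepresentation $u$ with both $u$ and its conjugate $\overline u=[u_{ij}^*]$ invertible is automatically a CQG algebra, with antipode given on generators by $S(u_{ij})=u_{ji}^*$. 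Unitarity of $u$ settles the invertibility of $u$, while the invertibility of $\overline u$ is precisely where the $\delta$-form hypothesis enters: the positive blocks $Q_i$ in the multimatrix decomposition of $\psi_X$ furnish an $F\in\mathrm{GL}_n(\bC)$ with $(1\otimes F)\overline u(1\otimes F^{-1})$ unitary. A cleaner packaging of the same point is to realize $\cO(G_X)$ as the quotient of the CQG algebra of the underlying quantum set $(\cO(X),\psi_X)$, whose compact quantum group structure is due to Wang and Banica, by the two-sided $\ast$-ideal generated by the covariance relations, and to check that this ideal is a Hopf $\ast$-ideal; the quotient then inherits $\Delta,\epsilon,S$ directly.

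Finally I would prove the invariance $(\psi_X\otimes\iota)\rho_X=\psi_X(\cdot)1$, which needs only unitarity of $u$ and the GNS orthonormality of $\{e_i\}$, not the graph structure. Writing $1=\sum_k\mu_k e_k$ with $\mu_k=\langle e_k,1\rangle=\psi_X(e_k^*)$, the unitality $\rho_X(1)=1\otimes 1$ forces $\sum_k u_{jk}\mu_k=\mu_j$ for all $j$, i.e. the column vector $\mu$ is fixed by $u$. Taking adjoints and using $u^*u=uu^*=1$ gives $\mu^*u=\mu^*$, and since $\overline{\mu_k}=\psi_X(e_k)$ this reads $\sum_k\psi_X(e_k)u_{ki}=\psi_X(e_i)1$, which is exactly $(\psi_X\otimes\iota)\rho_X(e_i)=\psi_X(e_i)1$. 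The only substantive ingredient in the whole argument is thus the compact-quantum-group existence step needed for the antipode.
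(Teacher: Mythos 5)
Your proof is correct, and for the comultiplication, the counit, and the invariance of $\psi_X$ it is essentially the argument the paper has in mind: the paper defers the proof to ``a direct computation'' and to the more general cogroupoid arguments following \Cref{bgvshopf}, where $\Delta$ (the map $\alpha$ specialized to $Y=X$) and $\epsilon$ are obtained from exactly the universality argument you give, and where state-preservation is proved by the same $p^*p=1$, $p(1\otimes 1)=1\otimes 1$ computation that you phrase in terms of the fixed vector $\mu$ (compare the lemma following \Cref{def-linking}). The one place where you genuinely diverge is the antipode, which you correctly isolate as the only non-formal step. The paper's route is to construct the coinversion explicitly: one checks that $f_i\mapsto\sum_j e_j\otimes p_{ij}^*$ defines an algebra map into $\cO(Y)\otimes\cO(G_Y,G_X)^{op}$ which is involutive for the twisted $\ast$-structure $(p^*)^{\star}=(F^{-1}p^*G)^t$ built from the matrices $F$ satisfying $Fe_i=e_i^*$, and then invokes universality; specializing to $Y=X$ yields $S$ with $S(u_{ij})=u_{ji}^*$. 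You instead feed the same essential ingredient --- unitarizability of $\bar u$ via $F\bar uF^{-1}=u$, which is precisely the involutivity of $\rho_X$ --- into the Woronowicz/Dijkhuizen--Koornwinder existence theorem for CQG algebras. Both are valid: the paper's construction is self-contained and builds the entire cogroupoid at once (which is what is actually needed later for \Cref{non-zero}), whereas yours is shorter here but imports the content of the existence theorem as a black box. Your alternative packaging via a Hopf $\ast$-ideal in the Wang--Banica quantum automorphism algebra of $(\cO(X),\psi_X)$ also works, provided one really does verify that the covariance relations generate a Hopf $\ast$-ideal, which amounts to the same generator computation.
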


\begin{proof}
This is a direct computation that we leave to the reader.  In fact a proof of this result will also follow as special case of the more general arguments presented following Remark \ref{bgvshopf}.  


\end{proof}

\begin{remark}
  Quantum automorphism groups are natural quantum analogues of their classical counterparts. Indeed, the abelianization of  $\cO(G_X)$ is exactly $\cO(\text{Aut}(X))$, the algebra of complex-valued functions on the group of automorphisms of the graph $X$. 
\end{remark}

\begin{example}
  When $X$ is a quantum complete graph, then $G_X$ is none other than Wang's quantum automorphism group of the finite space $(\cO(X), \psi_X)$ \cite{Wan98, Ba02}. In particular, the quantum automorphism group of the classical complete graph $K_n$ is precisely the quantum symmetric group $S^+_n$ of \Cref{ex.s}.
\end{example}

\subsection{Monoidal equivalence and bigalois extensions} \label{reps}

For a CQG $G$, we define the {\it representation category of $G$}, $\text{Rep}(G)$, to be the category whose objects are (equivalence classes of) finite dimensional representations of $G$, and whose morphisms are given by the intertwiner spaces $\{\text{Mor}(u,v)\}$.  The category $\text{Rep}(G)$ has a lot of nice structure, in particular it is an example of a so called  {\it strict C$^\ast$-tensor category with conjugates}.  See \cite{NeTu13} for more details.

We now come to a notion of central importance in this work: monoidal equivalence of compact quantum groups.
Let $G$ be a CQG.  Denote by $\text{Irr}(G)$ the set of equivalence classes of irreducible objects in $\text{Rep}(G)$.

\begin{definition}[\cite{Bi99, BiDeVa06}]  \label{monequiv}
Let $G_1, G_2$ be two compact quantum groups.  We say that $G_1$ and $G_2$ are {\it monoidally equivalent}, and write $G_1 \sim^{mon} G_2$, if there exists a bijection \[\varphi:\text{Irr}(G_1) \to \text{Irr}(G_2)\]  together with linear isomorphisms 
\[\varphi: \text{Mor}(u_1 \otimes \ldots \otimes u_n, v_1 \otimes \ldots\otimes v_m) \to  \text{Mor}(\varphi(u_1)\otimes\ldots \otimes \varphi(u_n), \varphi(v_1) \otimes \ldots \otimes \varphi(v_m)) \]
such that $\varphi(1_{G_1}) = 1_{G_2}$ ($1_{G_i}$ being the trivial representation of $G_i$), and such that for any morphisms $S,T$, 
\begin{align*}
\varphi(S \circ T) &=\varphi(S)\circ \varphi(T) \quad (\text{whenever $S \circ T$ is well-defined})\\
\varphi(S^*)&=\varphi(S)^* \\
\varphi(S \otimes T) &=\varphi(S) \otimes \varphi(T).
\end{align*}
\end{definition}
A monoidal equivalence between $G_1$ and $G_2$ means that the strict C$^\ast$-tensor categories $\text{Rep}(G_i)$ are {\it unitarily monoidally equivalent}.  More precisely, the maps $\varphi$ defined above canonically extend to a unitary tensor functor $\varphi: \text{Rep}(G_1) \to \text{Rep}(G_2)$ that is {\it fully faithful} (i.e., $\varphi$ defines an isomorphism between $\text{Mor}(u,v)$ and $\text{Mor}( \varphi(u), \varphi(v))$ for any objects $u,v \in \text{Rep}(G_1)$) and is {\it essentially surjective} (i.e., every object in $\text{Rep}(G_2)$ is of the form $\varphi(u)$ for some $u \in \text{Rep}(G_i)$).


\subsubsection{Bigalois extensions}

 We now discuss an equivalent, but somewhat more concrete, way to think about monoidal equivalence of compact quantum groups.  The key object is that of a bigalois extension, which has its origins in Hopf algebra theory, but is adapted here to the analytic setting of CQGs.

Let $A = \cO(G)$ be a Hopf $\ast$-algebra of representative functions on a CQG $G$.   A {\it left $A$ $\ast$-comodule algebra} is a unital $\ast$-algebra $Z$ equipped with a unital $\ast$-homomorphism $\alpha:Z \to A \otimes Z$ satisfying $(\iota \otimes \alpha)\alpha = (\Delta \otimes \iota)\alpha$  and $(\epsilon \otimes \iota ) \alpha = \iota$.  Similarly, a  {\it right $A$ $\ast$-comodule algebra} is a unital $\ast$-algebra $Z$ equipped with a unital $\ast$-homomorphism $\beta:Z \to Z \otimes A$ satisfying $(\beta \otimes \iota)\beta = (\iota \otimes \Delta)\beta$  and $(\iota \otimes \epsilon ) \beta = \iota$. 

A left $A$ $\ast$-comodule algebra $(Z, \alpha)$ is called a {\it left $A$ Galois extension} if the linear map \[\kappa_l: Z \otimes Z \to A \otimes Z; \qquad \kappa_l(x \otimes y) = \alpha(x)(1 \otimes y)\]is bijective.  Similarly, a right $A$ $\ast$-comodule algebra $(Z, \beta)$ is called a {\it right $A$ Galois extension} if the linear map \[\kappa_r: Z \otimes Z \to Z \otimes A; \qquad \kappa_r(x \otimes y) = (x \otimes 1)\beta(y)\] is bijective.  Finally, let $A$ and $B$ be Hopf $\ast$-algebras.  A unital $\ast$-algebra $Z$ is called an {\it $A-B$ bigalois extension} if it is both a left $A$ Galois extension and a right $B$ Galois extension, and $Z$ is an $A-B$-bicomodule algebra.  I.e., if $\alpha, \beta$ denote the left and right comodule maps, respectively, then we have the equality of maps
\[
(\iota \otimes \beta)\alpha = (\alpha \otimes \iota)\beta: Z \to A \otimes Z \otimes B.
\]

\begin{remark}
  The notion of a (bi)galois extension should be regarded as a quantum analogue of the familiar notion of a (bi)torsor (or principle homogeneous (bi)bundle) in the context of group actions: If $G$ is a (finite) group and $G \curvearrowright X$ is an action of $G$ on a finite space $X$, we call $X$ a (left) {\it $G$-torsor} if the action is free and transitive.  This is equivalent to saying that the canonical map
\[
G \times X \to X \times X; \qquad (g,t) \mapsto (g\cdot t,t) 
\]  
is bijective.  Letting $\cO(X)$ denote the $\ast$-algebra of functions on $X$, then $\cO(X)$ is a left $\cO(G)$ $\ast$-comodule algebra with the map 
\[
\alpha:\cO(X) \to \cO(G) \otimes \cO(X)\cong\cO(G \times X); \qquad \alpha(f)(g,t) = f(g\cdot t).
\]
With these definitions, it is clear that $G \curvearrowright X$ is free and transitive if and only if 
\[
\kappa_l:\cO(X) \otimes \cO(X) \to \cO(G) \otimes \cO(X); \qquad \kappa_l(x \otimes y)(g,t) = \alpha(x)(1\otimes y)(g,t) = x(g\cdot t)y(t)
\]
is bijective, i.e., if and only if $\cO(X)$ is a left $\cO(G)$-galois extension.  Similar statements hold for right $G$-spaces and left-right $G_1$-$G_2$-spaces.
\end{remark}

In the following, we will be interested in bigalois extensions which admit non-zero C$^\ast$-envelopes.  The main way in which this is achieved is by considering necessary and sufficient conditions for the existence of invariant states on bigalois extensions.  In what follows, a {\it state} on a unital $\ast$-algebra $Z$ is a linear functional $\omega:Z \to \bC$ such that $\omega(1) = 1$ and $\omega(z^*z) \ge 0$ for all $z \in Z$. 

\begin{definition}
Let $Z$ be an $A-B$-bigalois extension.  A state $\omega:Z \to \bC$ is called {\it left-invariant} if $(\iota \otimes \omega)(z) =\omega (z)1_A$ for each $z \in Z$, and it is called right-invariant if $(\omega \otimes \iota)(z) =\omega (z)1_B$ for each $z \in Z$.  We call $\omega$ {\it bi-invariant} if it is both left and right-invariant.
\end{definition} 

\begin{example}
  The Hopf $\ast$-algebra $A = \cO(G)$ of representative functions on a compact quantum group $G$ is a natural example of an $A-A$-bigalois extension admitting a bi-invariant state.  Indeed, just take $\omega = h$, the Haar state on $A$.
\end{example} 

The following theorem summarizes some useful properties of bi-invariant states on bigalois extensions.  It is an amalgamation of various results in \cite{Bi99, BiDeVa06,De07}.

\begin{theorem} \label{invariant-states}
Let $G_1, G_2$ be compact quantum groups with $A = \cO(G_1)$ and $B = \cO(G_2)$.  Let $Z$ be an $A-B$-bigalois extension.  Then we have the following.
\begin{enumerate}
\item Any left/right/bi-invariant state $\omega:Z \to \bC$ is unique and faithful (if it exists).
\item The following are equivalent:
\begin{enumerate}
\item $Z$ admits a non-zero $\ast$-representation as bounded linear operators on a Hilbert space.
\item $Z$  admits a state.
\item $Z$ admits a bi-invariant state.
\item $Z$ admits a left (resp. right)-invariant state.
\end{enumerate}
\item If $Z$ admits a bi-invariant state $\omega$, denote by $B^u(G_1,G_2) \ne 0$ the universal C$^\ast$-algebra generated by $Z$ and by $B_r = \overline{\pi_\omega(Z)}$, the C$^\ast$-algebra generated by  the GNS representation with respect to $\omega$.  Then $\omega$ extends to a KMS state on both $B^u$ and $B_r$.  Moreover, $\omega$ is a tracial state if and only if both $G_1$ and $G_2$ are of Kac type.  (I.e., the Haar states on both $\cO(G_i)$ are tracial)
\end{enumerate} 
\end{theorem}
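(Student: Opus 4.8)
The plan is to treat \Cref{invariant-states} as a synthesis of known bigalois-extension theory, so most of the work is in correctly invoking the cited results \cite{Bi99, BiDeVa06, De07} and verifying that their hypotheses are met by our $A-B$-bigalois extension $Z$. I will organize the proof around the three numbered claims.

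For part (1), the uniqueness and faithfulness of invariant states, I would exploit the Galois property directly. The key structural input is that a left $A$ Galois extension $Z$ decomposes, as an $A$-comodule, into isotypic components indexed by $\mathrm{Irr}(G_1)$, with the trivial-isotypic component equal to the ground field $\bC 1$ (this is where the \emph{extension} terminology and the bijectivity of $\kappa_l$ do their work). A left-invariant state must annihilate every nontrivial isotypic component and send $1 \mapsto 1$, which pins it down uniquely; faithfulness then follows from the fact that the invariant state restricts to the Haar state under the natural identifications, together with faithfulness of the Haar state. I would state this and cite \cite{Bi99} for the precise argument, since the cocycle/2-cocycle deformation picture there makes the isotypic decomposition explicit.

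For part (2), the chain of equivalences, the implications (c) $\Rightarrow$ (d) $\Rightarrow$ (b) $\Rightarrow$ (a) are either trivial or standard (a bi-invariant state is in particular one-sided invariant; any state gives a GNS representation; any representation restricted suitably yields a state via a vector functional), so the content is the reverse direction showing that the mere existence of a $\ast$-representation, or of any state, forces the existence of a bi-invariant one. Here I would average a given state over the quantum group using the Haar state of $G_1$ (resp.\ $G_2$): given a state $\omega_0$ on $Z$, the functional $(\mathrm{h} \otimes \omega_0)\alpha$ is a candidate left-invariant state, and one checks it is still a state (positivity is preserved because $\alpha$ is a $\ast$-homomorphism and $\mathrm{h}$ is a state) and is genuinely invariant by the coassociativity of $\alpha$ and left-invariance of $\mathrm{h}$. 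One then upgrades left-invariance to bi-invariance using the bicomodule compatibility $(\iota \otimes \beta)\alpha = (\alpha \otimes \iota)\beta$ together with uniqueness from part (1). This averaging step is where I expect the main obstacle: one must ensure the averaged functional does not vanish identically (nondegeneracy of the counit condition $(\epsilon \otimes \iota)\alpha = \iota$ guarantees $\omega_0(1)=1$ survives), and one must confirm that the algebraic Haar state pairing is well-defined on the possibly infinite-dimensional $Z$, which is handled by the isotypic decomposition making $\mathrm{h} \otimes \omega_0$ a finite sum on each homogeneous component. This is essentially the argument of \cite{BiDeVa06}, which I would cite.

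For part (3), I would first note that $B^u$ and $B_r$ are nonzero precisely because a bi-invariant state exists and is faithful (part (1)), so the GNS construction with respect to $\omega$ produces a nonzero Hilbert space and a faithful representation $\pi_\omega$. The KMS property and the explicit modular automorphism group come from the fact that $\omega$, being invariant, is governed by the modular data of the Haar states on $G_1$ and $G_2$; the computation is carried out in \cite{De07}, and the tracial criterion — that $\omega$ is a trace iff both $G_i$ are of Kac type — is exactly the statement that the modular automorphism group is trivial, which happens iff the Haar states $\mathrm{h}_{G_i}$ are tracial. I would therefore present this part as a direct citation to \cite{De07}, pointing out that the monoidal equivalence $\mathrm{Rep}(G_1) \simeq \mathrm{Rep}(G_2)$ induced by $Z$ forces the two Kac-type conditions to be equivalent, so the ``both'' in the statement is automatic once one is Kac.

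Given that all three parts are established in the cited literature, the honest proof is a careful assembly rather than a new argument, and the one genuinely delicate point to get right is the averaging/nondegeneracy step in part (2); everything else is invocation and bookkeeping.
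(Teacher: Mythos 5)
The paper offers no proof of \Cref{invariant-states} at all: it is explicitly presented as ``an amalgamation of various results in \cite{Bi99, BiDeVa06, De07}'', so your citation-assembly strategy is exactly the paper's approach, and your elaborations of the individual steps are largely sound. In particular, the uniqueness argument via the isotypic decomposition (with trivial coinvariants $Z^{\mathrm{co}\,A}=\bC 1$), the Haar-averaging $(\mathrm{h}\otimes\omega_0)\alpha$ for producing a left-invariant state (positivity holds by the Schur product theorem applied to the Gram matrices $[\mathrm{h}(a_i^*a_j)]$ and $[\omega_0(b_i^*b_j)]$), and the upgrade to bi-invariance via the bicomodule identity plus uniqueness are all correct and are essentially the arguments of \cite{BiDeVa06}. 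One minor point you skip in (b)$\Rightarrow$(a): the GNS representation of a state on a general $\ast$-algebra need not act by bounded operators; here it does because $Z$ is spanned by coefficients of unitary matrices $z^x$, which forces $\|\pi_\omega(z^x_{ij})\|\le 1$.

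There is, however, one concretely false assertion in your treatment of part (3): you claim that the monoidal equivalence $\mathrm{Rep}(G_1)\simeq\mathrm{Rep}(G_2)$ induced by $Z$ ``forces the two Kac-type conditions to be equivalent, so the `both' in the statement is automatic once one is Kac.'' Kac type is \emph{not} a monoidal invariant: monoidal equivalence preserves quantum dimensions but not classical dimensions, and the Kac property is precisely the coincidence of the two. The standard counterexample is $O_n^+\sim^{\text{mon}}SU_q(2)$ with $q+q^{-1}=n$, where $O_n^+$ is Kac and $SU_q(2)$ is not for $n\ge 3$. Consequently the word ``both'' in the statement is doing real work, and the correct mechanism (as in \cite{De07}) is that the modular automorphism group of $\omega$ acts on each isotypic block of $Z$ through the positive intertwiners $Q_x$ and $Q_{\varphi(x)}$ of \emph{both} quantum groups, so $\omega$ is tracial iff all of these are scalar, i.e.\ iff $G_1$ \emph{and} $G_2$ are each of Kac type. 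Delete the ``automatic'' remark and the rest of your write-up stands.
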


\begin{theorem}[\cite{Bi99, BiDeVa06}] \label{thm-mon-bg}
Let $G_1, G_2$ be compact quantum groups.  Then $G_1$ and $G_2$ are monoidally equivalent if and only if there exists an $\cO(G_1)-\cO(G_2)$-bigalois extension $Z$ equipped with a bi-invariant state $\omega$.
\end{theorem}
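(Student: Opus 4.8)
The plan is to prove both implications of \Cref{thm-mon-bg} by translating between the combinatorial data of a monoidal equivalence $\varphi:\text{Rep}(G_1) \to \text{Rep}(G_2)$ and the algebraic data of a bigalois extension $Z$ with its comodule structure. The forward direction (monoidal equivalence $\Rightarrow$ bigalois extension) is the constructive heart of the argument, and I expect it to be the main obstacle; the reverse direction (bigalois extension $\Rightarrow$ monoidal equivalence) is comparatively formal once the Tannaka--Krein machinery is set up.

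For the forward direction, suppose $\varphi$ is a unitary monoidal equivalence. The idea is to build $Z$ as a ``linking algebra'' whose matrix coefficients simultaneously carry a left $\cO(G_1)$-comodule and a right $\cO(G_2)$-comodule structure, with the intertwiner relations dictated by $\varphi$. Concretely, for each irreducible $u \in \text{Irr}(G_1)$ one introduces generators $z^u_{ij}$ indexed so that the matrix $Z^u = [z^u_{ij}]$ mediates between $u$ and $\varphi(u)$, and one imposes exactly the relations forcing every morphism $T \in \text{Mor}(u_1 \otimes \cdots \otimes u_n, v_1 \otimes \cdots \otimes v_m)$ to intertwine the $z$-matrices with the $\varphi(T)$-matrices. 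Define the left coaction $\alpha$ and right coaction $\beta$ on $Z$ by
\[
\alpha(z^u_{ij}) = \sum_k u_{ik} \otimes z^u_{kj}, \qquad \beta(z^u_{ij}) = \sum_l z^u_{il} \otimes \varphi(u)_{lj},
\]
and check that these make $Z$ an $\cO(G_1)$-$\cO(G_2)$-bicomodule algebra; the compatibility $(\iota \otimes \beta)\alpha = (\alpha \otimes \iota)\beta$ follows because $\alpha$ touches only the first leg index and $\beta$ only the second. The Galois property is where the monoidal structure is genuinely used: bijectivity of $\kappa_l$ and $\kappa_r$ should be established representation-by-representation, using that $\varphi$ preserves conjugates (hence duality/rigidity morphisms) so that the antipode-like inverse of each $Z^u$ can be written down explicitly in terms of $\varphi$ applied to the solution of the conjugate equations. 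Finally, the bi-invariant state $\omega$ is constructed by mimicking the Haar state: one defines $\omega$ to vanish on all matrix coefficients of nontrivial irreducibles and to be $1$ on the scalars, and invariance is checked against $\alpha,\beta$ coefficient-wise; faithfulness and existence are then guaranteed by \Cref{invariant-states}(1).

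For the reverse direction, given an $\cO(G_1)$-$\cO(G_2)$-bigalois extension $Z$ with bi-invariant state $\omega$, the plan is to recover $\varphi$ via a Tannaka--Krein reconstruction. Each finite-dimensional representation $u$ of $G_1$ gives, through the left coaction $\alpha$ together with the Galois isomorphism $\kappa_l$, a corresponding representation of $G_2$ obtained by pushing $u$ across $Z$; one sets $\varphi(u)$ to be this transported representation and checks it is well-defined on equivalence classes. The monoidal functoriality $\varphi(u \otimes v) = \varphi(u) \otimes \varphi(v)$ comes from the fact that $Z$ is a comodule \emph{algebra}, so the Galois maps are compatible with tensor products of comodules, and the $\ast$-preservation and faithfulness of $\varphi$ on morphism spaces use the faithfulness of $\omega$ from \Cref{invariant-states}(1). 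The only subtlety is verifying that $\varphi$ is a bijection on irreducibles and fully faithful, which amounts to showing the transport procedure is invertible; this is exactly the content of the right Galois condition applied symmetrically, so the two Galois maps $\kappa_l,\kappa_r$ furnish the functor and its quasi-inverse.

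I expect the sharpest technical point to be the verification of the Galois (bijectivity) conditions in the forward construction, since that is where the abstract categorical data of $\varphi$ must be converted into concrete invertibility of the comodule-algebra maps, and it relies essentially on rigidity/conjugates in $\text{Rep}(G_i)$. Because the statement is attributed to \cite{Bi99, BiDeVa06}, the cleanest exposition is to invoke Bichon--De Rijdt--Vaes directly: their reconstruction theorem already packages both directions, so in practice I would cite those results and supply only the identification of our $\delta$-form/quantum-automorphism-group setting with their hypotheses, noting that all $G_i$ arising here are compact and that \Cref{invariant-states} supplies the analytic input needed to pass from the purely algebraic bigalois extension to one admitting the requisite invariant state.
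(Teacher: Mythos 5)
The paper offers no proof of this statement: it is quoted directly from Bichon and Bichon--De Rijdt--Vaes, with the reader pointed to \cite[Theorem 2.3.11]{NeTu13} and \cite[Theorem 3.9 and Proposition 3.13]{BiDeVa06} for the construction of $(Z,\omega)$. Your closing suggestion --- invoke those references and only verify that the present setting meets their hypotheses --- is therefore exactly what the paper does, and is the right call.

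Your self-contained sketch of the forward direction follows the standard linking-algebra construction, but it has one genuine soft spot: the existence of the bi-invariant \emph{state}. You define the candidate functional $\omega$ (zero on coefficients of nontrivial irreducibles, $1$ on scalars) and then assert that ``faithfulness and existence are then guaranteed by \Cref{invariant-states}(1).'' That item only says an invariant state is unique and faithful \emph{if it exists}; it gives nothing toward existence. What actually has to be proved is positivity, $\omega(z^*z)\ge 0$, and in \cite{BiDeVa06} this is the analytic heart of the forward direction, established through explicit orthogonality relations for the coefficients $z^u_{ij}$ derived from the solutions of the conjugate equations transported by $\varphi$. Item (2) of \Cref{invariant-states} cannot be used as a substitute, since it presupposes that $Z$ already admits some state or nonzero $*$-representation, which is precisely what is at stake. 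Relatedly, because you present $Z$ by generators and relations, the nonvanishing $Z\ne 0$ also requires an argument; in \cite{BiDeVa06} it falls out of realizing $Z$ concretely from the fiber functor rather than as a universal quotient. The reverse direction and the role of rigidity in the Galois bijectivity are sketched correctly.
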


We refer the reader to \cite[Theorem 2.3.11]{NeTu13} or \cite[Theorem 3.9 and Proposition 3.13]{BiDeVa06} for a precise description of the the bigalois extension $(Z,\omega)$ induced by the monoidal equivalence in Theorem \ref{thm-mon-bg}.

We end this section by stating a simple criterion due to Bichon \cite{Bi99} (for compact matrix quantum groups) for a bigalois extension to admit an invariant state $\omega$.  First we need some definitions.  Let $n \in \bN$ and $F_i \in \text{GL}_n(\bC)$.  We define $\cO(U^+_{F_1}, U^+_{F_2})$ to be the unital $\ast$-algebra generated by the coefficients $z_{ij}$ of a $n_1 \times n_2$ matrix $z = [z_{ij}]_{\substack {1 \le i \le n_1 \\ 1 \le j \le n_2}} \in M_{n_1, n_2}(\cO(U^+_{F_1}, U^+_{F_2}))$  satisfying the relations making both $z$ and $F_1\bar z F_{2}^{-1}$ unitary, where $\bar z = [z_{ij}^*]$.  When $F_1 = F_2 = F$, note that $\cO(U^+_{F}, U^+_{F})= \cO(U_F^+)$ is the Hopf $\ast$-algebra of representative functions on the universal unitary quantum group $U^+_F$ introduced earlier.  We also note that if $\cO(U^+_{F_1}, U^+_{F_2}) \ne 0$ then $\cO(U^+_{F_1}, U^+_{F_2}) $ is an $\cO(U^+_{F_1})-\cO(U^+_{F_2})$-bigalois extension with respect to the bicomodule structure given by 

\begin{align*}
&\alpha_{F_1,F_2}:\cO(U^+_{F_1}, U^+_{F_2}) \to \cO(U^+_{F_1}) \otimes \cO(U^+_{F_1}, U^+_{F_2}); \qquad \alpha_{F_1,F_2}(z_{ij}) = \sum_{k=1}^{n_1}u_{ik} \otimes z_{kj} \\
&\beta_{F_1, F_2}: \cO(U^+_{F_1}, U^+_{F_2})\to \cO(U^+_{F_1}, U^+_{F_2}) \otimes \cO(U_{F_2}^+); \qquad \beta_{F_1,F_2}(z_{ij}) = \sum_{l=1}^{n_2} z_{il} \otimes v_{lj}, 
\end{align*}
where $u = [u_{ij}], v = [v_{ij}]$ are the fundamental representations of $U_{F_1}^+, U^+_{F_2}$, respectively.

\begin{theorem}[Proposition 6.2.6 in \cite{Bi99}]\label{sufficient-state}
Let $G$ be a compact matrix quantum group and $(Z, \alpha)$ a left $\cO(G)$-Galois extension.  Let $F \in \text{GL}_n(\bC)$ be such that $G < U^+_F$ (with corresponding surjective morphism $\pi:\cO(U^+_F) \to \cO(G)$).  If there exists $F_1 \in \text{GL}_{n_1}(\bC)$ and a surjective $\ast$-homomorphism $\sigma:\cO(U^+_{F}, U^+_{F_1}) \to Z$ satisfying $\alpha \circ \sigma = (\pi \otimes \sigma)\alpha_{F,F_1}$, then $Z$ admits a left-invariant state $\omega:Z \to \bC$.
\end{theorem}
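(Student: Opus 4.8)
The plan is to manufacture a left-invariant state on $Z$ out of the bi-invariant state of the universal object $W:=\cO(U^+_F,U^+_{F_1})$, transported through $\sigma$; the construction of the invariant \emph{functional} will be essentially automatic, and all the content lies in proving that it is positive. First I would check that $W$ carries a faithful bi-invariant state. Since $Z$ is unital it is nonzero, and as $\sigma$ is onto we get $W\neq0$, so $W$ is a genuine $\cO(U^+_F)$–$\cO(U^+_{F_1})$-bigalois extension with coactions $\alpha_{F,F_1},\beta_{F,F_1}$. Bijectivity of the right Galois map $\kappa_r$ forces the right coaction to be ergodic: if $\beta_{F,F_1}(w)=w\otimes1$ then $\kappa_r(1\otimes w)=\kappa_r(w\otimes1)$, whence $w\in\bC1$ and $W^{\mathrm{co}}=\bC1$. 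Consequently $(\iota\otimes h_{F_1})\beta_{F,F_1}(w)=\omega_0(w)1_W$ defines a unital functional $\omega_0$, which a short coassociativity computation shows to be right-invariant, and which is positive simply because the Haar state $h_{F_1}$ of $U^+_{F_1}$ is. By part (1) of \Cref{invariant-states}, $\omega_0$ is then bi-invariant and faithful.

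Next I would build the candidate invariant functional on $Z$ directly from the Haar state $h_G$ of $G$, by setting $\omega(z)1_Z:=(h_G\otimes\iota)\alpha(z)$. The same argument as above, now applied to $\kappa_l$, gives $Z^{\mathrm{co}}=\bC1$ (from $\alpha(z)=1\otimes z$ one gets $\kappa_l(z\otimes1)=\kappa_l(1\otimes z)$, so $z\in\bC1$), and coassociativity together with left/right invariance of $h_G$ shows that $(h_G\otimes\iota)\alpha(z)\in Z^{\mathrm{co}}$, so $\omega$ is a well-defined unital functional and satisfies $(\iota\otimes\omega)\alpha=\omega(\cdot)1$, i.e. $\omega$ is left-invariant. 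Positivity of the slice map $h_G\otimes\iota$ places $\omega(z^*z)1_Z$ in the positive cone of $Z$. The one missing point is that the \emph{scalar} $\omega(z^*z)$ is $\ge 0$; this is exactly the statement that $Z$ admits some state, equivalently that the positive cone of $Z$ meets $\bC1$ only in $\bR_{\ge0}1$. Thus the entire problem reduces to transferring positivity from $W$ to $Z$.

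The crux is to prove $\omega\circ\sigma=\omega_0$: granting this, $\omega(\sigma(w)^*\sigma(w))=\omega_0(w^*w)\ge0$, and since $\sigma$ is onto every element of $Z$ is a $\sigma(w)$, so $\omega$ is a genuine state and we are done. To identify $\omega\circ\sigma$ with $\omega_0$ I would invoke uniqueness of the invariant functional on the Galois object $W$ (part (1) of \Cref{invariant-states}) once I know $\omega\circ\sigma$ is invariant. Here is the main obstacle: the hypothesis $\alpha\circ\sigma=(\pi\otimes\sigma)\alpha_{F,F_1}$ yields only the $\pi$-relative identity $(\pi\otimes\omega\sigma)\alpha_{F,F_1}=\omega\sigma(\cdot)1_{\cO(G)}$, which is strictly weaker than honest $\cO(U^+_F)$-invariance because $\pi$ has a nontrivial kernel. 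Upgrading it is the heart of the matter, and the natural device is the two-sided structure of $W$: I would try to show that $\ker\sigma$ is stable under $\beta_{F,F_1}$, so that the right coaction descends to a right $U^+_{F_1}$-coaction $\beta_Z$ on $Z$; then $Z$ becomes a bigalois extension, $\omega$ can be rewritten as the manifestly positive right average $(\iota\otimes h_{F_1})\beta_Z$, and positivity is immediate. Establishing this descent — in effect the canonicity of the reflection of the Galois object $Z$ — is where the bijectivity of $\kappa_r$ and the compatibility $(\iota\otimes\beta_{F,F_1})\alpha_{F,F_1}=(\alpha_{F,F_1}\otimes\iota)\beta_{F,F_1}$ on $W$ must be brought to bear, and I expect it to be the most delicate step of the argument.
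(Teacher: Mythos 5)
First, a point of reference: the paper gives no proof of this statement at all --- it is quoted verbatim as Proposition 6.2.6 of \cite{Bi99} --- so your attempt has to be measured against Bichon's argument rather than anything in the text. Your overall strategy (produce the bi-invariant state on the universal object $W=\cO(U^+_F,U^+_{F_1})$, then transfer it to $Z$ through $\sigma$ using uniqueness of invariant functionals) has the right shape, and your computations of ergodicity, well-definedness and invariance of the candidate functionals are correct. But there is a fatal, unacknowledged gap exactly where the theorem's content lies. You assert that $\omega_0=(\iota\otimes h_{F_1})\beta_{F,F_1}$ is positive ``simply because the Haar state $h_{F_1}$ is.'' That inference is invalid: slicing the hermitian square $\beta_{F,F_1}(w)^*\beta_{F,F_1}(w)$ with $h_{F_1}$ only shows that $\omega_0(w^*w)1_W$ lies in the cone of sums of hermitian squares of $W$; to conclude that the \emph{scalar} $\omega_0(w^*w)$ is nonnegative you must know that $-1$ is not a sum of hermitian squares in $W$, i.e.\ precisely that $W$ admits a state. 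This is the very obstruction you correctly flag for $Z$ two sentences later, and it is the phenomenon the paper itself warns about (four self-adjoint idempotents summing to $-1$ in $\cA(Hom(K_5,K_4))$). The real content of Bichon's result is establishing that $\cO(U^+_F,U^+_{F_1})$ genuinely admits states, via an explicit Hilbert-space realization (in later language, a unitary fiber functor as in \cite{BiDeVa06}); without that input your $\omega_0$ is only a unital invariant functional, and there is nothing positive to transfer.

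Second, the step you yourself call ``the heart of the matter'' --- upgrading the $\pi$-relative identity $(\pi\otimes\omega\sigma)\alpha_{F,F_1}=\omega\sigma(\cdot)1_{\cO(G)}$ to honest invariance so that uniqueness identifies $\omega\circ\sigma$ with $\omega_0$, by descending $\beta_{F,F_1}$ to a right coaction on $Z$ --- is stated as a plan, not carried out. You are right that this is where the hypothesis on $\sigma$ must be used and that the naive argument fails (since $\ker\pi\neq 0$, $\pi$-relative invariance does not determine the functional: the map $(h_G\pi\otimes\iota)\alpha_{F,F_1}$ is not a projection onto $\bC 1_W$). But a proof in which the self-identified crux is replaced by ``I would try to show'' is not a proof. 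Closing it requires either the general Hopf--Galois machinery (every faithfully flat left Galois extension carries a canonical right coaction of its Ehresmann--Schauenburg Hopf algebra, which the hypothesis exhibits as a CQG-algebra quotient of $\cO(U^+_{F_1})$) or a direct verification that $\ker\sigma$ is a $\beta_{F,F_1}$-subcomodule. As it stands, both the positivity input and the transfer mechanism are missing, and together they constitute essentially all of the theorem.
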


\section{Quantum isomorphisms of graphs and bigalois extensions}\label{se.bigal}

The aim of this section is to show that a quantum isomorphism between two graphs $X$ and $Y$ is nothing other than a (quotient of a) $\cO(G_Y)$-$\cO(G_X)$-bigalois extension in disguise.  We begin by extending the definition of the graph isomorphism game $\ast$-algebra $\cA(Iso(X,Y))$ to include quantum graphs.

\begin{definition} \label{def-linking}
  Let $X=(\cO(X),\psi_X,A_X)$ and $Y=(\cO(Y),\psi_Y,A_Y)$ be quantum graphs with $|X|=n$ and $|Y|=m$, and fix orthonormal bases $\{e_j\}$ and $\{f_i\}$ for $\cO(X)$ and $\cO(Y)$ relative to $\psi_X$ and $\psi_Y$ respectively.  Let $\cO (G_Y, G_X)$ be the universal $\ast$-algebra generated by the entries $p_{ij}$ of a unitary matrix
  \begin{equation*}
    p = [p_{ij}]_{ij}\in \cO(G_Y,G_X)\otimes B(L^2(X), L^2(Y))
  \end{equation*}
  with relations ensuring that
  \begin{equation*}
    \rho_{Y,X}:\cO(X) \to \cO(Y)\otimes \cO(G_Y,G_X); \qquad e_j\mapsto \sum_{i} f_i\otimes p_{ij} 
  \end{equation*}
  is a unital $\ast$-homomorphism satisfying
  \begin{equation}\label{eq:7}
    \rho_{Y,X}(A_X\cdot) = (A_Y\otimes\iota)\rho_{Y,X}. 
  \end{equation}
\end{definition}
Our first observation is that the above morphism $\rho_{Y,X}$, if it exists, is automatically state-preserving.

\begin{lemma}
Assume $\cO(G_Y,G_X) \ne  0$.  Then the morphism $ \rho_{Y,X}:\cO(X) \to \cO(Y)\otimes \cO(G_Y,G_X)$ is state-preserving in the sense that
\begin{equation}\label{eq:1}
      (\psi_Y\otimes\iota)\rho_{Y,X} = \psi_X(\cdot)1:\cO(X)\to \cO(G_Y,G_X).
  \end{equation}
\end{lemma}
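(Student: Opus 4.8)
The plan is to exploit the $A_X$-covariance relation \eqref{eq:7} together with the defining properties of the $\delta$-forms to force the state-preservation identity \eqref{eq:1}. The key structural fact is that the $\delta$-form $\psi_X$ can be recovered intrinsically from the multiplication and unit maps: condition \eqref{diag} in \Cref{qgraph} relates $A_X$ to $\iota$ via $m_X(A_X\otimes\iota)m_X^* = \delta^2\iota$, and more to the point the functional $\psi_X$ itself is expressible through $\eta_X^*$ and $m_X$. Concretely, I would first record that for a $\delta$-form one has $\eta_X^* = \delta^2\psi_X$ (up to the normalization fixed by $m_Xm_X^* = \delta^2\iota$ and $\eta_X^*\eta_X = 1$), so that the scalar-valued maps in play are all encoded by the Frobenius structure $(m_X,\eta_X)$ that $\rho_{Y,X}$ is required to intertwine.

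First I would show that $\rho_{Y,X}$ is not merely a $\ast$-homomorphism but also intertwines the comultiplication-type Frobenius data, i.e. that it is compatible with $m$ and $\eta$ in the appropriate sense. Since $\rho_{Y,X}$ is a unital $\ast$-homomorphism it automatically satisfies $\rho_{Y,X}m_X = (m_Y\otimes\iota)(\rho_{Y,X}\otimes\rho_{Y,X})$ (after the usual leg rearrangement) and $\rho_{Y,X}\eta_X = \eta_Y\otimes 1$. The harder half is the \emph{co}multiplicative compatibility $(m_X^*$-intertwining$)$: that $\rho_{Y,X}$ commutes with the adjoint maps $m_X^*$ and $\eta_X^*$. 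This is where the $\delta$-form hypothesis is essential — because the graph structure on both sides uses the same $\delta$, and because the universal relations defining $\cO(G_Y,G_X)$ force $p$ to be unitary, one expects that unitarity of $p$ together with \eqref{eq:7} propagates to an intertwining of $m_X^*$ with $m_Y^*$ and of $\eta_X^*$ with $\eta_Y^*$. I would derive this by taking adjoints (in the Hilbert space sense on $L^2(X),L^2(Y)$, tensored with the identity on $\cO(G_Y,G_X)$) of the homomorphism identities, using that $p$ is unitary so that $\rho_{Y,X}$ is (in a suitable sense) a co-isometry with a well-behaved adjoint.

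Once $\eta_X^*$ and $\eta_Y^*$ are intertwined, the conclusion is immediate: applying $(\iota\otimes\eta_X^*)$ versus $(\eta_Y^*\otimes\iota)$ to $\rho_{Y,X}$ and using $\eta^* = \delta^2\psi(\cdot)$ yields
\[
(\psi_Y\otimes\iota)\rho_{Y,X}(e_j) = \tfrac{1}{\delta^2}(\eta_Y^*\otimes\iota)\rho_{Y,X}(e_j) = \tfrac{1}{\delta^2}\eta_X^*(e_j)\,1 = \psi_X(e_j)1,
\]
which is exactly \eqref{eq:1}; by linearity this extends to all of $\cO(X)$. Here I am using that the $\delta$ on the two sides agree, which is implicit in the construction (both graphs must have the same cardinality $n=m$ for $p$ to be a unitary square matrix, forcing $\delta_X^2 = n = \delta_Y^2$ in the tracial normalization).

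I expect the main obstacle to be the step establishing that $\rho_{Y,X}$ intertwines the \emph{adjoint} structure maps $m^*,\eta^*$, rather than just $m,\eta$. The unital $\ast$-homomorphism property gives the covariant multiplicativity for free, but passing to $m^*$ requires genuinely using the unitarity of $p$ — one must verify that the $\ast$-algebra relations defining $\cO(G_Y,G_X)$ are strong enough to guarantee $(\iota\otimes\rho_{Y,X})\circ m_X^* $ equals the expected composite built from $m_Y^*$. If a direct abstract argument is awkward, the fallback is an explicit computation in the chosen orthonormal bases $\{e_j\},\{f_i\}$: expand $\rho_{Y,X}(e_j)=\sum_i f_i\otimes p_{ij}$, use the homomorphism property to express $\rho_{Y,X}(e_je_k)$, take the $\psi$-pairing, and invoke the unitarity relations $\sum_i p_{ij}^*p_{ik} = \delta_{jk}1$ (coming from $p^*p=1$) to collapse the sum to $\psi_X(e_je_k)1$. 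This basis-level calculation is routine once the unitarity relations are in hand, so the conceptual content really lives entirely in extracting those relations from the universal definition.
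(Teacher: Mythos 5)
Your fallback computation is, in essence, the paper's proof: the paper packages it coordinate-freely by viewing $p$ as a linear map $L^2(X)\otimes\cO(G_Y,G_X)\to L^2(Y)\otimes\cO(G_Y,G_X)$, introducing the $\cO(G_Y,G_X)$-valued sesquilinear forms $\langle\xi_1\otimes a\,|\,\xi_2\otimes b\rangle=b^*a\,\psi(\xi_2^*\xi_1)$, and computing $(\psi_Y\otimes\iota)\rho_{Y,X}(\xi)=\langle p(\xi\otimes 1)\,|\,p(1\otimes 1)\rangle=\langle p^*p(\xi\otimes 1)\,|\,1\otimes 1\rangle=\psi_X(\xi)1$. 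In your basis language this is exactly the collapse of $\sum_{i,l}\psi_Y(f_i^*f_l)\,p_{ij}^*p_{lk}$ to $\delta_{jk}1$ via orthonormality and $p^*p=1$; note you should run it on the products $e_j^*e_k$ (which span $\cO(X)$), not $e_je_k$, so that the unitarity relation $\sum_i p_{ij}^*p_{ik}=\delta_{jk}1$ is the one that appears. The only inputs are unitarity of $p$ and unitality of $\rho_{Y,X}$. Your primary route, by contrast, is both incomplete and unnecessary: showing that $\rho_{Y,X}$ intertwines the adjoint Frobenius data $m^*,\eta^*$ is a genuinely stronger assertion than the lemma (you correctly flag it as the obstacle), and none of it is needed --- the proof makes no use of the $\delta$-form axioms, of the covariance relation, or of any claim that $\delta_X=\delta_Y$ or that $|X|=|Y|$. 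Two small corrections: with the GNS conventions in force one has $\eta_X^*=\psi_X$ on the nose (since $\langle\xi,1\rangle_{L^2(X)}=\psi_X(\xi)$), not $\delta^2\psi_X$ --- a harmless slip since the factor cancels in your display --- and the equality of cardinalities is not an input to this lemma.
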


\begin{proof}
Consider the matrix $p = [p_{ij}]_{ij}\in \cO(G_Y,G_X)\otimes B(L^2(X), L^2(Y))$, viewed canonically as a linear map \[p:L^2(X) \otimes \cO(G_Y,G_X) \to L^2(Y) \otimes \cO(G_Y,G_X); \qquad p(\xi \otimes a) = \sum_{i,j} |f_i\rangle \langle e_{i}|\xi\rangle \otimes p_{ij}a. \]
It then follows that $\rho_{Y,X}(\xi) = p(\xi \otimes 1)$ for each $\xi \in L^2(X)$ (here and below we are identifying $L^2(X)$ and  $L^2(Y)$ with $\cO(X)$ and $\cO(Y)$).   Consider now the $\cO(G_Y, G_X)$-valued sesquilinear forms on $L^2(X)\otimes\cO(G_Y,G_X)$ and $L^2(Y)\otimes\cO(G_Y,G_X)$ given by 
\[
\langle \xi_1 \otimes a |\xi_2 \otimes b \rangle_{L^2(X)\otimes\cO(G_Y,G_X)} = b^*a \psi_X(\xi_2^*\xi_1) \quad \& \quad \langle \eta_1 \otimes a |\eta_2 \otimes b \rangle_{L^2(Y)\otimes\cO(G_Y,G_X)} = b^*a \psi_Y(\eta_2^*\eta_1).  
\]
Then a simple calculation using the fact that $p^*p = 1$  and $p(1\otimes 1) = \rho_{Y,X}(1) = 1 \otimes 1$ gives 
\begin{align*}
(\psi_Y \otimes \iota) \rho_{Y,X}(\xi) &=  (\psi_Y \otimes \iota) p(\xi \otimes 1) \\
&=\langle p(\xi \otimes 1) |1 \otimes 1 \rangle_{L^2(Y)\otimes\cO(G_Y,G_X)} \\
&=\langle p(\xi \otimes 1) |p(1 \otimes 1) \rangle_{L^2(Y)\otimes\cO(G_Y,G_X)} \\
&= \langle p^*p(\xi \otimes 1) |1 \otimes 1 \rangle_{L^2(X)\otimes\cO(G_Y,G_X)}\\
&=\langle(\xi \otimes 1) |1 \otimes 1 \rangle_{L^2(X)\otimes\cO(G_Y,G_X)}\\ 
&= \psi_X(\xi)1.
\end{align*}

\end{proof}

\begin{remark} \label{bgvshopf} When the two quantum graphs coincide we have $\cO(G_X, G_X) = \cO(G_X)$ (the Hopf $\ast$-algebra of polynomial functions on the quantum automorphism group $G_X$) and $\rho_{X,X} = \rho_X$.  For classical graphs $X, Y$, we have $\cO(G_Y, G_X) = \cA(Iso(Y,X))$.  Indeed, the fact that $\rho_{Y,X}$ is a unital $\ast$-homomorphism intertwining the quantum adjacency matrices $A_X$ and $A_Y$ says exactly that the unitary matrix $p = [p_{ij}]$ satisfies $(1 \otimes A_Y)p = p(1 \otimes A_X)$ and has entries which are self-adjoint projections satisfying $\sum_i p_{ji} = 1 = \sum_j p_{ji}$, $p_{ji}p_{jl} = \delta_{il}p_{ji}$, and $p_{ij}p_{lj} = \delta_{i}p_{ij}$.  Compare with Proposition \ref{Iso-alg} and Remark \ref{rem-iso}. See also \cite{amrssv, Lu17}.
\end{remark}

With the above in mind, we now provide a natural extension of the notion of quantum isomorphism to our quantum graphs.  Compare with \cite{MuRuVe18}.

\begin{definition}
Let $X,Y$ be quantum graphs.  We say that $X$ is {\it algebraically quantum isomorphic}
to $Y$ if $\cO(G_Y,G_X) \ne 0$, and write $X \cong_{A^*}Y$.  If $\cO(G_Y,G_X) $ admits a non-zero C$^\ast$-representation, then we say that  $X$ is {\it C$^\ast$-algebraically quantum isomorphic}
to $Y$, and write $X \cong_{A^*}Y$.  Finally, we say $X \cong_{qc}Y$ if $\cO(G_Y,G_X)$ admits a tracial state (following the existent notation for classical graphs).
\end{definition}

For the remainder of the present discussion we fix two quantum graphs $X, Y$ as above and assume that the $\ast$-algebra $\cO (G_Y, G_X)$ is non-zero.  Our aim is to show that $\cO (G_Y, G_X)$ admits a natural structure as a $\cO(G_Y)$-$\cO(G_X)$ bigalois extension.  

Consider the comodule-algebra structure map
\begin{equation*}
  \rho_{Y}:\cO(Y)\to \cO(Y)\otimes \cO(G_Y). 
\end{equation*}
By the universality of $\rho_{Y,X}$, the composition 
\begin{equation*}
  (\rho_Y\otimes\mathrm{id})\circ\rho_{Y,X}:\cO(X)\to \cO(Y)\otimes \cO(G_Y)\otimes \cO(G_Y,G_X)
\end{equation*}
must factor as $(\id \otimes\alpha)\circ \rho_{Y,X}$ for a unique $*$-algebra morphism
\begin{equation*}
  \alpha:\cO(G_Y,G_X)\to \cO(G_Y)\otimes \cO(G_Y,G_X)
\end{equation*}
given simply by
\begin{equation*}
\alpha(p_{ij}) = \sum_k u_{ik} \otimes p_{kj}, 
\end{equation*}
where $u = [u_{ij}]$ is the fundamental representation of $\cO(G_Y)$.

Similarly, $\cO(G_Y,G_X)$ has a right $\cO(G_X)$ $\ast$-comodule algebra structure given by
\[\beta: \cO(G_Y,G_X) \rightarrow \cO(G_Y,G_X) \otimes \cO(G_X); \qquad 
\beta(p_{ij}) = \sum_k p_{ik} \otimes v_{kj},
\] 
where $v = [v_{ij}]$ is the fundamental representation of $\cO(G_X)$.  It is also clear that $\cO (G_Y, G_X)$ is an $\cO (G_Y)-\cO(G_X)$ bicomodule with respect to $\alpha$ and $\beta$.

Continuing in the same vein, we can define ``cocomposition'' $\ast$-morphisms
\begin{align*}
\gamma_Y: \cO(G_Y) \to \cO(G_Y,G_X) \otimes \cO(G_X,G_Y); \qquad \gamma_Y(u_{ij}) = \sum_k p_{ik} \otimes q_{ki}\\
\gamma_X : \cO(G_X) \to \cO(G_X,G_Y) \otimes \cO(G_Y,G_X); \qquad \gamma_X(v_{ij}) = \sum_k q_{ik} \otimes p_{kj} \\
\end{align*}
where $q = [q_{ij}]$ is the matrix of generators of $\cO(G_X,G_Y)$.  For example, to construct $\gamma_Y$, we consider the morphism
\[
(\rho_{Y,X} \otimes \iota)\rho_{X,Y}:\cO(Y) \to \cO(Y) \otimes \cO(G_Y,G_X) \otimes \cO(G_X,G_Y).
\]
By universality of $\rho_Y$, there exists a unique morphism $\gamma_Y: \cO(G_Y) \to \cO(G_Y,G_X) \otimes \cO(G_X,G_Y)$ so that
\[
(\rho_{Y,X} \otimes \iota)\rho_{X,Y} = (\iota \otimes \gamma_Y)\rho_{X,Y}.\]
This map is readily seen to be given by the proposed formula above.

Thus far, the algebras $\cO(G_X)$, $\cO(G_Y)$, $\cO(G_Y,G_X)$ and $\cO(G_X,G_Y)$ together with the maps $\alpha$ and $\beta$, their analogues for $\cO(G_X,G_Y)$, and $\gamma_X$, $\gamma_Y$ constitute a two-object {\it cocategory} $\cC$ in the sense of \cite[Definition 2.1]{bch-cgr}: the four algebras are to be thought of as dual to ``spaces of morphisms'' between two objects ($x\to x$ for $\cO(G_X)$, $x\to y$ for $\cO(G_Y,G_X)$, etc.), and the $\gamma$ maps are dual to morphism composition. 

Next, we make $\cC$ into a {\it cogroupoid} in the sense of \cite[Definitions 2.3 and 2.4]{bch-cgr}: this entails defining ``coinversion'' maps
\begin{align}
  S_{X,Y}:\cO(G_X,G_Y)&\to \cO(G_Y,G_X)\label{eq:5}\\
  S_{Y,X}:\cO(G_Y,G_X)&\to \cO(G_X,G_Y)\label{eq:6},
\end{align}
which will require some preparation.

Let $F=F_X\in M_n$ and $G=F_Y\in M_m$ be matrices with the property that $Fe_i=e_i^*$ and similarly for $G$, so that $\overline{F}=F^{-1}$ and $\overline{G}=G^{-1}$. Note that the involutivity of the morphisms
\begin{align*}
  \rho_X:\cO(X) &\to \cO(X)\otimes \cO(G_X)\\
  \rho_Y:\cO(Y) &\to \cO(Y)\otimes \cO(G_Y)\\
  \rho_{Y,X}: \cO(X) &\to \cO(Y) \otimes \cO(G_Y, G_X)
\end{align*}
is equivalent, respectively, to the equalities
\begin{align*}
  (1 \otimes F)\bar{u} &= u(1 \otimes F)\\
  (1 \otimes G)\bar v &= v (1 \otimes G)\numberthis\label{eq:uvp}\\
  (1 \otimes G)\bar p &= p(1 \otimes F)
\end{align*}
We will henceforth abuse notation and write $uF$ for $u(1\otimes F)$, etc. Taking this into account, we have
\begin{equation*}
  G^{-1}pF= \overline{p} \text{ and similarly } F^{-1}qG = \overline{q}. 
\end{equation*}

It is now a simple check to see that
\begin{equation}\label{eq:3}
  f_i\mapsto \sum_j e_j \otimes p^*_{ij}
\end{equation}
defines a unital algebra homomorphism
\begin{equation}\label{eq:4}
  \cO(X)\to \cO(Y)\otimes \cO(G_Y,G_X)^{op}. 
\end{equation}
Applying $G$ to both sides of \Cref{eq:3}, writing $e_j=FF^{-1}e_j$ and using
\begin{equation*}
  Fe_j=e^*_j,\quad Gf_i=f^*_i,
\end{equation*}
it follows that \Cref{eq:3} is involutive with respect to the modified $*$-structure $\star$ on $\cO(G_Y,G_X)^{op}$ given by
\begin{equation*}
  (p^*)^{\star} = (F^{-1}p^* G)^t  
\end{equation*}
(the `$t$' superscript denoting the transpose). The defining universality property of $\cO(G_X,G_Y)$ then implies that the morphism \Cref{eq:4} given by \Cref{eq:3} factors as
\begin{equation*}
  (\iota\otimes S_{X,Y})\rho_{X}
\end{equation*}
for a conjugate-linear anti-morphism \Cref{eq:5}. $S_{Y,X}$ is defined similarly, and in summary we have 
\begin{align*}
  S_{X,Y}:\quad q&\mapsto p^*,\quad q^*\mapsto G^t \overline{p} F^{-t} \\
  S_{Y,X}:\quad p&\mapsto q^*,\quad p^*\mapsto F^t \overline{p} G^{-t}
\end{align*}
where the `$t$' superscript means `transpose' while `$-t$' denotes `transpose inverse'.

The morphisms \Cref{eq:5,eq:6} enrich the above-mentioned cocategory $\cC$ to a connected {\it cogroupoid} in the sense of \cite[Definitions 2.3 and 2.4]{bch-cgr}.

We are now ready for the main result of this section.




\begin{theorem} \label{non-zero} 
If $\cO (G_Y, G_X)$ is non-zero, then $(\cO(G_Y, G_X), \alpha, \beta)$ is a $\cO(G_Y)$-$\cO(G_X)$-bigalois extension.
\end{theorem}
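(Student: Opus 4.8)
The plan is to check the three conditions defining an $\cO(G_Y)$-$\cO(G_X)$-bigalois extension for $Z:=\cO(G_Y,G_X)$: that $(Z,\alpha,\beta)$ is an $\cO(G_Y)$-$\cO(G_X)$-bicomodule algebra, and that the Galois maps $\kappa_l\colon Z\otimes Z\to\cO(G_Y)\otimes Z$ and $\kappa_r\colon Z\otimes Z\to Z\otimes\cO(G_X)$ are bijective. The bicomodule-algebra compatibility $(\iota\otimes\beta)\alpha=(\alpha\otimes\iota)\beta$ is immediate on the generators $p_{ij}$, both sides equalling $\sum_{k,l}u_{ik}\otimes p_{kl}\otimes v_{lj}$; this is just the coassociativity of cocomposition in the cogroupoid $\cC$ built above, and has effectively been recorded already. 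So the substance of the theorem is the bijectivity of $\kappa_l$ and $\kappa_r$.

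The conceptual route is to invoke Bichon's cogroupoid machinery \cite{bch-cgr,Bi99,BiDeVa06}: the four algebras together with the cocompositions $\gamma_X,\gamma_Y$ (and their $\alpha,\beta$ restrictions) and the coinversions $S_{X,Y},S_{Y,X}$ form a cogroupoid, and in a cogroupoid every nonzero hom-algebra is automatically a bigalois extension over the corresponding ``endomorphism'' Hopf $\ast$-algebras --- here $Z=\cO(G_Y,G_X)$ over $\cO(G_Y)$ and $\cO(G_X)$. To apply this cleanly one needs the relevant component of $\cC$ to be connected, i.e.\ also $\cO(G_X,G_Y)\neq 0$. This I would extract from the coinversions: the composite $S_{X,Y}\circ S_{Y,X}$ is a $\ast$-automorphism of $Z$ acting on generators by $p\mapsto G^{t}\overline{p}\,F^{-t}$, which squares to the identity because $\overline{F}=F^{-1}$ and $\overline{G}=G^{-1}$; being bijective it forces $S_{Y,X}\colon\cO(G_Y,G_X)\to\cO(G_X,G_Y)$ to be injective, so $\cO(G_X,G_Y)\neq0$ whenever $\cO(G_Y,G_X)\neq0$.

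Equivalently, and more concretely, I would read the inverse of each Galois map off the coinversion data. The translation map $\tau_l:=(\iota\otimes S_{X,Y})\gamma_Y\colon\cO(G_Y)\to Z\otimes Z$ is given on generators by $\tau_l(u_{ij})=\sum_k p_{ik}\otimes p_{jk}^{*}$, and I claim $\kappa_l^{-1}(u_{ij}\otimes z)=\sum_k p_{ik}\otimes p_{jk}^{*}z$. Using $\alpha(p_{ik})=\sum_l u_{il}\otimes p_{lk}$ and $pp^{*}=1$ one gets $\kappa_l\kappa_l^{-1}=\iota$, while $p^{*}p=1$ gives $\kappa_l^{-1}\kappa_l=\iota$; symmetrically $\kappa_r^{-1}(z\otimes v_{ij})=\sum_k z\,p_{ki}^{*}\otimes p_{kj}$ inverts $\kappa_r$. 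The key point is that $\tau_l$ is well defined precisely because it is assembled from the already-constructed cogroupoid morphisms $\gamma_Y$ and $S_{X,Y}$, which encode not only the unitarity of $p$ but also its graph-covariance $(1\otimes A_Y)p=p(1\otimes A_X)$; granting that, the verification that these formulas are mutually inverse to $\kappa_l,\kappa_r$ collapses to the two unitarity identities.

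Accordingly, I expect the genuine work to lie not in the final bijectivity check --- which is mechanical once the inverses are in hand --- but in the construction of the cogroupoid itself, namely in guaranteeing that $\gamma_X,\gamma_Y,S_{X,Y},S_{Y,X}$ respect all the defining relations (unitarity together with the $A_X,A_Y$-intertwining), a point already settled in the preparatory discussion via the universal properties of the algebras $\cO(G_X),\cO(G_Y)$. I would therefore present the explicit inverses $\kappa_l^{-1},\kappa_r^{-1}$ as the main argument and cite Bichon's cogroupoid theorem as the structural reason the construction is forced to work.
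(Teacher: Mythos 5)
Your proposal is correct and follows essentially the same route as the paper: both reduce the statement to Bichon's connected-cogroupoid result \cite[Proposition 2.8]{bch-cgr}, and your explicit formulas $\kappa_l^{-1}(u_{ij}\otimes z)=\sum_k p_{ik}\otimes p_{jk}^*z$ and $\kappa_r^{-1}(z\otimes v_{ij})=\sum_k z\,p_{ki}^*\otimes p_{kj}$ are exactly the entrywise forms of the paper's inverses $\eta_l=(\iota\otimes m)(\iota\otimes S_{X,Y}\otimes\iota)(\gamma_Y\otimes\iota)$ and $\eta_r=(m\otimes\iota)(\iota\otimes S_{X,Y}\otimes\iota)(\iota\otimes\gamma_X)$. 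Your extra observation that $S_{X,Y}\circ S_{Y,X}$ is an involution, hence $\cO(G_X,G_Y)\neq 0$ whenever $\cO(G_Y,G_X)\neq 0$, supplies the connectedness hypothesis that the paper asserts without spelling out.
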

\begin{proof}
By \cite[Proposition 2.8]{bch-cgr} this is an immediate consequence of $\cC$ being a connected cogroupoid.  More precisely, the arguments therein show that the relevant linear maps     \begin{align*}
 &\kappa_l : \cO(G_Y, G_X) \otimes \cO(G_Y,G_X) \to \cO(G_Y) \otimes \cO(G_Y, G_X); \qquad \kappa_l(x \otimes y) = \alpha(x)(1 \otimes y) \\
 &\kappa_r: \cO(G_Y, G_X) \otimes \cO(G_Y,G_X)\to  \cO(G_Y,G_X) \otimes \cO(G_X); \qquad \kappa_r(x \otimes y) = (x \otimes 1)\beta(y) 
 \end{align*}
are bijective with explicit inverses given by 
 \begin{align*}
\eta_l : \cO(G_Y) \otimes \cO(G_Y,G_X) \rightarrow \cO(G_Y,G_X) \otimes \cO(G_Y,G_X); \qquad  \eta_l = (\iota \otimes m)(\iota \otimes S_{X,Y} \otimes \iota)(\gamma_Y \otimes \iota) \\
\eta_r : \cO(G_Y,G_X) \otimes \cO(G_X) \rightarrow \cO(G_Y,G_X) \otimes \cO(G_Y,G_X); \qquad  \eta_r = (m \otimes \iota)(\iota \otimes S_{X,Y} \otimes \iota)(\iota \otimes \gamma_X)
\end{align*}
where $m$ denotes the multiplication map in the appropriate algebra.
\end{proof}

\Cref{non-zero} puts some of the material in \cite{Lu17} in a category-theoretic perspective. To make sense of this, we need to recall

\begin{definition}\label{def.orb}
  The {\it quantum orbital algebra} of a (quantum) graph $X$ is the endomorphism algebra of $\cO(X)$ as a comodule over $\cO(G_X)$. That is, the algebra of intertwiners $\text{Mor}(u,u) \subset B(L^2(X))$, where $u$ denotes the fundamental representation of $G_X$.
\end{definition}

In the case of classical graphs, this is not quite \cite[Definition 3.10]{Lu17}, but is equivalent to it by \cite[Theorem 3.11]{Lu17}.  Note that \cite[Theorem 4.2]{Lu17} follows from \Cref{non-zero}: the former says that a quantum isomorphism between two (classical) graphs entails an isomorphism between their quantum orbital algebras that identifies the respective adjacency matrices. Since by \Cref{non-zero} we have a category equivalence
\begin{equation*}
  \text{Rep}(G_X) \simeq \text{Rep}(G_Y)
\end{equation*}
identifying $\cO(X)$ on the left to $\cO(Y)$ on the right, this implements an isomorphism between the endomorphism algebras of these two objects in the respective categories (i.e. the quantum orbital algebras). Furthermore, the fact that this isomorphism identifies $A_X$ and $A_Y$ follows from \Cref{eq:7}.

\subsection{Existence of states on $\cO(G_Y,G_X)$}\label{subse.states}
Our next result shows that $\cO(G_Y,G_X)$ always admits a faithful bi-invariant state (and hence a C$^\ast$-completion) whenever this algebra is non-zero. 

\begin{theorem} \label{traces-exist}
Let $X, Y$ be quantum graphs.  If $\cO(G_Y,G_X) \ne 0$, then there exists a faithful bi-invariant state $\omega: \cO(G_Y,G_X) \to \bC$, and therefore we have a monoidal equivalence of compact quantum groups $G_X \sim^{\text{mon}}G_Y$.  Moreover, $\omega$ is tracial if and only if both $G_X$ and $G_Y$ are of Kac type.  
\end{theorem}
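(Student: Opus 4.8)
The plan is to reduce the whole statement to the existence of a single invariant state and then let the general bigalois machinery do the rest. By \Cref{non-zero} we already know that $(\cO(G_Y,G_X),\alpha,\beta)$ is an $\cO(G_Y)$-$\cO(G_X)$-bigalois extension, and in particular a left $\cO(G_Y)$-Galois extension. Once a left-invariant state is exhibited on it, \Cref{invariant-states}(2) upgrades this to a bi-invariant state $\omega$, \Cref{invariant-states}(1) guarantees that $\omega$ is unique and faithful, and \Cref{thm-mon-bg} then yields the monoidal equivalence $G_X \sim^{\text{mon}} G_Y$. The final tracial assertion is immediate from \Cref{invariant-states}(3). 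Thus essentially the entire content of the theorem is concentrated in producing the left-invariant state, and for this I would invoke Bichon's criterion \Cref{sufficient-state}.

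To set up \Cref{sufficient-state}, recall that $G_Y$ is a compact matrix quantum group with $G_Y < U^+_{F_Y}$, where $F_Y=G$ is the conjugation matrix appearing in \Cref{eq:uvp}; write $\pi:\cO(U^+_{F_Y}) \to \cO(G_Y)$ for the canonical surjection carrying the fundamental representation $\tilde u$ of $U^+_{F_Y}$ to $u$. I would take $F_1 = F_X$ and define $\sigma:\cO(U^+_{F_Y},U^+_{F_X}) \to \cO(G_Y,G_X)$ on generators by $z_{ij} \mapsto p_{ij}$. This is a well-defined surjective $\ast$-homomorphism: the defining relations of $\cO(U^+_{F_Y},U^+_{F_X})$ require that $z$ and $F_Y\bar z F_X^{-1}$ be unitary, and both hold for $p$, since $p$ is unitary by construction while $F_Y\bar p F_X^{-1}=p$ is precisely the involutivity relation $(1\otimes G)\bar p = p(1\otimes F)$ recorded in \Cref{eq:uvp}; surjectivity is clear because the $p_{ij}$ generate $\cO(G_Y,G_X)$.

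It then remains only to check the comodule-intertwining hypothesis $\alpha\circ\sigma = (\pi\otimes\sigma)\circ\alpha_{F_Y,F_X}$, which I would verify on generators. On one side $\alpha(\sigma(z_{ij})) = \alpha(p_{ij}) = \sum_k u_{ik}\otimes p_{kj}$; on the other side $(\pi\otimes\sigma)\alpha_{F_Y,F_X}(z_{ij}) = (\pi\otimes\sigma)\bigl(\sum_k \tilde u_{ik}\otimes z_{kj}\bigr) = \sum_k u_{ik}\otimes p_{kj}$, and the two agree. Hence \Cref{sufficient-state} applies and furnishes a left-invariant state, after which the conclusions assembled in the first paragraph follow formally.

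The hard part will not be analytic but organizational: the crux is recognizing $\cO(G_Y,G_X)$ as a quotient of the ``free'' bigalois extension $\cO(U^+_{F_Y},U^+_{F_X})$ in a way compatible with the left comodule structures, so that Bichon's existence criterion can be brought to bear. All of the genuinely difficult work — bijectivity of the Galois maps, uniqueness and faithfulness of invariant states, the dictionary between bigalois extensions and monoidal equivalence, and the Kac-type characterization of traciality — is already packaged in \Cref{non-zero,invariant-states,thm-mon-bg,sufficient-state}, so once the criterion is verified the theorem follows with no further effort.
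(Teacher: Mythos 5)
Your proof is correct and follows essentially the same route as the paper: both recognize $\cO(G_Y,G_X)$ as a quotient of $\cO(U^+_{F_Y},U^+_{F_X})$ compatibly with the left $\cO(G_Y)$-comodule structure (using the involutivity relation $(1\otimes F_Y)\bar p = p(1\otimes F_X)$ to verify the defining relations), apply Bichon's criterion \Cref{sufficient-state} together with \Cref{invariant-states} to get the faithful bi-invariant state and the Kac-type characterization of traciality, and conclude the monoidal equivalence from \Cref{thm-mon-bg}. Your write-up merely spells out the verification of $\alpha\circ\sigma=(\pi\otimes\sigma)\alpha_{F_Y,F_X}$ that the paper leaves implicit.
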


\begin{proof}
Recall the matrices $F=F_X$ and $G=F_Y$ from the preceding discussion. The equations \Cref{eq:uvp} imply that we have surjective $\ast$-homomorphisms $\pi:\cO(U^+_{F_Y}) \to \cO(G_Y)$ and $\sigma: \cO(U^+_{F_Y}, U^+_{F_X}) \to \cO(G_Y, G_X)$ satisfying  $\alpha \circ \sigma = (\pi \otimes \sigma)\alpha_{F_Y,F_X}$.  By \Cref{sufficient-state,invariant-states} $\cO(G_Y, G_X)$ then admits a $\cO(G_Y)$-$\cO(G_X)$-invariant state (which is tracial precisely when $G_Y,G_X$ are both of Kac type).  By \Cref{thm-mon-bg}, $G_X \sim^{\text{mon}}G_Y$.
\end{proof}

\begin{corollary} \label{th:qiso-monoidal}
Let $X$ and $Y$ be quantum graphs.  Then the following are equivalent.
\begin{enumerate}
\renewcommand{\labelenumi}{(\arabic{enumi})}
\item $X \cong_{A^*}Y$.
\item $X \cong_{C^*}Y$.
\end{enumerate}
Moreover, if both $X$ and $Y$ are equipped with tracial $\delta$-forms, then $X \cong_{qc}Y$.
\end{corollary}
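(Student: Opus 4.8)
The plan is to read everything off the two structural results already established, \Cref{non-zero} (which turns non-vanishing of $\cO(G_Y,G_X)$ into a bigalois structure) and \Cref{traces-exist} (which supplies a faithful bi-invariant state on such an extension), together with the abstract dictionary of \Cref{invariant-states}. The implication $(2)\Rightarrow(1)$ requires no machinery at all: a non-zero $\ast$-representation of $\cO(G_Y,G_X)$ on a Hilbert space can only exist if the algebra is itself non-zero, which is exactly the definition of $X\cong_{A^*}Y$.

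For the converse $(1)\Rightarrow(2)$, I would argue as follows. Assuming $\cO(G_Y,G_X)\neq 0$, \Cref{non-zero} exhibits it as an $\cO(G_Y)$-$\cO(G_X)$-bigalois extension, and \Cref{traces-exist} equips it with a faithful bi-invariant state $\omega$. By the equivalence $(a)\Leftrightarrow(c)$ of \Cref{invariant-states}(2), the existence of such a state is the same as the existence of a non-zero $\ast$-representation of $\cO(G_Y,G_X)$ by bounded operators on a Hilbert space; concretely this is the GNS representation of $\omega$, which is faithful (hence non-zero, since the algebra is non-zero) and whose boundedness is guaranteed by the C$^\ast$-completions produced in \Cref{invariant-states}(3). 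This is precisely the assertion $X\cong_{C^*}Y$, so the loop closes.

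For the \emph{Moreover} clause it suffices, by the final sentence of \Cref{traces-exist}, to verify that $G_X$ and $G_Y$ are of Kac type whenever $\psi_X$ and $\psi_Y$ are tracial $\delta$-forms; the bi-invariant state $\omega$ above is then automatically tracial, and the existence of a tracial state on $\cO(G_Y,G_X)$ is by definition $X\cong_{qc}Y$. To obtain the Kac-type property I would unwind the matrix $F_X$, defined by $F_X e_i=e_i^*$ on the GNS basis, in the tracial case. Writing $\cO(X)=\bigoplus_i M_{n(i)}$ with $\psi_X=\bigoplus_i\frac{n(i)}{|X|}\mathrm{Tr}$, the suitably normalized matrix units $e^{(i)}_{kl}$ form an orthonormal basis of $L^2(X)$, and the $\ast$-operation merely transposes indices within each block, $e^{(i)}_{kl}\mapsto e^{(i)}_{lk}$. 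Hence $F_X$ is a real permutation (flip) matrix, in particular unitary, so $F_X^*F_X$ is scalar and $U^+_{F_X}$ is of Kac type. Since the relations \Cref{eq:uvp} give $G_X<U^+_{F_X}$ (via a surjection $\cO(U^+_{F_X})\to\cO(G_X)$ intertwining the antipodes), and $S^2=\iota$ descends along such a surjection, $G_X$ is of Kac type; the same argument applies to $G_Y$.

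I expect the only genuine content to lie in this last step — pinning down that a tracial $\delta$-form forces $F_X$ to degenerate to a unitary and hence $G_X$ to be of Kac type. The equivalence $(1)\Leftrightarrow(2)$ is formal once \Cref{non-zero,traces-exist} and \Cref{invariant-states} are in place, and the passage from ``$\omega$ tracial'' to $X\cong_{qc}Y$ is immediate from the definitions. The remaining care is in the identification $G_X<U^+_{F_X}$ (already implicit in the proof of \Cref{traces-exist}) and in the inheritance of Kac-type by quantum subgroups; everything else is bookkeeping.
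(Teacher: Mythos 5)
Your proposal is correct and follows essentially the same route as the paper: $(2)\Rightarrow(1)$ is definitional, $(1)\Rightarrow(2)$ is read off \Cref{traces-exist} (via \Cref{invariant-states}), and the tracial case reduces to $G_X,G_Y$ being of Kac type. The only difference is that you supply an actual argument (the flip matrix $F_X$ is unitary for a tracial $\delta$-form, so $G_X<U^+_{F_X}$ is Kac) for a claim the paper merely asserts parenthetically; that argument is sound and a worthwhile addition.
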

\begin{proof}
  $(2) \implies (1)$ by definition, while the converse follows from \Cref{traces-exist}. The same theorem also shows that when $G_X$ and $G_Y$ are Kac (as is the case if $X$ and $Y$ are equipped with tracial $\delta$-forms) $\cO(G_Y,G_X)$ is equipped with a trace. This proves the last claim.
\end{proof}

Restricting our attention to classical graphs $X$, and $Y$ we arrive at one of the main results of the paper.

\begin{theorem} \label{th.cls-grph}
Let $X$ and $Y$ be classical graphs.  Then the following conditions are equivalent.
\begin{enumerate}
\item \label{1} $X \cong_{A^*} Y$.
\item \label{2} $X \cong_{qc} Y$.
\item \label{3} $X \cong_{C^*} Y$.
\end{enumerate}

\end{theorem}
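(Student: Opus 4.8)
The plan is to bootstrap the entire equivalence from \Cref{th:qiso-monoidal}, whose hypotheses the classical setting satisfies on the nose. The key point is that a classical graph, regarded as a quantum graph in the sense of \Cref{qgraph}, has a \emph{commutative} function algebra $\cO(X) = \bC^{V(X)}$ carrying the uniform probability measure as its $\delta$-form; because $\cO(X)$ is abelian, every state on it---in particular $\psi_X$---is tracial. Thus both $X$ and $Y$ are equipped with tracial $\delta$-forms, and (as recorded in the proof of \Cref{th:qiso-monoidal}) their quantum automorphism groups $G_X, G_Y$ are consequently of Kac type. I would also invoke \Cref{bgvshopf}, which identifies $\cO(G_Y,G_X) = \cA(Iso(Y,X))$ in the classical case, so that the quantum-graph meanings of $\cong_{A^*}$, $\cong_{C^*}$ and $\cong_{qc}$ agree with the game-theoretic ones appearing in the statement.

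First I would dispose of the formal implications $(2) \Rightarrow (3) \Rightarrow (1)$. Indeed, if $\cO(G_Y,G_X)$ carries a tracial state, then its GNS construction yields a nonzero $\ast$-representation on a Hilbert space (the unit is not in the kernel), giving $X \cong_{C^*} Y$; and a nonzero $C^\ast$-representation forces $\cO(G_Y,G_X) \ne 0$, giving $X \cong_{A^*} Y$. These are exactly the trivial links $qc \Rightarrow C^\ast \Rightarrow A^\ast$ of the chain of implications noted earlier for classical graphs, and they require no new input.

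To close the cycle I would invoke \Cref{th:qiso-monoidal}. Its stated equivalence gives $(1) \Leftrightarrow (3)$ directly, namely $X \cong_{A^*} Y \Leftrightarrow X \cong_{C^*} Y$. For the one remaining arrow $(1) \Rightarrow (2)$ I would use the ``moreover'' clause of that corollary: assuming $X \cong_{A^*} Y$, \Cref{traces-exist} produces a faithful bi-invariant state on $\cO(G_Y,G_X)$, and since $X, Y$ carry tracial $\delta$-forms---equivalently $G_X, G_Y$ are Kac---this state is tracial, which is precisely the assertion $X \cong_{qc} Y$. Combining the two blocks yields the equivalence of $(1)$, $(2)$ and $(3)$. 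The genuine work has already been carried out upstream in \Cref{traces-exist}, where Bichon's criterion (\Cref{sufficient-state}) is used to manufacture the invariant state; granting that theorem, the present argument is pure bookkeeping, the only substantive observation being that commutativity of $\cO(X)$ and $\cO(Y)$ forces the tracial $\delta$-form (Kac type) hypothesis under which the trace is automatic.
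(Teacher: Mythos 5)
Your proposal is correct and follows the paper's own route exactly: the paper likewise deduces the theorem as an immediate consequence of \Cref{th:qiso-monoidal}, the only observation needed being that classical graphs carry tracial $\delta$-forms (so the ``moreover'' clause yields the $qc$ conclusion), with the remaining implications $qc \Rightarrow C^\ast \Rightarrow A^\ast$ being formal. Your write-up merely makes explicit the bookkeeping that the paper leaves implicit.
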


\begin{proof}
  This is an immediate consequence of \Cref{th:qiso-monoidal}.
\end{proof}

\begin{remark}
The above theorems show that the algebras $\cO(G_Y,G_X)$ are non-zero in the category of $\ast$-algebras if and only if $\cO(G_Y,G_X)$ admits a non-zero representation as bounded operators on Hilbert space.  In other words, the $\ast$-algebra and C$^\ast$-algebra worlds coincide for this class of examples.

One illustration of the distinction between $*$-algebras and C$^*$-algebras is in the behavior of projections (i.e. self-adjoint idempotents). In a C$^\ast$-algebra, if one has self-adjoint idempotents $\{p_1,...,p_N \}$ satisfying $p_1+ \cdots + p_N =1,$ then necessarily $p_i p_j=0, \, \forall i \ne j$.

The situation is very different for $\ast$-algebras, however. While triples of projections with sum $1$ still commute, quadruples need not. This can be seen, for instance, from \cite[\S 2.1]{bg-diamond}. There, the ring generated by three idempotents $a$, $b$ and $c$ whose sum is also idempotent ($d=1-(a+b+c)$ thus being idempotent as well) is shown to have a basis as a free abelian group consisting of those monomials in $a$, $b$ and $c$ such that
\begin{itemize}
\item no letter appears twice in succession;
\item $b$ never appears to the left of $a$.
\end{itemize}
This makes it clear that $ab\ne 0$. One can simply reprise this example over $\bC$ (i.e. working with complex algebras rather than rings) and superimpose a $*$-structure by requiring that $a$, $b$ and $c$ be self-adjoint. The result is a complex $*$-algebra with four non-orthogonal projections adding up to $1$. 

In fact, even more pathological examples exist. In \cite{HMPS17} a machine-assisted proof is given that the $\ast$-algebra $\cl A(Hom(K_5, K_4))$ is non-trivial.  This is a $\ast$-algebra with generators
\begin{equation*}
 \{ e_{x,a}: 1 \le x \le 5, 1 \le a \le 4 \} 
\end{equation*}
satisfying the usual relations, $e_{x,a}^* = e_{x,a}^2 = e_{x,a}$, $e_{x,a} e_{x,b} =0,$ when $a \ne b$, $\sum_{a=1}^4 e_{x,a} =1, \, \forall x$, and the relations, $e_{x,a}e_{y,a}=0, \, x \ne y$, prescribed by the graphs.

If one sets $p_a = \sum_x e_{x,a}$, then $p_a^2 = p_a = p_a^*$, for $1 \le a \le 4$. Hence, $q_a = 1 - p_a$ are also self-adjoint idempotents.  However,
\begin{equation*}
  \sum_{a=1}^4 q_a = 4 \cdot 1 - \sum_{a=1}^4 p_a = 4 \cdot 1 - \sum_{x=1}^5 \sum_{a=1}^4 e_{x,a} = 4 \cdot 1 - 5 \cdot 1 = -1.  
\end{equation*}
Thus, it is possible to have 4 self-adjoint idempotents sum to $-1$ in a $\ast$-algebra.
\end{remark}

\subsection{From monoidal equivalence to quantum isomorphism}
\Cref{traces-exist,th:qiso-monoidal} show that for a pair of quantum graphs $X,Y$ the condition $X \cong_{A^*} Y$ implies that the corresponding quantum automorphism groups $G_X$ and $G_Y$ are monoidally equivalent. Based on this connection between quantum isomorphism and monoidal equivalence, it is natural to ask whether the converse holds, namely: {\it Does $G_X \sim^{\text{mon}}G_Y \implies X \cong_{A^*} Y$?} 

The answer to this question turns out to be `no' in general. For example, take $X = K_n$ and $Y = \overline{K_n}$.  In this case we have $G_X = G_Y = S_n^+$ (so $G_X$ and $G_Y$ are in particular monoidally equivalent as compact quantum groups), but it is clear from the definitions that $\cA(Iso(X,Y)) = \cO(G_X, G_Y) = 0$.  The intuitive reason for this is that the trivial monoidal equivalence taking $\text{Rep}(S_n^+)$ to itself does not map the adjacency matrix $A_X$ to $A_Y$.  In fact, {\it there cannot exist any} any monoidal equivalence $\varphi: \text{Rep}(S_n^+) \to \text{Rep}(S_n^+) $ satisfying $\varphi(A_X) = A_Y$.  This is because such a monoidal equivalence would force $A_X$ and $\varphi(A_X) = A_Y$ to be isospectral.

On the other hand, the following theorem shows that whenever we have a quantum group $G$ monoidally equivalent to $G_X$, it is possible to find a quantum graph $Y$ so that $G = G_Y$ and $X \cong_{A^*} Y$.



\begin{theorem} \label{stability-monoidal}
Let $X = (\cO(X),\psi_X, A_X)$ be a quantum graph and $G_X$ its quantum automorphism group.  Let $G$ be another compact quantum group that is monoidally equivalent to $G_X$.  Then there exists a quantum graph $Y = (\cO(Y), \psi_Y,A_Y)$ so that $G = G_Y$, and we have a quantum isomorphism $X \cong_{A^*}Y$.
\end{theorem}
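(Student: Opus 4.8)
The plan is to transport the whole quantum-graph structure of $X$ across the monoidal equivalence and then recognize the result as the quantum automorphism group $G_Y$ of a quantum isomorphic graph $Y$. Fix a unitary tensor functor $\varphi:\text{Rep}(G_X)\to\text{Rep}(G)$ implementing $G_X\sim^{\text{mon}}G$ in the sense of \Cref{monequiv}. The key observation is that every piece of data defining $X$ already lives inside $\text{Rep}(G_X)$: the comodule $\cO(X)$ is an object whose underlying space is $L^2(X)$ with the fundamental representation $u$; the multiplication $m_X$, the unit $\eta_X$, their adjoints, and the quantum adjacency matrix $A_X$ are all intertwiners (the first two because $\rho_X$ is a unital $\ast$-homomorphism, and $A_X$ by the $A_X$-covariance of \Cref{def-qaut}); and the defining identities of a quantum graph in \Cref{qgraph}, together with the $\delta$-form relation $m_Xm_X^* = \delta^2\iota$, are equalities between composites of these morphisms. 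I would therefore set $L^2(Y):=\varphi(\cO(X))$, $m_Y:=\varphi(m_X)$, $\eta_Y:=\varphi(\eta_X)$ and $A_Y:=\varphi(A_X)$. Because $\varphi$ preserves composition, tensor products, the tensor unit and adjoints, each of the above identities is preserved verbatim; in particular $m_Ym_Y^*=\varphi(m_Xm_X^*)=\delta^2\iota$, so $Y$ inherits the same $\delta$.

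Next I would check that this transported data genuinely assembles into a quantum graph acted on by $G$. Applying the faithful forgetful functor $\text{Rep}(G)\to\text{Hilb}$, the triple $(L^2(Y),m_Y,\eta_Y)$ with its adjoint is a special dagger Frobenius algebra in the category of Hilbert spaces, and by the standard correspondence between such algebras and finite-dimensional C$^\ast$-algebras (cf.\ \cite{MuRuVe18} and the discussion in \Cref{se.qs}) this equips $L^2(Y)$ with the structure of a finite-dimensional C$^\ast$-algebra $\cO(Y)$ whose GNS space for $\psi_Y:=\eta_Y^*$ is $L^2(Y)$; the positivity and faithfulness needed for $\psi_Y$ to be a faithful $\delta$-form are themselves Frobenius identities that $\varphi$ transports. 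Since $A_Y=\varphi(A_X)$ is self-adjoint and the three conditions of \Cref{qgraph} are morphism identities preserved by $\varphi$, the triple $Y:=(\cO(Y),\psi_Y,A_Y)$ is a quantum graph. Finally $m_Y,\eta_Y,A_Y$ are morphisms in $\text{Rep}(G)$, i.e.\ intertwiners for $w:=\varphi(u)$, which says exactly that $\rho_Y(f_i)=\sum_j f_j\otimes w_{ji}$ defines a $\psi_Y$-preserving, $A_Y$-covariant $\cO(G)$-comodule algebra structure on $\cO(Y)$. By the universal property in \Cref{def-qaut} there is then a surjective Hopf $\ast$-algebra map $\pi:\cO(G_Y)\to\cO(G)$ with $(\pi\otimes\iota)u_Y=w$.

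To promote $\pi$ to an isomorphism, i.e.\ $G=G_Y$, I would compare intertwiner spaces via Tannaka--Krein--Woronowicz reconstruction. Since $G_X$ is the quantum automorphism group of $X$, the spaces $\text{Mor}_{G_X}(u^{\otimes k},u^{\otimes l})$ are generated, under composition, tensor product and adjoints, precisely by $m_X,\eta_X,A_X$ and the duality morphisms of $\cO(X)$; the analogous statement holds for $G_Y$ with $m_Y,\eta_Y,A_Y$. As $\varphi$ is fully faithful and unitary-monoidal and carries $u\mapsto w$ together with $(m_X,\eta_X,A_X)\mapsto(m_Y,\eta_Y,A_Y)$, it restricts to bijections $\text{Mor}_{G_X}(u^{\otimes k},u^{\otimes l})\to\text{Mor}_G(w^{\otimes k},w^{\otimes l})$ taking the first generating set onto the second. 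Comparing the two descriptions gives $\text{Mor}_G(w^{\otimes k},w^{\otimes l})=\text{Mor}_{G_Y}(u_Y^{\otimes k},u_Y^{\otimes l})$ for all $k,l$, and since a compact matrix quantum group is determined by this collection of intertwiner spaces (and $\pi$ already matches the fundamental representations), $\pi$ is an isomorphism, i.e.\ $G=G_Y$.

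It remains to prove the quantum isomorphism $X\cong_{A^*}Y$, that is $\cO(G_Y,G_X)\neq0$. Here I would pass from the abstract equivalence to the concrete linking object: by \Cref{thm-mon-bg} (and the explicit constructions of \cite[Theorem 2.3.11]{NeTu13} and \cite[Theorem 3.9]{BiDeVa06}) the equivalence $\varphi$ is realized by an $\cO(G_Y)$-$\cO(G_X)$-bigalois extension $Z$ carrying a bi-invariant state, inside which the identification $\varphi(\cO(X))=\cO(Y)$ is implemented by a canonical unitary. Reading this unitary as a matrix $[z_{ij}]$ over $Z$ produces a unital $\ast$-homomorphism $\tilde\rho:\cO(X)\to\cO(Y)\otimes Z$, $e_i\mapsto\sum_j f_j\otimes z_{ji}$, and because $\varphi$ carries $A_X$ to $A_Y$ this map satisfies $\tilde\rho(A_X\,\cdot\,)=(A_Y\otimes\iota)\tilde\rho$. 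This is exactly the data for which $\cO(G_Y,G_X)$ is universal in \Cref{def-linking}, so there is a $\ast$-homomorphism $\Phi:\cO(G_Y,G_X)\to Z$ with $\Phi(p_{ji})=z_{ji}$; as $\tilde\rho$ is unital and $Z\neq0$, the map $\Phi$ is nonzero and hence $\cO(G_Y,G_X)\neq0$. I expect this final step to be the main obstacle: the abstract monoidal equivalence alone cannot see the adjacency data (witness $K_n$ versus $\overline{K_n}$), so one must descend to the bigalois extension attached to $\varphi$ and verify that its canonical linking map intertwines $A_X$ and $A_Y$, whereas the preceding steps are a formal transport of structure.
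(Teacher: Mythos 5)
Your proposal is correct and follows essentially the same route as the paper's proof: transport $(m_X,\eta_X,\psi_X,A_X)$ through the unitary fiber functor to build $Y$, obtain $G<G_Y$ from universality and upgrade it to $G=G_Y$ by comparing intertwiner spaces generated by $\{\iota,m,\eta,A\}$, and finally realize $\cO(G_Y,G_X)\neq 0$ via the canonical unitary in the bigalois extension attached to $\varphi$ (the paper likewise invokes \cite[Theorem 2.3.11]{NeTu13} and the four intertwining relations for $z$). The only cosmetic difference is that the paper leans on \cite[Theorem 3.6.5]{De07} for the C$^\ast$-algebra structure on $\varphi(L^2(X))$ where you invoke the dagger Frobenius correspondence; these are the same argument.
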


\begin{proof}
When $X$ is a quantum complete graph, this result is already known \cite[Theorem 3.6.5]{De07}.  The proof in the case of arbitrary $X$ follows almost verbatim, so we just sketch the main ideas.  

Let $\varphi:\text{Rep}(G_X) \to \text{Rep}(G)$ be the unitary fiber functor implementing the monoidal equivalence as in Defintion \ref{monequiv}.  Put $L^2(Y) = \varphi(L^2(X))$, $d_Y = \dim(L^2(Y))$ and let $v = \varphi(u) \in M_{d_Y}(\cO(G))$ be the corresponding unitary representation of $G$ on $L^2(Y)$.  Put $m_Y = \varphi(m_X) \in \text{Mor}(v \otimes v,v)$, $\eta_Y = \varphi(\eta_X)\in \text{Mor}(1,v)$ and $\psi_Y = \eta_Y^* \in \text{Mor}(v,1)$ and $A_Y = \varphi(A_X) \in \text{Mor}(v,v)$.   Then exactly as in the proof of \cite[Theorem 3.6.5]{De07}, $L^2(Y)$ is a unital C$^\ast$-algebra with multiplication $m_Y$, unit $\eta_Y$,  involution $\sharp: \xi \mapsto \xi^\sharp = (\iota \otimes \xi^*)(m^*_Y \eta_Y)$, and $\psi_Y:L^2(Y) \to \bC$ is a $\delta$-form.  We denote this C$^\ast$-algebra by $\cO(Y)$.  Finally, consider the map $A_Y:L^2(Y) \to L^2(Y)$.  Then by definition of $\varphi$, we have 
\begin{align*}
&A_Y^* = \varphi(A_X)^* = \varphi(A_X^*) = \varphi(A_X) = A_Y, \\
& m_Y(A_Y \otimes A_Y)m_Y^* = \varphi(m_X(A_X \otimes A_X)m_X^*) =\varphi(\delta^2 A_X) =  \delta^2A_Y,\\
&(\iota \otimes \eta_Y^*m_Y)(\iota \otimes A_Y \otimes \iota)(m_Y^*\eta_Y\otimes \iota) = \varphi((\iota \otimes \eta_X^*m_X)(\iota \otimes A_X \otimes \iota)(m_X^*\eta_X\otimes \iota)) = \varphi(A_X) = A_Y,\\
&m_Y(A_Y \otimes \iota)m_Y^* = \varphi(m_X(A_X \otimes \iota)m_X^*) = \varphi (\delta^2\iota) = \delta^2 \iota,
\end{align*}
so $A_Y$ is a quantum adjacency matrix and $Y = (\cO(Y), \psi_Y, A_Y)$ is a quantum graph.

Now let $G_Y$ be the quantum automorphism group of $Y$, with fundamental representation $w \in M_{d_Y}(\cO(G_Y))$.  Then by Definition \ref{qgraph} and the construction of the morphisms $m_Y,\eta_Y, \varphi_Y, A_Y$ using the monoidal equivalence $\varphi$, there is a surjective Hopf $\ast$-homomorphism $\sigma:\cO(G_Y) \to \cO(G)$ given by $(\sigma \otimes \iota)w =v$.  In particular, $G < G_Y$ is a quantum subgroup, which implies that for any $m,n \in \bN_0$, we have $\text{Mor}(w^{\otimes m}, w^{\otimes n}) \subseteq \text{Mor}(v^{\otimes m}, v^{\otimes n})$.   To prove that in fact $G = G_Y$, it suffices to check equality in the above containments for each $m,n$ (see for example \cite[Proposition 3.5]{bcv}).  To this end, recall that by our monoidal equivalence, we have isomorphisms $\varphi: \text{Mor}(u^{\otimes m}, u^{\otimes n}) \cong \text{Mor}(v^{\otimes m}, v^{\otimes n})$.  Moreover, since (by universality of $G_X$) the space $ \text{Mor}(u^{\otimes m}, u^{\otimes n})$ is generated (in the C$^\ast$-tensor categorical sense) by the maps $\{\iota, m_X,  \eta_X, A_X\}$, it  follows that $\text{Mor}(v^{\otimes m}, v^{\otimes n})$ is also generated the images $\{\varphi(\iota), \varphi(m_X),  \varphi(\eta_X), \varphi(A_X)\} = \{\iota, m_Y,  \eta_Y, A_Y\}$.  But by the same universal reasoning, $\text{Mor}(w^{\otimes m}, w^{\otimes n})$ is generated by $\{\iota, m_Y,  \eta_Y, A_Y\}$, so we conclude that $\text{Mor}(v^{\otimes m}, v^{\otimes n}) = \varphi(\text{Mor}(u^{\otimes m}, u^{\otimes n})) \subseteq \text{Mor}(w^{\otimes m}, w^{\otimes n})$.  

Finally, it remains to show that $X \cong_{A^*} Y$.  Since we have a monoidal equivalence $\varphi:\text{Rep}(G_X) \to \text{Rep}(G_Y)$, Theorem \ref{thm-mon-bg} guarantees the existence of an $\cO(G_Y)$-$\cO(G_X)$-bigalois extension $Z$.  Moreover, from \cite[Theorem 2.3.11]{NeTu13}, one can construct a unitary operator $z \in Z \otimes B(L^2(X),L^2(Y))$ satisfying the relations
\begin{enumerate}
\item \label{one} $(1 \otimes A_Y)z = (1 \otimes \varphi(A_X))z = z(1 \otimes A_X)$. 
\item \label{two} $1 \otimes \eta_Y = 1 \otimes \varphi(\eta_X)  = z(1 \otimes \eta_X)$. 
\item \label{three} $(1 \otimes m_Y)z_{12}z_{13}= (1 \otimes \varphi(m_X))z_{12}z_{13}=  z(1 \otimes m_X)$. 
\item \label{four} $(z^*)_{12}(1 \otimes m_Y^*\eta_Y \otimes 1) =(z^*)_{12}(1 \otimes \varphi(m_X^*\eta_X) \otimes 1) = z_{13}(1 \otimes m_X^*\eta_X)$.
\end{enumerate}  
These four relations say precisely that the map $e_i \mapsto \sum_j f_j \otimes z_{ji}$ defines a unital $\ast$-homomorphism $\cO(X) \to \cO(Y) \otimes Z$ (where $(e_i)$ and $(f_j)$ are ONBs for $L^2(X)$ and $L^2(Y)$).  In particular, we obtain  a non-zero $\ast$-homomorphism $\cO(G_Y,G_X) \to Z$ given by $p \mapsto z$ (where $p$ denotes the matrix of generators of $\cO(G_Y,G_X)$).  I.e., $\cO(G_Y,G_X) \ne 0$. 
\end{proof}

\begin{remark}
With a little more work one can show that in fact $\cO(G_Y,G_X) \cong Z$ via the above homomorphism.
\end{remark}

Theorem \ref{stability-monoidal} supplies us with many easy examples of quantum isomorphic quantum graphs.

\begin{example}
Let $\delta > 0$ and let $X$ and $Y$ be quantum sets each equipped with $\delta$-forms.  Then it follows from \cite[Theorem 4.7]{DeVa10} that the quantum automorphism groups of the spaces $X$ and $Y$ are monoidally equivalent.  In view of Theorem \ref{stability-monoidal}, this is equivalent to saying that the quantum complete graphs $K_X$ and $K_Y$ are C$^\ast$-quantum isomorphic.  In particular,
\begin{itemize}
\item For each $n \ge 4$, we have $K_{n^2} \cong_{qc} K_{X_n}$, where $K_{X_n}$ is the   quantum complete graph associated to the quantum set $X_n = (M_n(\bC), n^{-1}\text{Tr}(\cdot))$.
\item  Let $Q \in M_n(\bC)$ with $Q>0$, $\text{Tr}(Q) = 1$, $\text{Tr}(Q^{-1}) = \delta^2 > 0$, and consider the quantum set $Y = (M_n(\bC), \psi_Y = \text{Tr}(Q \cdot), \delta^2 \psi_Y(\cdot )1)$.  Then $K_Y \cong_{C^*}K_X$ for any quantum set $X$ equipped with a $\delta$-form.
\end{itemize}  
In particular, quantum isomorphic quantum graphs need not have the same dimension. 
\end{example}


Let $X$ and $Y$ be two quantum graphs. We noted above, in the discussion preceding \Cref{stability-monoidal}, that a monoidal equivalence between $G_X$ and $G_Y$ sending $A_X$ to $A_Y$ would by necessity force the graphs to be isospectral. In particular, they must be so if they are quantum isomorphic. We end this section with an example of a pair of non-quantum-isomorphic, isospectral graphs with trivial quantum automorphism groups.

We will use the Frucht graph $X$, which is a $3$-regular graph on $12$ vertices and has trivial automorphism group. Moreover, its adjacency matrix has no repeated eigenvalues; as we will see, this will eventually help show that $X$ has no {\it quantum} automorphisms.

Before delving into the statement and proof of the next result we fix some notation and terminology. $X$ will be a (classical) graph on $n$ vertices, and we denote by $p_i$, $1\le i\le n$ the minimal projections of $\cO(X)$. The {\it support} of an element $f\in \cO(X)$ is the set of $p_i$ that have non-zero coefficients in a decomposition of $f$ as a linear combination
\begin{equation*}
  f=\sum_{i=1}^n \alpha_i p_i. 
\end{equation*}

\begin{lemma}\label{le.simple}
  Let $X$ be a graph on $n$ vertices whose adjacency matrix $A=A_X$ has only simple eigenvalues with no two eigenvectors having disjoint supports. Then the quantum automorphism group $G_X$ is classical.   
\end{lemma}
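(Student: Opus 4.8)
The plan is to diagonalize the fundamental representation $u = [u_{ij}]$ of $G_X$ against the eigenbasis of $A = A_X$ and to show that the resulting ``quantum eigenvalues'' generate a commutative algebra. Since $X$ is classical, $A$ is a real symmetric matrix, so I would fix a real orthonormal basis $\xi_1,\dots,\xi_n$ of $L^2(X)$ consisting of eigenvectors, $A\xi_k = \lambda_k\xi_k$, with the $\lambda_k$ pairwise distinct by hypothesis. The covariance condition $\rho_X(A_X\cdot) = (A_X\otimes\iota)\rho_X$ in \Cref{def-qaut} says precisely that $u$ commutes with $A$ in $M_n(\cO(G_X))$, whence $(A\otimes\iota)\rho_X(\xi_k) = \rho_X(A\xi_k) = \lambda_k\rho_X(\xi_k)$, so that $\rho_X(\xi_k)$ lies in the $\lambda_k$-eigenspace of $A\otimes\iota$ on $\cO(X)\otimes\cO(G_X)$. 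Because $\lambda_k$ is simple, that eigenspace is $\xi_k\otimes\cO(G_X)$, and therefore
\[
  \rho_X(\xi_k) = \xi_k\otimes f_k
\]
for a unique $f_k\in\cO(G_X)$.

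The heart of the argument is to show that the $f_k$ commute, and this is exactly where the support hypothesis enters. Since $\rho_X$ is a unital $*$-homomorphism and $\cO(X)$ is commutative ($X$ being classical), for any $k,l$ one gets
\[
  \xi_k\xi_l\otimes f_kf_l = \rho_X(\xi_k)\rho_X(\xi_l) = \rho_X(\xi_k\xi_l) = \rho_X(\xi_l\xi_k) = \xi_k\xi_l\otimes f_lf_k.
\]
Here $\xi_k\xi_l$ is the pointwise product in $\cO(X)$, whose support is the intersection of the supports of $\xi_k$ and $\xi_l$. The assumption that no two eigenvectors have disjoint supports guarantees $\xi_k\xi_l\neq 0$, and cancelling this nonzero tensor leg yields $f_kf_l = f_lf_k$ for all $k,l$.

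Finally, since $\{\xi_k\}$ is a real orthonormal basis, the relation $\rho_X(\xi_k) = \xi_k\otimes f_k$, that is $\sum_j u_{ij}(\xi_k)_j = (\xi_k)_i f_k$, inverts to express each generator as $u_{ij} = \sum_k (\xi_k)_i(\xi_k)_j f_k$ with real scalar coefficients. As the $f_k$ pairwise commute, so do all the $u_{ij}$, and since these generate $\cO(G_X)$ the algebra is commutative, i.e. $G_X$ is classical. The only genuinely delicate point is the passage $\rho_X(\xi_k) = \xi_k\otimes f_k$, which relies essentially on simplicity of the spectrum: without it the eigenspaces would be higher-dimensional and the quantum eigenvalues would be matrices rather than scalars. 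The no-disjoint-support condition is then exactly what is needed to transfer commutativity of $\cO(X)$ to commutativity of the $f_k$, so I expect no obstacle beyond keeping track of the inner-product normalization on $L^2(X)$.
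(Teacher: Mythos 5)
Your proof is correct and follows essentially the same route as the paper's: simplicity of the spectrum forces the coaction to be diagonal on an eigenbasis of $A$, and the non-disjoint-support hypothesis is used in exactly the same way to transfer commutativity of $\cO(X)$ to the quantum eigenvalues. The only cosmetic difference is that the paper packages the diagonal entries as grouplike elements generating a group algebra $\bC\Gamma$ and concludes via Pontryagin duality of the abelian group $\Gamma$, whereas you deduce commutativity of the generators $u_{ij}$ directly; both yield that $G_X$ is classical.
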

\begin{proof}
  As a consequence of \cite[Theorem 3.11]{Lu17}, $A$ is contained in the space of endomorphisms of $\cO(X)$ regarded as an $\cO(G_X)$-comodule. In particular the elements $e_i$, $1\le i\le n$ of an $A$-eigenbasis of $\cO(X)$ span lines invariant under $\cO(G_X)$, meaning that the coaction takes the form
  \begin{equation*}
    \rho_X:e_i\mapsto e_i\otimes g_i
  \end{equation*}
  for elements $g_i\in \cO(G_X)$. It follows from this that the elements $g_i$ are {\it grouplike}, i.e.
  \begin{equation*}
    \Delta(g_i) = g_i\otimes g_i,\quad \varepsilon(g_i)=1.
  \end{equation*}
  Since $\cO(G_X)$ is generated as an algebra by the right hand tensorands of $\rho_X(e_i)$, it follows that $\cO(G_X)$ is the group algebra $\bC\Gamma$ of a group $\Gamma$ (generated by $g_i$) and $\rho_X$ is nothing but a $\Gamma$-graded algebra structure on $\cO(X)$. In order to conclude it remains to argue that the grading group $\Gamma$ is commutative, for it will then follow that $G_X$ is the (classical) Pontryagin dual of $\Gamma$.

  Now, since no two $e_i$ have disjoint supports, $e_ie_j=e_je_i$ are all non-zero and homogeneous of degree $g_ig_j$ as well as $g_jg_i$. It follows that the generators $g_i$ of $\Gamma$ all commute, and we are done.
\end{proof}

\begin{corollary}\label{cor.frcht-triv}
  The Frucht graph has trivial quantum automorphism group. 
\end{corollary}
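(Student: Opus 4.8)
The plan is to apply \Cref{le.simple} to the Frucht graph $X$ and then to identify the resulting classical quantum automorphism group with the ordinary (trivial) automorphism group. Once \Cref{le.simple} tells us that $G_X$ is classical—i.e.\ that $\cO(G_X)$ is commutative—we may invoke the earlier remark that the abelianization of $\cO(G_X)$ is exactly $\cO(\mathrm{Aut}(X))$. A commutative $\cO(G_X)$ coincides with its own abelianization, so $\cO(G_X)=\cO(\mathrm{Aut}(X))$; since the Frucht graph is a standard example of a cubic graph with trivial automorphism group, $\cO(\mathrm{Aut}(X))=\bC$ and hence $G_X$ is trivial. Thus everything reduces to checking the two hypotheses of \Cref{le.simple} for $X$.

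The first hypothesis, that $A=A_X$ has only simple eigenvalues, is a known feature of the Frucht graph and was already recorded in the discussion preceding \Cref{le.simple}, so I would simply cite it. The remaining task is the second hypothesis: that no two vectors of an $A$-eigenbasis have disjoint supports. Recall that in the commutative algebra $\cO(X)\cong\bC^{12}$ the product of two eigenvectors is their pointwise product, which is nonzero precisely when their supports meet; this is exactly the input used in the proof of \Cref{le.simple} to force the grouplike elements $g_i$ to commute. Because the spectrum is simple, each eigenvector $e_i$ is determined up to a scalar, so its support is well defined and the condition $e_ie_j\neq 0$ is unambiguous. As a first reduction I would dispatch all pairs involving the Perron eigenvector: since $X$ is connected and $3$-regular, the largest eigenvalue is $3$ with eigenvector the all-ones vector $\sum_{i=1}^{12}p_i$, which has full support and therefore meets the support of every other eigenvector. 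This leaves only the pairs among the eigenvectors for the eleven remaining (nontrivial) eigenvalues.

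For those pairs I would verify the no-disjoint-supports condition by an explicit finite computation with the eigenvectors of the Frucht adjacency matrix, checking for each unordered pair $\{i,j\}$ that $\mathrm{supp}(e_i)\cap\mathrm{supp}(e_j)\neq\emptyset$. I expect this to be the main obstacle, precisely because there is no purely structural shortcut: for a connected graph with simple spectrum an eigenvector supported on a set $S$ restricts to an eigenvector of the induced subgraph $A[S]$, and such localized modes can in principle coexist on disjoint vertex sets, so the condition genuinely depends on the numerics of the Frucht graph rather than on general principles. The verification is therefore computational (and matches the machine-assisted flavour of other arguments in this circle of ideas), but it is a bounded, finite check.

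With both hypotheses confirmed, \Cref{le.simple} yields that $G_X$ is classical, and the identification in the first paragraph then gives $G_X=\mathrm{Aut}(X)=\{e\}$, completing the proof.
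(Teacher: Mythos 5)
Your proposal is correct and follows essentially the same route as the paper: both verify the hypotheses of \Cref{le.simple} for the Frucht graph (simple spectrum as a known fact, the no-disjoint-supports condition by an explicit numerical check of the eigenvectors) and then combine the resulting classicality of $G_X$ with the triviality of $\mathrm{Aut}(X)$. Your additional observations — dispatching the Perron eigenvector pairs via the all-ones vector and spelling out the abelianization argument — are harmless refinements of the same argument.
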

\begin{proof}
  The graph meets the requirements of \Cref{le.simple}: we have verified the no-disjoint-supports condition by direct examination, having computed the entries of the eigenvectors to three decimals using a CAS.

  It follows from \Cref{le.simple} that the quantum automorphism group is classical. On the other hand, we know that the Frucht graph has no non-trivial {\it classical} automorphisms.
\end{proof}

We also need the following simple remark

\begin{lemma}\label{le.link}
Let $X_i$, $i=1,2$ be graphs with the vertex set $\{1,\dots,n\}$ and respective adjacency matrices $A_i$. Suppose that $P=(p_{ij})$ is a magic unitary such that $A_1P=PA_2$. If $p_{ij}\neq 0$ then $\mathrm{deg}(i) = \mathrm{deg}(j)$. 
\end{lemma}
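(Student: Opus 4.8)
The plan is to turn the operator identity $A_1P=PA_2$ into a scalar statement by collapsing the ``free'' index with the row/column sum relations of the magic unitary, and then to isolate a single projection $p_{ij}$ using orthogonality within a row.

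First I would record the $(i,j)$-entry of $A_1P=PA_2$. Writing $k\sim_1 i$ for adjacency in $X_1$ and $l\sim_2 j$ for adjacency in $X_2$, this reads
\begin{equation*}
  \sum_{k\sim_1 i}p_{kj}=\sum_{l\sim_2 j}p_{il}.
\end{equation*}
Next I would sum this identity over all $j\in\{1,\dots,n\}$. On the left, interchanging the sums and using the row-sum relation $\sum_j p_{kj}=1$ of the magic unitary turns $\sum_{k\sim_1 i}\sum_j p_{kj}$ into $\deg(i)\,1$, where $\deg(i)$ is the degree of $i$ in $X_1$. On the right, for each fixed $l$ the projection $p_{il}$ occurs once for every $j$ adjacent to $l$ in $X_2$, so the double sum becomes $\sum_l \deg(l)\,p_{il}$, where now $\deg(l)$ denotes the degree of $l$ in $X_2$. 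This yields the key relation
\begin{equation*}
  \deg(i)\,1=\sum_{l}\deg(l)\,p_{il}.
\end{equation*}

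Finally I would multiply this relation on the right by $p_{ij}$ and invoke the row-orthogonality $p_{il}p_{ij}=\delta_{lj}p_{ij}$ of the entries of $P$; only the $l=j$ term survives, giving $\deg(i)p_{ij}=\deg(j)p_{ij}$, i.e.\ $(\deg(i)-\deg(j))p_{ij}=0$. Since $\deg(i)-\deg(j)$ is a scalar, $p_{ij}\neq 0$ forces $\deg(i)=\deg(j)$. The only genuinely nonobvious step is the decision to sum over the column index $j$ so as to convert the magic-unitary row sums into degree counts; once that bookkeeping is in place, the rest is a one-line application of orthogonality, and I anticipate no serious obstacle.
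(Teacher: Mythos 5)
Your proof is correct and is essentially the paper's own argument in mirror image: the paper sums the entry identity over the row index $i$ and collapses with the column sums of the magic unitary, whereas you sum over the column index $j$ and collapse with the row sums, both then isolating $p_{ij}$ via orthogonality of the projections. Your final step is, if anything, slightly more explicit than the paper's.
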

\begin{proof}
The $(i,j)$ entries of both sides are equal, so $\sum_{k} a_{ik} p_{kj} = \sum_{k} p_{ik} b_{kj}$. We can sum both sides of equality, getting $\sum_{k} \left( \sum_{i} a_{ik}\right) p_{kj} = \left(\sum_{k} b_{kj}\right) \mathrm{Id}$. Since $(p_{kj})_{k}$ is a partition of unity, this can only happen if $\sum_{i} a_{ik} = \sum_{k} b_{kj}$, whenever $p_{kj}$ is non-zero. But the left-hand side is $\deg_{A}(k)$ and the right-hand side is $\deg_{B}(j)$, so we are done. 
\end{proof}

We now need the following consequence of \cite[Theorem 2.2]{gm}.

\begin{theorem}
  Suppose that $\Gamma$ is a regular graph with $2m$ vertices. Form $\Gamma'$ by adding a vertex $2m+1$, which is joined to exactly $m$ vertices. If $\Gamma''$ is the switch of $\Gamma'$, i.e. the graph formed by connecting $2m+1$ to the other $m$ vertices of $\Gamma$, then $\Gamma'$ and $\Gamma''$ are isospectral.
\end{theorem}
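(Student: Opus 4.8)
The plan is to recognize the passage from $\Gamma'$ to $\Gamma''$ as an instance of Godsil--McKay switching, so that the asserted isospectrality becomes exactly the conclusion of \cite[Theorem 2.2]{gm}. Write $A$ for the $2m\times 2m$ adjacency matrix of $\Gamma$; since $\Gamma$ is regular of some degree $k$, the all-ones vector $\mathbf 1\in\bR^{2m}$ satisfies $A\mathbf 1=k\mathbf 1$. Let $S$ be the set of $m$ vertices joined to $2m+1$ in $\Gamma'$ and let $b\in\{0,1\}^{2m}$ be its indicator, so that $b^t\mathbf 1=m$ and $\mathbf 1-b$ indicates the complementary set of the remaining $m$ vertices. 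In block form the two adjacency matrices are
\[
A'=\begin{pmatrix} A & b\\ b^t & 0\end{pmatrix},\qquad
A''=\begin{pmatrix} A & \mathbf 1-b\\ (\mathbf 1-b)^t & 0\end{pmatrix}.
\]

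For the switching route I would apply \cite[Theorem 2.2]{gm} to the partition $\{V(\Gamma),\{2m+1\}\}$ of $V(\Gamma')$, i.e. with the single ``cell'' $C=V(\Gamma)$ and singleton exceptional part $\{2m+1\}$. The two hypotheses of Godsil--McKay switching reduce in this case to: (i) the induced subgraph on $C$ is regular, which is precisely the regularity of $\Gamma$; and (ii) the vertex $2m+1$ has exactly $\tfrac12|C|=m$ neighbours in $C$, which is the standing hypothesis. The switching operation then toggles the adjacencies between $2m+1$ and $C$, replacing neighbours by non-neighbours; the resulting graph is by construction $\Gamma''$, and \cite[Theorem 2.2]{gm} guarantees it is isospectral with $\Gamma'$.

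If one prefers a direct self-contained verification, I would expand the characteristic polynomials by a Schur complement along the bordered last row and column. For $\lambda$ not an eigenvalue of $A$, putting $R=(\lambda I-A)^{-1}$ one obtains
\[
\det(\lambda I-A')=\det(\lambda I-A)\bigl(\lambda-b^tRb\bigr),\qquad
\det(\lambda I-A'')=\det(\lambda I-A)\bigl(\lambda-(\mathbf 1-b)^tR(\mathbf 1-b)\bigr),
\]
so it suffices to prove $b^tRb=(\mathbf 1-b)^tR(\mathbf 1-b)$. Since $R$ is symmetric, expanding the right-hand side collapses this to the identity $2\,b^tR\mathbf 1=\mathbf 1^tR\mathbf 1$. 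Here regularity enters: $A\mathbf 1=k\mathbf 1$ gives $R\mathbf 1=(\lambda-k)^{-1}\mathbf 1$, whence $b^tR\mathbf 1=(\lambda-k)^{-1}m$ and $\mathbf 1^tR\mathbf 1=(\lambda-k)^{-1}2m$, and $2m=2\cdot m$ closes the identity. As this holds for all but finitely many $\lambda$ and both sides are polynomials in $\lambda$, the two characteristic polynomials agree identically, giving isospectrality.

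The computation is essentially routine; the only points requiring care are matching the precise hypotheses of the Godsil--McKay theorem (where both the regularity of $\Gamma$ \emph{and} the ``exactly half'' condition are genuinely used), and, in the direct approach, observing that the resolvent identity is first established only away from the spectrum of $A$ and then extended to all $\lambda$ by the polynomial identity theorem. I expect the main conceptual obstacle to be nothing deeper than correctly identifying the single-cell partition that makes this a legitimate switching instance, and checking that the reduction $2\,b^tR\mathbf 1=\mathbf 1^tR\mathbf 1$ is exactly what the two structural hypotheses deliver.
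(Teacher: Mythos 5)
Your proposal is correct. The paper itself offers no argument for this statement beyond presenting it as a consequence of \cite[Theorem 2.2]{gm}, so your first route (the single-cell Godsil--McKay switching instance with $C=V(\Gamma)$ and exceptional vertex $2m+1$) is exactly the paper's approach, and you have correctly matched the two hypotheses of that theorem to the regularity of $\Gamma$ and the ``exactly $m$ neighbours'' condition. Your second, self-contained route goes beyond what the paper records: the Schur-complement expansion
\[
\det(\lambda I-A')=\det(\lambda I-A)\bigl(\lambda-b^tRb\bigr)
\]
is right, the reduction of $b^tRb=(\mathbf 1-b)^tR(\mathbf 1-b)$ to $2\,b^tR\mathbf 1=\mathbf 1^tR\mathbf 1$ via symmetry of $R$ is right, and the use of $R\mathbf 1=(\lambda-k)^{-1}\mathbf 1$ together with $b^t\mathbf 1=m$ closes the identity; the final passage from ``agreement off a finite set'' to ``identical characteristic polynomials'' is handled correctly. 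The direct computation buys independence from the external reference and makes transparent exactly where both hypotheses (regularity, and the count $m=\tfrac12\cdot 2m$) are consumed, at the cost of a page of linear algebra that the citation avoids.
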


\begin{example}\label{ex.niso}
  We construct the pair $X_i$, $i=1,2$ of graphs alluded to above as follows.

Start with the Frucht graph $X$, which is $3$-regular and has $12$ vertices. Its adjacency matrix has simple eigenvalues, hence its quantum automorphism group is trivial by \Cref{cor.frcht-triv}. Next, form the $13$-vertex graph $X_1$ obtained by adding a vertex connected to exactly $6$ vertices of the Frucht graph. Note first that $X_1$ once more has a trivial quantum automorphism group. Indeed, the degrees of the vertices of $X$ are all $3$ while the additional vertex has degree $6$. By \Cref{le.link} any the magic unitary commuting with the adjacency matrix would be block-diagonal, consisting of a $12\times 12$ block and the unit in position $(13,13)$. The former block would however constitute a quantum automorphism of the Frucht graph itself, so that block must be trivial.

The same reasoning shows that the graph $X_2$ obtained by connecting the $13^{th}$ vertex of $X_1$ to the {\it other} $6$ vertices of $X$ cannot be quantum isomorphic to $X_1$. As claimed before, $X_i$ are isospectral, non-quantum-isomorphic, and have trivial quantum automorphism groups.
\end{example}

Incidentally, \Cref{ex.niso} also also answers a question posed implicitly in \cite{Lu17}. As mentioned before in the discussion preceding \Cref{subse.states}, \cite[Theorem 4.2]{Lu17} proves that for quantum isomorphic graphs $X_i$, $i=1,2$ there is an isomorphism between their respective quantum orbital algebras that identifies the adjacency matrices $A_i$, $i=1,2$. The discussion following that result mentions that while the converse is not expected to hold, the authors do not have a counterexample. The graphs $X_i$ constructed above provide that counterexample:

Since the quantum automorphism groups $G_i$, $i=1,2$ are trivial the quantum orbital algebras are simply the matrix algebras $\mathrm{End}(\cO(X_i))$. Since $A_i$ are isospectral self-adjoint matrices, there is an isomorphism
\begin{equation*}
\mathrm{End}(\cO(X_1))  \cong\mathrm{End}(\cO(X_2))
\end{equation*}
identifying them. 

\section{Applications to other synchronous games}\label{se.appl}

Theorem~\ref{th.cls-grph} can be applied to obtain results about families of games. The key concepts that we need to accomplish this are a notion of equivalence for games and the concept of a {\it hereditary $*$-algebra.}

\begin{definition}  A $*$-algebra $\cA$ is called {\bf hereditary} provided that, whenever $n \in \bN$ and $x_1,...,x_n \in \cA$ are such that $\sum_{i=1}^n x_i^*x_i=0$, then $x_i=0$ for all $1 \leq i \leq n$.
\end{definition}

One key advantage of hereditary $*$-algebras is that if $\cA$ is a hereditary $*$-algebra and we set 
\[ \cP= \left\{ x \in \cA: \exists x_1,...,x_n \in \cA \text{ such that } x= \sum_{i=1}^n x_i^*x_i \right\}, \]
then $\cP \cap (- \cP) = \{0\}$.  Thus, we may define, for $a=a^*$ and $b= b^*$, a partial order by 
\[a \le b \iff  \exists x_1,...,x_n \in \cA \text{ such that } b-a = \sum_i x_i^*x_i.\]
We note that if $a \le b$ and $b \le a$, then $a=b$.

Given a $*$-algebra $\cA$, the smallest two-sided, $*$-closed hereditary ideal $\cI$ containing $0$ is called the {\it hereditary kernel} of $\cA$, and the quotient $\cA/\cI$ is denoted by $\cA_{hered}$. Given a synchronous game $\cG$, we let $\cA_{hered}(\cG)$ denote the hereditary quotient of $\cA(\cG)$.  Note that by Theorem \ref{th.cls-grph}, $\cA(ISo(X,Y))$ admits a faithful tracial state whenever $\cA(Iso(X,Y)) \ne 0$, and therefore $\cA(Iso(X,Y))  = \cA_{hered}(Iso(X,Y))$.

Note that if $\cA$ and $\cB$ are $*$-algebras with $\cB$ hereditary and $\pi: \cA \to \cB$ is a $*$-homomorphism, then the kernel of $\pi$ contains the hereditary kernel of $\cA$ and so induces a $*$-homomorphism $\tilde{\pi}: \cA_{hered} \to \cB$. Hence, for any pair of $*$-algebras $\cA$ and $\cB$, every $*$-homomorphism $\pi: \cA \to \cB$ induces a $*$-homomorphism $\tilde{\pi}: \cA_{hered} \to \cB_{hered}$. 

\begin{definition} Let $\cl G_1$ and $\cl G_2$ be two synchronous games.  We say that $\cl G_1$ and $\cl G_2$ are {\bf $\ast$-equivalent} if there exist unital $\ast$-homomorphisms $\pi: \cl A(\cl G_1) \to \cl A(\cl G_2)$ and $\rho: \cl A(\cl G_2) \to \cl A(\cl G_1)$. We say that $\cG_1$ and $\cG_2$ are {\bf hereditarily $\ast$-equivalent} if there exist unital $\ast$-homomorphisms $\pi: \cA_{hered}(\cG_1) \to \cA_{hered}(\cG_2)$ and $\rho: \cA_{hered}(\cG_2) \to \cA_{hered}(\cG_1)$.
\end{definition}
We allow the possibility that one of the two algebras is $(0)$, in which case $1=0$ in that algebra. In this case, equivalence of the algebras implies that the other algebra is also $(0)$.

Note that we do not require $\pi$ and $\rho$ to be mutual inverses or even one-to-one, just unital. The reason for examining this relation is given below.

\begin{proposition} Let $t \in \{ loc, q, qa, qc, C^*\}$.  If $\cl G_1, \cl G_2$ are synchronous games that are hereditarily $\ast$-equivalent, then $\cl G_1$ has a perfect $t$-strategy if and only if $\cl G_2$ has a perfect $t$-strategy. If, in addition, the games $\cG_1$ and $\cG_2$ are $*$-equivalent, then $\cG_1$ has a perfect $A^*$-strategy if and only if $\cG_2$ has a perfect $A^*$-strategy.
\end{proposition}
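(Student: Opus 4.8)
The plan is to route everything through \Cref{thm:intro1}, which for each $t \in \{loc, q, qa, qc, C^*\}$ translates the existence of a perfect $t$-strategy for a synchronous game $\cG$ into the existence of a unital $\ast$-homomorphism from $\cA(\cG)$ into a \emph{C$^\ast$-algebra} $\cB$ of a prescribed type (respectively $\bC$; $B(H)$ with $\dim H < \infty$; an ultrapower of the hyperfinite $II_1$-factor; a C$^\ast$-algebra with a faithful trace; and $B(H)$). The key observation is that every one of these target algebras is hereditary: if $\sum_i x_i^\ast x_i = 0$ in a C$^\ast$-algebra, then each positive element $x_i^\ast x_i$ is dominated by $0$ and hence vanishes, forcing $x_i = 0$. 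I would record this heredity first and combine it with the factorization remark immediately preceding the statement: since such a $\cB$ is hereditary, every $\ast$-homomorphism $\phi:\cA(\cG_1)\to\cB$ annihilates the hereditary kernel and thus factors as $\phi = \tilde\phi\circ\nu_1$, where $\nu_1:\cA(\cG_1)\to\cA_{hered}(\cG_1)$ is the quotient map and $\tilde\phi:\cA_{hered}(\cG_1)\to\cB$. Hence a perfect $t$-strategy for $\cG_1$ is \emph{equivalent} to the existence of a unital $\ast$-homomorphism $\cA_{hered}(\cG_1)\to\cB$.

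Granting hereditary $\ast$-equivalence, I would then fix the unital $\ast$-homomorphism $\rho:\cA_{hered}(\cG_2)\to\cA_{hered}(\cG_1)$. Given a perfect $t$-strategy for $\cG_1$, i.e. a map $\tilde\phi:\cA_{hered}(\cG_1)\to\cB$ as above, the composite $\tilde\phi\circ\rho\circ\nu_2:\cA(\cG_2)\to\cB$, with $\nu_2:\cA(\cG_2)\to\cA_{hered}(\cG_2)$ the quotient map, is a unital $\ast$-homomorphism into a target of the \emph{same} type, so \Cref{thm:intro1} produces a perfect $t$-strategy for $\cG_2$. Interchanging the roles of $\cG_1$ and $\cG_2$ (using the other map $\pi$ in place of $\rho$) yields the converse, establishing the first assertion uniformly for all five values of $t$.

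For the $A^\ast$-claim the hereditary quotient is no longer adequate, since the condition $\cA(\cG)\ne 0$ is not detected after passing to $\cA_{hered}(\cG)$: the hereditary kernel may contain $1$ even when $\cA(\cG)\ne 0$, as the $\cA(Hom(K_5,K_4))$ example in the preceding remark illustrates. This is precisely why the stronger hypothesis of full $\ast$-equivalence is imposed here. I would argue with the unital $\ast$-homomorphisms $\pi:\cA(\cG_1)\to\cA(\cG_2)$ and $\rho:\cA(\cG_2)\to\cA(\cG_1)$ directly: if $\cA(\cG_2)=0$, then $1=0$ there, so $\rho(1)=0$; but $\rho$ is unital, whence $1_{\cA(\cG_1)}=0$ and $\cA(\cG_1)=0$. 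Contrapositively $\cA(\cG_1)\ne 0 \Rightarrow \cA(\cG_2)\ne 0$, and the symmetric argument using $\pi$ gives the reverse implication, so $\cG_1$ admits a perfect $A^\ast$-strategy if and only if $\cG_2$ does.

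The main obstacle is conceptual rather than computational: the whole proof hinges on recognizing that the heredity of the C$^\ast$-algebraic targets is exactly what allows the weaker hypothesis (hereditary $\ast$-equivalence) to suffice when $t\ne A^\ast$, whereas the $A^\ast$-case, being a statement purely about the non-vanishing of the ambient $\ast$-algebra, is invisible to the hereditary quotient and therefore genuinely requires full $\ast$-equivalence. Once this dichotomy is isolated, the remaining manipulations are routine diagram chases through the quotient maps $\nu_i$ and the equivalence maps $\pi,\rho$.
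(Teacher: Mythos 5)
Your proof is correct and follows essentially the same route as the paper: the paper also observes that the C$^\ast$-algebraic targets (it treats $M_d$ explicitly and notes the rest are similar) are hereditary, so that perfect $t$-strategies factor through $\cA_{hered}(\cG)$ and can be transported along the hereditary equivalence maps, while the $A^\ast$-case is handled exactly by your unitality observation (which the paper records in the remark that $\ast$-equivalence with a zero algebra forces the other algebra to be zero). Your write-up is simply a more detailed version of the paper's terse argument.
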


\begin{proof} We do the case $t=q$, the rest are similar.  First assume that the algebras are $*$-equivalent. If $\cl G_2$ has a perfect $q$-strategy, then there is a unital $\ast$-morphism $\gamma: \cl A(\cl G_2) \to M_d$ for some $d$. Composing with $\pi$ yields a $\ast$-homomorphism from $\cl A(\cl G_1)$ into $M_d$, and so, $\cl G_1$ has a perfect $q$-strategy. Since $M_d$ is a hereditary $*$-algebra, the same reasoning applies when the algebras are hereditarily $*$-equivalent. The converse is clear, as are the remaining cases.
\end{proof}

We now introduce another game which we will show is hereditarily $\ast$-equivalent to a graph isomorphism game.

\subsection{The syncBCS game}
This game was first introduced in \cite{KPS18} and is a synchronous version of what is classically known as the BCS game.  Given an $m \times n$ matrix $A=(a_{i,j})$ over the field of two elements, $\mathbb Z_2$ and a vector $b$, we introduce a game denoted $syncBCS(A,b)$, that is intended to convince a referee that Alice and Bob have a solution $x$ to the equation $Ax=b$.

For $i = 1, \ldots, m$, let $V_i = \{ j: a_{i,j} \ne 0 \}$. Note that to solve the $i$-th equation in $Ax=b$, we only need
\[ \sum_{j \in V_i} a_{i,j} x_j = b_i,\]
since the remaining terms are irrelevant. Set
\[ S_i^b = \{ x \in \mathbb Z_2^n : \sum_{j \in V_i} a_{i,j} x_j = b_i \text{ and } x_j = 0 \text{ for  }j \notin V_i \}. \]
We associate a synchronous game to $Ax = b$ as follows:

\begin{definition}
Suppose $Ax = b$ is an $m \times n$ linear system over $\mathbb{Z}_2$ and $b \in \mathbb{Z}_2^n$.
The synchronous BCS game associated to $A x = b$, denoted $synBCS(A, b)$, is given as follows:
\begin{enumerate}
  \item the input set is $\mathcal{I} = \{1,\ldots, m\}$;
  \item the output set is $\mathcal{O} = \mathbb Z_2^n$;
  \item given input $(i,j)$, Alice and Bob win on output $(x,y)$ if and only if $x \in S_i^b$, $y \in S_j^b$, and for all $k \in V_i \cap V_j$, $x_k = y_k$.
\end{enumerate}
\end{definition}

Next let us recall from \cite[Section 6]{amrssv} the graph $G_{A, b}$ defined for a linear system $Ax = b$ over $\mathbb{Z}_2$.


\begin{definition} Suppose $Ax = b$ is an $m \times n$ linear system over $\mathbb{Z}_2$ and $b \in \mathbb{Z}_2^n$.
  Define a graph $G_{A, b}$ with the following data:
  \begin{enumerate}
    \item the vertices of $G_{A, b}$ are pairs $(i,x)$ where $i \in \{1,\ldots, m\}$ and $x \in S_i^b$;
    \item there is an edge between distinct vertices $(i, x)$ and $(j, y)$ if and only if there exists some $k \in V_i \cap V_j$ for which $x_k \neq y_k$; that is, $x$ and $y$ are inconsistent solutions.
  \end{enumerate}
\end{definition}

We are now ready to state the main theorem of this section.

\begin{theorem} Let $A=(a_{i,j})$ be an $m \times n$ matrix over $\mathbb Z_2$ and let $b \in \mathbb Z_2^n$. Then the following three synchronous games:
\begin{enumerate}
\item  $syncBCS(A,b)$,
\item $Iso( G_{A,b}, G_{A,0})$,
\item $Hom(K_m, \overline{G_{A,b}})$,
\end{enumerate}
are hereditarily $\ast$-equivalent.
\end{theorem}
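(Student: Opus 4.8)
The plan is to use that hereditary $*$-equivalence is an equivalence relation, so it suffices to produce unital $*$-homomorphisms linking the three hereditary algebras in a connected pattern. I recall that the syncBCS generators $e_{i,x}$ vanish unless $x\in S_i^b$ (from $\lambda(i,i,x,x)=0$), that $V(G_{A,b})=\{(i,x):x\in S_i^b\}$, and that $0\in S_a^0$ for every $a$. I use three families of generators: $e_{i,x}$ for $\cA(syncBCS(A,b))$ (with $x\in S_i^b$); $f_{a,(i,x)}$ for $\cA(Hom(K_m,\overline{G_{A,b}}))$ (with $a\in[m]$, $(i,x)\in V(G_{A,b})$); and the magic-unitary entries $u_{(i,x),(j,y)}$ of the matrix $U$ generating $\cA(Iso(G_{A,b},G_{A,0}))$, normalized as in \Cref{Iso-alg,rem-iso} so that $(1\otimes A_{G_{A,b}})U=U(1\otimes A_{G_{A,0}})$ (here $x\in S_i^b$, $y\in S_j^0$). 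The four maps below connect every pair in both directions; each of $\rho,\Psi,\Theta$ is defined already on full algebras and hence descends to hereditary quotients, and the only place hereditariness is genuinely needed is the syncBCS$\to$Hom map.

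Between syncBCS and Hom I would take $\rho\colon f_{a,(i,x)}\mapsto\delta_{ai}e_{i,x}$ and $\pi\colon e_{i,x}\mapsto\sum_{a=1}^m f_{a,(i,x)}$. For $\rho$ every defining relation of the Hom game maps to a valid syncBCS relation; the only nontrivial point is that an inconsistent-edge constraint $f_{a,(i,x)}f_{a',(i',x')}=0$ with $a\ne a'$ maps to $e_{i,x}e_{i',x'}$, which vanishes by the consistency relations. For $\pi$ I form $P_{a,i}=\sum_{x\in S_i^b}f_{a,(i,x)}$; these are projections with $P_{a,i}P_{a',i}=0$ for $a\ne a'$ (two outputs in the same block of $G_{A,b}$ are either equal or adjacent, hence never an edge of $\overline{G_{A,b}}$) and $\sum_i P_{a,i}=1$. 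Thus $Q_i=\sum_a P_{a,i}$ is a projection with $\sum_i Q_i=m\cdot 1$; in the hereditary quotient $\sum_i(1-Q_i)=0$ with each $1-Q_i$ a projection forces $Q_i=1$, exactly the collapse that fails for general $*$-algebras. This makes $\pi(e_{i,x})=\sum_a f_{a,(i,x)}$ a well-defined unital $*$-homomorphism $\cA_{hered}(syncBCS)\to\cA_{hered}(Hom)$.

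To bring in the isomorphism game I would map \emph{into} it from Hom and \emph{out} of it to syncBCS. The map out is $\Theta\colon u_{(i,x),(j,y)}\mapsto\delta_{ij}\,e_{i,x+y}$ (well-defined since $x\in S_i^b,\ y\in S_i^0$ give $x+y\in S_i^b$). The magic-unitary relations are routine; the substance is the intertwiner relation, which (trivially when $i=j$, giving $1-e_{i,x+y}$ on both sides) reduces for $i\ne j$, with $T=V_i\cap V_j$, to the consistency identity
\[
\sum_{x\in S_i^b,\ x|_T=c} e_{i,x}\;=\;\sum_{y\in S_j^b,\ y|_T=c} e_{j,y}\qquad(c\in\bZ_2^T).
\]
Writing $\Pi_i^{c},\Pi_j^{c}$ for the two sides, the consistency relations give $\Pi_i^{c}\Pi_j^{c'}=0$ for $c\ne c'$; since $\sum_{c'}\Pi_j^{c'}=1$ this yields $\Pi_i^{c}=\Pi_i^{c}\Pi_j^{c}$, and symmetrically $\Pi_j^{c}=\Pi_j^{c}\Pi_i^{c}$, whence $\Pi_i^{c}=\Pi_j^{c}$. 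The map in is $\Psi\colon f_{a,(i,x)}\mapsto u_{(i,x),(a,0)}$, the column of $U$ at the zero-vertex $(a,0)\in V(G_{A,0})$ (identifying the Hom input $a\in[m]$ with the $a$-th block of $G_{A,0}$). Here partition of unity and synchronicity are just the column-sum and column-orthogonality of $U$, while the edge relations follow from relation-preservation in the isomorphism algebra ($u_{(i,x),(j,y)}u_{(i',x'),(j',y')}=0$ whenever $\mathrm{rel}_{G_{A,b}}$ and $\mathrm{rel}_{G_{A,0}}$ disagree, a consequence of the defining relations in \Cref{Iso-alg}), using crucially that two distinct zero-vertices $(a,0),(a',0)$ are non-adjacent in $G_{A,0}$.

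Composing, $\Psi\circ\pi$ and $\Theta$ connect syncBCS and Iso in both directions, $\pi\circ\Theta$ and $\Psi$ connect Iso and Hom, and $\pi,\rho$ connect syncBCS and Hom, so all three games are hereditarily $*$-equivalent. The part I expect to be the real obstacle is locating the correct map \emph{into} $\cA(Iso(G_{A,b},G_{A,0}))$: the naive diagonal guess $e_{i,x}\mapsto u_{(i,z_0),(i,z_0+x)}$ fails because a quantum isomorphism may permute the equation-blocks, so the equation-preserving part of a single row need not sum to $1$. Routing through the Hom game and exploiting the canonical consistent family of zero-vertices of $G_{A,0}$ is what sidesteps this; dually, the main computation is checking that $\Theta$ respects the block-mixing intertwiner relation, which is precisely the consistency identity displayed above.
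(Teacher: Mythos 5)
Your proposal is correct and follows essentially the same route as the paper: the three homomorphisms $\Theta$, $\Psi$, $\pi$ are exactly the paper's maps $e_{(i,x),(j,y)}\mapsto \delta_{ij}e_{i,x+y}$, $e_{j,(i,x)}\mapsto e_{(i,x),(j,0)}$, and $e_{i,x}\mapsto\sum_k e_{k,(i,x)}$, including the identical hereditary collapse $\sum_i(1-Q_i)=0\Rightarrow Q_i=1$. The only (harmless) differences are your extra redundant direct map $\rho$ and your verification of $\Theta$ via the adjacency-intertwiner presentation of \Cref{rem-iso} (the displayed consistency identity) where the paper checks the $\operatorname{rel}$-preservation relations case by case.
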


Before proving this theorem, we state some corollaries. Combining the above theorem with Theorem~\ref{th.cls-grph} yields the following consequences.

\begin{corollary} \label{vs-app1} Let $A=(a_{i,j})$ be an $m \times n$ matrix over $\mathbb Z_2$ and let $b \in \mathbb Z_2^n$.
The following are equivalent:
\begin{enumerate}
\item $\cA_{hered}(syncBCS(A,b)) \ne (0)$,
\item $syncBCS(A,b)$ has a perfect C$^\ast$-strategy,
\item $syncBCS(A,b)$ has a perfect qc-strategy.
\end{enumerate}
\end{corollary}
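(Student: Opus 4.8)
The plan is to transport all three conditions to the graph isomorphism game $Iso(G_{A,b},G_{A,0})$, where \Cref{th.cls-grph} already establishes the desired equivalences, and then pull them back along the hereditary $\ast$-equivalence furnished by the preceding theorem. Throughout I abbreviate $\cG = syncBCS(A,b)$ and $H = Iso(G_{A,b},G_{A,0})$.

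For the two strategy conditions (2) and (3) I would invoke the proposition on hereditary $\ast$-equivalence directly. Since $qc$ and $C^*$ both lie in the list $\{loc,q,qa,qc,C^*\}$ handled there, $\cG$ has a perfect $C^*$-strategy (resp. a perfect $qc$-strategy) if and only if $H$ does. On the isomorphism-game side, $H$ admits a perfect $C^*$-strategy precisely when $G_{A,b}\cong_{C^*}G_{A,0}$ and a perfect $qc$-strategy precisely when $G_{A,b}\cong_{qc}G_{A,0}$, and these two conditions are equivalent by \Cref{th.cls-grph}. This already yields $(2)\iff(3)$ and reduces both to a statement about $H$.

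It remains to splice in condition (1). A hereditary $\ast$-equivalence supplies unital $\ast$-homomorphisms $\cA_{hered}(\cG)\to\cA_{hered}(H)$ and $\cA_{hered}(H)\to\cA_{hered}(\cG)$ in both directions; since a unital $\ast$-homomorphism out of the zero algebra forces its target to be zero, having maps in both directions gives $\cA_{hered}(\cG)=(0)\iff\cA_{hered}(H)=(0)$, so $(1)$ is equivalent to $\cA_{hered}(H)\neq(0)$. I would then use the observation recorded just after the definition of $\cA_{hered}$: by \Cref{th.cls-grph} (via \Cref{traces-exist}) the algebra $\cA(Iso(X,Y))$ of a pair of classical graphs carries a faithful tracial state whenever it is nonzero, hence is already hereditary, so $\cA(H)=\cA_{hered}(H)$ and in particular $\cA_{hered}(H)\neq(0)\iff G_{A,b}\cong_{A^*}G_{A,0}$. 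Finally \Cref{th.cls-grph} gives $G_{A,b}\cong_{A^*}G_{A,0}\iff G_{A,b}\cong_{C^*}G_{A,0}$, closing the loop back to condition (2).

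The content here is almost entirely a matter of assembling earlier results, so I expect no serious obstacle; the one step requiring care is the identification $\cA(H)=\cA_{hered}(H)$ in the last paragraph. It is essential because condition (1) is phrased for the hereditary quotient of the BCS algebra rather than the full algebra, and the bridge to non-triviality of the isomorphism algebra works only because \Cref{th.cls-grph} makes $\cA(Iso(G_{A,b},G_{A,0}))$ hereditary on the nose. Before concluding I would verify that faithfulness of the trace indeed forces $\sum_i x_i^*x_i=0\Rightarrow x_i=0$, so that the hereditary kernel is trivial.
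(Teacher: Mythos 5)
Your proposal is correct and follows the same route the paper intends: the corollary is stated as an immediate consequence of the hereditary $\ast$-equivalence of $syncBCS(A,b)$ with $Iso(G_{A,b},G_{A,0})$ combined with \Cref{th.cls-grph}, which is exactly what you assemble. Your explicit handling of the identification $\cA(Iso(G_{A,b},G_{A,0}))=\cA_{hered}(Iso(G_{A,b},G_{A,0}))$ via the faithful tracial state is the right bridge between condition (1) and the others, and matches the remark the paper records just after defining the hereditary quotient.
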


\begin{corollary} \label{vs-app2} Let $A=(a_{i,j})$ be an $m \times n$ matrix over $\mathbb Z_2$ and let $b \in \mathbb Z_2^n$. The following are equivalent:
\begin{enumerate}
\item $\cA_{hered}(Hom(K_m, \overline{G_{A,b}})) \ne (0)$,
\item $Hom(K_m, \overline{G_{A,b}})$ has a perfect C$^\ast$-strategy,
\item $Hom(K_m, \overline{G_{A,b}})$ has a perfect qc-strategy.
\end{enumerate}
\end{corollary}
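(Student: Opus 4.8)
The plan is to exploit the transitivity of hereditary $*$-equivalence. Since the hereditary quotient is functorial (a unital $*$-homomorphism $\cA\to\cB$ induces $\cA_{hered}\to\cB_{hered}$), it suffices to produce a directed cycle of unital $*$-homomorphisms
\[
\cA(Hom(K_m,\overline{G_{A,b}})) \xrightarrow{\ \Omega\ } \cA(Iso(G_{A,b},G_{A,0})) \xrightarrow{\ \Phi\ } \cA(syncBCS(A,b)) \xrightarrow{\ \Xi\ } \cA_{hered}(Hom(K_m,\overline{G_{A,b}})),
\]
because travelling around such a cycle then yields unital $*$-homomorphisms in both directions between the hereditary quotients of any two of the three algebras. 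In the notation already fixed, $\cA(syncBCS(A,b))$ is generated by projections $e_{i,x}$ ($i\in\{1,\dots,m\}$, $x\in S_i^b$) with $\sum_{x\in S_i^b}e_{i,x}=1$, with row-orthogonality, and with $e_{i,x}e_{j,y}=0$ whenever $x,y$ disagree at some $k\in V_i\cap V_j$; $\cA(Iso(G_{A,b},G_{A,0}))$ is generated by a magic unitary $p=(p_{(i,x),(j,z)})$ (rows in $V(G_{A,b})$, columns in $V(G_{A,0})$) subject to the adjacency intertwiner of \Cref{rem-iso}; and $\cA(Hom(K_m,\overline{G_{A,b}}))$ is generated by projections $f_{i,(j,y)}$ with the homomorphism relations.

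Two of the three maps can be defined on the full game algebras. For $\Omega$ I would set $f_{i,(j,y)}\mapsto p_{(j,y),(i,0)}$, viewing the input $i$ of $K_m$ as the canonical vertex $(i,0)$ of $G_{A,0}$ (note $0\in S_i^0$, and the $\{(i,0)\}_i$ form an independent set of $G_{A,0}$). Then the row relations $\sum_{(j,y)}f_{i,(j,y)}=1$ become column sums of $p$, row-orthogonality becomes column-orthogonality of $p$, and the homomorphism relation (killing $f_{i,(j,y)}f_{i',(j',y')}$ when $i\ne i'$ and the outputs fail to be adjacent in $\overline{G_{A,b}}$) is exactly the relation-preservation $p_{g,h}p_{g',h'}=0$ for $\mathrm{rel}_{G_{A,b}}(g,g')\ne\mathrm{rel}_{G_{A,0}}(h,h')$, a defining relation of the isomorphism-game algebra. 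For $\Phi$ I would set $p_{(i,x),(j,z)}\mapsto\delta_{ij}\,e_{i,x+z}$ (well defined since for $i=j$ we have $x\in S_i^b$, $z\in S_i^0$, so $x+z\in S_i^b$). Using $S_i^b=x+S_i^0$, the magic-unitary relations translate into the $syncBCS$ relations; the one substantial point is that the intertwiner $A_{G_{A,b}}p=pA_{G_{A,0}}$ reduces, after applying $\Phi$, to the marginal-consistency identity
\[
\sum_{\substack{w\in S_j^b\\ w|_{V_i\cap V_j}=\tau}} e_{j,w}\ =\ \sum_{\substack{u\in S_i^b\\ u|_{V_i\cap V_j}=\tau}} e_{i,u}\qquad(\tau\in\bZ_2^{V_i\cap V_j}).
\]
Writing $Q_i(\tau),Q_j(\tau)$ for the two sides, the consistency relations give $Q_i(\tau)=Q_i(\tau)Q_j(\tau)$ and $Q_j(\tau)=Q_j(\tau)Q_i(\tau)$; since both are projections, taking adjoints forces $Q_i(\tau)=Q_j(\tau)$, so the identity holds already in $\cA(syncBCS(A,b))$.

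The third map $\Xi$ is where the hereditary quotient is indispensable. I would send $e_{j,y}\mapsto\sum_i f_{i,(j,y)}$. Orthogonality and consistency for the images are immediate from the homomorphism relations (distinct solutions in a common block are adjacent in $G_{A,b}$, hence non-adjacent in $\overline{G_{A,b}}$), and these already hold in $\cA(Hom(K_m,\overline{G_{A,b}}))$. The only relation that fails in the full algebra is $\sum_{y\in S_j^b}\big(\sum_i f_{i,(j,y)}\big)=1$. Setting $P_{i,j}=\sum_{y}f_{i,(j,y)}$, the homomorphism relations make $P$ an $m\times m$ matrix of projections with orthogonal rows, orthogonal columns, and unit row sums; hence each column sum $R_j=\sum_i P_{i,j}$ is a projection with $\sum_j R_j=m\cdot 1$, whence $\sum_j(1-R_j)^*(1-R_j)=\sum_j(1-R_j)=0$. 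In the hereditary quotient this forces every $R_j=1$, i.e.\ $P$ becomes a genuine magic unitary and the missing relation holds; so $\Xi$ is well defined into $\cA_{hered}(Hom(K_m,\overline{G_{A,b}}))$ and descends to the hereditary quotients.

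The main obstacle is conceptual rather than computational: it is exactly the passage from an ``injective quantum function'' to a full magic unitary in the construction of $\Xi$. This step is false over a general $*$-algebra and is precisely why the theorem must be phrased for \emph{hereditary} (rather than plain) $*$-equivalence; isolating and proving the marginal-consistency identity for $\Phi$ is then the only other nontrivial computation, and everything else is bookkeeping. Degenerate cases (for instance some $S_i^b=\emptyset$, which makes $\cA(syncBCS(A,b))$ zero and $G_{A,b}$, $G_{A,0}$ non-isomorphic) are handled automatically, since a unital $*$-homomorphism from a nonzero algebra into the zero algebra would force $1=0$ in the source.
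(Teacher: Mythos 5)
Your three homomorphisms $\Omega$, $\Phi$, $\Xi$ are correct and are, up to notation, exactly the maps the paper uses to prove that $syncBCS(A,b)$, $Iso(G_{A,b},G_{A,0})$ and $Hom(K_m,\overline{G_{A,b}})$ are hereditarily $\ast$-equivalent. Your verification of $\Phi$ via the marginal-consistency identity $Q_i(\tau)=Q_j(\tau)$ is a legitimate repackaging of the paper's case analysis of the relation-preserving products, and your treatment of $\Xi$ (the column sums $R_j$ are projections with $\sum_j(1-R_j)=0$, hence $R_j=1$ in the hereditary quotient) is the paper's argument essentially verbatim.

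The gap is that what you have proved is not the corollary. The cycle of unital $\ast$-homomorphisms establishes the hereditary $\ast$-equivalence of the three game algebras, but the corollary asserts that $\cA_{hered}(Hom(K_m,\overline{G_{A,b}}))\ne(0)$ is equivalent to the existence of perfect C$^\ast$- and $qc$-strategies, and that does \emph{not} follow from hereditary $\ast$-equivalence alone: for a generic homomorphism game, $\cA(\cG)\ne 0$ need not imply the existence of any C$^\ast$-representation at all (the paper's example $\cA(Hom(K_5,K_4))$ makes exactly this point). Two further ingredients are needed and never appear in your write-up. First, the transfer principle: hereditarily $\ast$-equivalent games have perfect $t$-strategies simultaneously for $t\in\{qc,C^*\}$, because a $\ast$-representation into a C$^\ast$-algebra (with trace) factors through the hereditary quotient and can be pulled back along the cycle; this also supplies the easy implications $(3)\Rightarrow(2)\Rightarrow(1)$. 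Second, and essentially, \Cref{th.cls-grph}: because one of the three games is a graph \emph{isomorphism} game, $\cA(Iso(G_{A,b},G_{A,0}))\ne 0$ already implies that $Iso(G_{A,b},G_{A,0})$ has a perfect $qc$-strategy --- this is the bigalois-extension theorem of \Cref{se.bigal} and is the only reason $(1)\Rightarrow(3)$ holds. Concretely: if $\cA_{hered}(Hom(K_m,\overline{G_{A,b}}))\ne(0)$, then since $\Xi$ and $\Phi$ are unital, $\cA(Iso(G_{A,b},G_{A,0}))\ne 0$; by \Cref{th.cls-grph} there is a unital $\ast$-homomorphism $\pi$ from $\cA(Iso(G_{A,b},G_{A,0}))$ into a C$^\ast$-algebra with faithful trace; and $\pi\circ\Omega$ then witnesses a perfect $qc$-strategy for $Hom(K_m,\overline{G_{A,b}})$. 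Without this last step your argument never touches conditions $(2)$ and $(3)$.
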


The proof of the theorem borrows some ideas from the proof of \cite[Theorem~5.4]{KPS18}.
\\

\begin{proof}
We begin by constructing a unital $\ast$-homomorphism from $\cl A(Iso(G_{A,b}, G_{A,0}))$ to $\cl A(syncBCS(A,b))$. By our earlier remarks, this $\ast$-homomorphism will induce a unital $\ast$-homomorphism from 

$\cA_{hered}(Iso(G_{A,b}, G_{A,0}))$ to $\cA_{hered}(syncBCS(A,b))$.

The algebra $\cl A(syncBCS(A,b))$ is generated by projections $e_{i, x}$ for $i = 1, \ldots, m$ and $x \in \mathbb Z_2^n$ satisfying $\sum_x e_{i, x} = 1$ for all $i$, $e_{i,x}e_{i,y}=0$ if $x \ne y$. Moreover, given input $i$, if $x \notin S_i^b$, then they lose for all $(j,y)$, from this it follows that  $e_{i, x} = 0$ if $x \notin S_i^b$. Also, if $x \in S_i^b$ and $y \in S_j^b$, then $e_{i, x} e_{j, y} = 0$ if there is a $k \in V_i \cap V_j$ with $x_k \neq y_k$.

Let $S_i^0 \subseteq \mathbb Z_2^n$ denote the set of solutions to the $i$th equation of the linear system $Ax = 0$ and let $S_i^b \subseteq \mathbb Z_2^n$ denote the set of solutions to the $i$th equation of the linear system $Ax = b$.   Note that if $y \in S_i^0$ and $x \in S_i^b$, then $x+y \in S_i^b$.  Moreover, for $x \in S_i^b$, the map $S_i^0 \rightarrow S_i^b$ given by $y \mapsto x+y$ is a bijection.

The algebra $\cl A(Iso(G_{A,b}, G_{A,0}))$ is generated by projections  $e_{(i,x),(j,y)}$ with $(i,x) \in V(G_{A,b})$ and $(j,y) \in V(G_{A,0})$, satisfying certain relations.
For $(i, x) \in V(G_{A, b})$ and $(j, y) \in V(G_{A, 0})$, define
\[ q_{(i, x), (j, y)} = \begin{cases} e_{i, x+y} & i = j \\ 0 & i \neq j \end{cases} \]
and note that each $q_{(i, x), (j, y)}$ is a projection.  For $(i, x) \in V(G_{A, b})$, we have
\[ \sum_{(j, y) \in V(G_{A, 0})} q_{(i, x), (j, y)} = \sum_{j=1}^n \sum_{y \in S_j^0} q_{(i, x), (j, y)} = \sum_{y \in S_i^0} e_{i, x+y} = \sum_{z \in S_i^b} e_{i, z} = 1. \]
A similar computation shows that for all $(j, y) \in V(G_{A, 0})$, we have
\[ \sum_{(i, x) \in V(G_{A, b})} q_{(i, x), (j, y)} = 1. \]

We need to show that for all $(i, x), (i', x') \in V(G_{A, b})$ and $(j, y), (j', y') \in V(G_{A, 0})$, the implication
\[ q_{(i,x),(j,y)} q_{(i',x'),(j',y')} \neq 0 \quad \Rightarrow \quad \operatorname{rel}((i,x), (i',x')) = \operatorname{rel}((j,y),(j',y')) \]
holds.  To this end, suppose $q_{(i,x),(j,y)} q_{(i',x'),(j',y')} \neq 0$.  Then $i = j$, $i' = j'$, and $e_{i, x+y} e_{i', x'+y'} \neq 0$.  We consider several cases.

Suppose first $i = i'$.  Then we have $x+y = x'+y'$.  If $x = x'$, then $y = y'$ and we have both $(i, x) = (i', x')$ and $(j, y) = (j', y')$ so the right hand side of the implication holds in this case.  Conversely, if $x \neq x'$ and $y \neq y'$, then $(i, x) \neq (i', x')$ and $(j, y) \neq (j', y')$.  Note also that since $i = i'$, $x$ and $x'$ are necessarily inconsistent solutions so that $(i, x)$ and $(i', x')$ are adjacent.  Similar reasoning shows $(j, y)$ and $(j', y')$ are adjacent.  Hence the right hand side of the implication holds.

Now assume $i \neq i'$ so that, in particular, $(i, x) \neq (i', x')$.  If $(i, x)$ and $(i', x')$ are adjacent, there is a $k \in V_i \cap V_{i'}$ such that $x_k \neq x'_k$.  On the other hand, as $e_{i, x+y} e_{i', x'+y'} \neq 0$, we know $x_k +y_k = x'_k +y'_k$.  Therefore, $y_k \neq y'_k$ so that $(i, y)$ and $(i', y')$ are adjacent.  Finally, suppose $(i, x)$ and $(i', x')$ are not adjacent.  Then $x_k = x'_k$ for all $i \in V_i \cap V_{i'}$.  Again since $e_{i, x+y} e_{i', x'+y'} \neq 0$, we also know $x_k+ y_k = x'_k +y'_k$ for all $k \in V_i \cap V_{i'}$ and therefore $y_k = y'_k$ for all $k \in V_i \cap V_{i'}$ so that $(j, y)$ and $(j', y')$ are not adjacent.  This covers all cases.

Now, by the fact that $\cl A(Iso(G_{A,b}, G_{A,0}))$ is the universal $\ast$-algebra with projections satisfying these properties, we have that the map  $e_{(i,x),(j,y)} \to q_{(i, x), (j, y)} \in \cl A(syncBCS(A,b))$ defines the desired unital $\ast$-homomorphism.

Now we prove that there is a unital $\ast$-homomorphism from $\cl A(Hom(K_m, \overline{G_{A,b}}))$ to $\cl A(Iso(G_{A,b}, G_{A,0}))$.
Note that for any graph $X$ we have that $\cl A(Hom(K_m, X))$ is generated by projections, $e_{i,x}, \, 1 \le i \le m, \, x \in V(X)$ satisfying  $\sum_x e_{i,x} =1, \, e_{i,x}e_{i,y} =0, \, x \ne y$ and $i \ne j, (x,y) \notin E(X) \implies e_{i,x} e_{j,y} =0.$ Since we are interested in $Hom(K_m, \overline{X})$, this last relation changes to $i \ne j, (x,y) \in E(X) \implies e_{i,x}e_{j,y}=0$. For each $(i,x) \in V(G_{A,b})$ and $1 \le j \le m$ we define an element $p_{j, (i,x)} \in \cl A(Iso( G_{A,b}, G_{A,0}))$ by setting $p_{j, (i,x)} = e_{(i,x), (j,0)}$.
We have that  $\sum_{(i,x) \in V(G_{A,b})} p_{j, (i,x)} =1$ and $p_{j, (i,x)} p_{j, (i^{\prime}, x^{\prime})} =0$ when $(i,x) \ne (i^{\prime}, x^{\prime})$ by the magic permutation relations.

Finally, if $j \ne l$ and $((i,x), (i^{\prime}, x^{\prime})) \in E(G_{A,b})$ then $rel((j,0),(l,0)) = +1$ while $rel((i,x), (i^{\prime}, x^{\prime})) = -1$.  Hence,
\[ p_{j,(i,x)}p_{l,(i^{\prime}, x^{\prime})} = e_{(i,x), (j,0)} e_{(i^{\prime},x^{\prime}), (l,0)} = 0.\]
This shows that the map from $\cl A(Hom(K_m, \overline{G_{A,b}}))$ to $\cl A(Iso(G_{A,b}, G_{A,0}))$ given by $e_{j, (i,x)} \to p_{j, (i,x)}$ defines a unital $\ast$-homomorphism and again this will induce a unital *-homomorphism between their hereditary quotients.

Finally, we must exhibit a unital $\ast$-homomorphism from $\cl A(syncBCS(A,b))$ into $\cl A_{hered}(Hom(K_m, \overline{G_{A,b})}).$

This latter algebra is generated by projections $e_{i, (j,x)}$, $1 \le i \le m$, $(j,x) \in V(G_{A,b})$, i.e., $x \in S_j^b$. These satisfy $\sum_{j,x} e_{i, (j,x)} =1$ for all $i$, and $e_{i, (j,x)}e_{i,(k,y)} =0$ whenever $(j,x) \ne (k,y)$. Moreover, since $(i,l)$ is an edge in $K_m$ whenever $i \neq l$, we have that when $i \ne l$ and $((j,x),(k,y))$ is not an edge in $\overline{G_{A,b}}$ (meaning that $x \in S_j^b$ and $y \in S_k^b$ are inconsistent solutions), then  $e_{i, (j,x)} e_{l, (k,y)} =0$.

Note that if $x,y \in S_i^b$ and $x \ne y$, then $e_{k,(i,x)}e_{k,(i,y)}=0$. If $k \ne j$, then $k$ and $j$ are connected by an edge in $K_m$, while $(i,x)$ and $(i,y)$ are not connected by an edge in $\overline{G_{A,b}}$, so that $e_{k,(i,x)}e_{j,(i,y)}=0$.  From these facts, it follows that
\[p_i := \sum_{k=1}^m \sum_{x \in S_i^b} e_{k,(i,x)}\]
is a self-adjoint idempotent. Set $q_i = 1- p_i = q_i^2$. Then
\[ \sum_{k=1}^m q_i^2 = \sum_{k=1}^m (1 - p_i) = m \cdot 1 - \sum_{k=1}^m \sum_{i=1}^m \sum_{x \in S_i^b} e_{k,(i,x)} = 0,\]
using the fact that $\sum_{j,x} e_{i,(j,x)}=1$ for all $i$. Thus, we have that
\[q_i=0 \text{ and } p_i =1, \, \forall 1 \le i \le m.\]

For $x \in S_i^b$, set
\[ f_{i,x} = \sum_{k=1}^m e_{k,(i,x)}.\]
Then $f_{i,x}= f_{i,x}^*$ and for $k \ne j$, we have that $k,j$ are connected by an edge in $K_m$, while $(i,x)$ is not connected to $(i,x)$ by an edge; hence,
\[ f_{i,x}^2 = \sum_{k,j=1}^m e_{k,(i,x)}e_{j,(i,x)} = \sum_{k=1}^m e_{k,(i,x)} = f_{i,x}, \]
so that $f_{i,x}$ is a self-adjoint idempotent.
Also, for $x,y \in S_i^b$ with $x \ne y$, we have that
\[ f_{i,x} f_{i,y} = \sum_{j,k=1}^m e_{k,(i,x)} e_{j,(i,y)} = \sum_{k=1}^m e_{k,(i,x)}e_{k,(i,y)} =0,\]
and
\[ \sum_{x \in S_i^b} f_{i,x} = \sum_{k=1}^m \sum_{x \in S_i^b} e_{k,(i,x)} = p_i =1.\]
Thus, for each $i$, $\{ f_{i,x}: x \in S_i^b \}$ is a set of self-adjoint idempotents whose sum is $1$.

Finally, if $(i,x)$ and $(j,y)$ are inconsistent solutions, then
$$ f_{i,x} f_{j,y} = \sum_{k,h=1}^m e_{k,(i,x)}e_{h,(j,y)}. $$
When $h=k$, each of these products is 0. For $h \ne k$, we have that $h$ and $k$ are connected by an edge in $K_m$ and so the product will be 0, since $x$ and $y$ being inconsistent solutions implies that $(i,x)$ and $(j,y)$ are not connected by an edge in $\overline{G_{A,b}}$.

Thus, the set $\{ f_{i,x} \}$ satisfies the relations on the generators of the free algebra $\cA(syncBCS(A,b))$ and they induce a unital $*$-homomorphism from $\cA(syncBCS(A,b))$ into $\cA_{hered}(Hom(K_m, \overline{G_{A,b}}))$, from which the result follows.
\end{proof}

\begin{remark} In an earlier version of this paper we claimed that the three games were $*$-equivalent.  We now know that this is incorrect. In fact, it is possible to construct linear systems for which
\[ \cA(syncBCS(A,b)) = \cA(Iso(G_{A,b}, G_{A,0})) = (0),\]
while  $\cA(Hom(K_m, \overline{G_{A,b}})) \ne (0).$
It would be interesting to know whether or not $syncBCS(A,b)$ and $Iso(G_{A,b}, G_{A,0})$ are $\ast$-equivalent.
\end{remark}

\subsection*{Acknowledgements}

MB, KE and XS were supported by NSF grant DMS-1700267.  AC was supported by NSF grant DMS-1801011. VP and SH were supported by NSERC. MW was supported by the National Science Centre (NCN) grant 2016/21/N/ST1/02499, European Research Council Consolidator Grant
614195 RIGIDITY and by long term structural funding - Methusalem grant of the Flemish Government. Parts of this work were completed during the 2017 CMS Winter Meeting at the University of Waterloo, a research visit of MW to Texas A\&M University (supported by National Science Centre (NCN) scholarship 2017/24/T/ST1/00391) and during the Fall 2018 Workshop on ``Cohomology of quantum groups and quantum automorphism groups of finite graphs'' at the University of Saabr\"ucken.  The authors thank the Universities and their respective hosts for the very  stimulating research environments.





\bibliographystyle{plain}
\addcontentsline{toc}{section}{References}

\def\polhk#1{\setbox0=\hbox{#1}{\ooalign{\hidewidth
  \lower1.5ex\hbox{`}\hidewidth\crcr\unhbox0}}}


\end{document}